\newtheorem{result}{Result}
\newtheorem{lemma}{Lemma}
\newcommand{\braket}[1]{\left\langle #1 \right\rangle}
\DeclareMathOperator{\tr}{Tr}
\DeclareMathOperator{\Tr}{Tr}
\newcommand{\norm}[1]{\left\lVert#1\right\rVert}
\newcommand{\LL}{\mathcal{L}}
\newcommand{\hi}{\mathcal{H}} 
\newcommand{\M}{\mathsf{M}}
\DeclareSymbolFontAlphabet{\mathbb}{AMSb}
\DeclareSymbolFontAlphabet{\mathbbl}{bbold}
\DeclareMathSymbol{\bbepsilon}{\mathord}{bbold}{"0F}
\newcommand{\1}{\mathds{1}}
\newcommand{\eff} {\mathscr{E}}
\newcommand{\ang}[1]{\left\langle #1 \right\rangle}
\newcommand{\dbtilde}[1]{\accentset{\approx}{#1}}
\newcommand{\oset}[3][0ex]{%
	\mathrel{\mathop{#3}\limits^{
			\vbox to#1{\kern-2\ex@
				\hbox{$\scriptstyle#2$}\vss}}}}
\begin{document}

\title{Multi-object operational tasks for measurement incompatibility}

\author{Andr\'es F. Ducuara$^{1,2}$} 
\email[]{andres.ducuara@yukawa.kyoto-u.ac.jp}

\author{Ryo Takakura$^{3}$}
\email[]{takakura.ryo.qiqb@osaka-u.ac.jp}

\author{Fernando J. Hernandez$^{4}$}
\email[]{fhernandezortega@correo.unicordoba.edu.co}

\author{Cristian E. Susa$^{4}$}
\email[]{cristiansusa@correo.unicordoba.edu.co}

\affiliation{$^{1}$Yukawa Institute for Theoretical Physics, Kyoto University, Kitashirakawa Oiwakecho, Sakyo-ku, Kyoto 606-8502, Japan
    \looseness=-1}

\affiliation{$^{2}$Center for Gravitational Physics and Quantum Information, Yukawa Institute for Theoretical Physics, Kyoto University
    \looseness=-1} 

\affiliation{$^{3}$Center for Quantum Information and Quantum Biology, Osaka University, Japan
	\looseness=-1}

\affiliation{$^{4}$Department of Physics and Electronics, University of C\'ordoba, 230002 Monter\'ia, Colombia\looseness=-1}

\date{\today}
\begin{abstract}
    We introduce \textit{multi-object} operational tasks for measurement incompatibility in the form of multi-object quantum subchannel discrimination and exclusion games with prior information, where a player can simultaneously harness the resources contained within both a quantum state and a set of measurements. We show that any fully or partially resourceful pair of objects is useful for a suitably chosen multi-object subchannel discrimination and exclusion game with prior information. The advantage provided by a fully or partially resourceful object against all possible fully free objects in such a game can be quantified in a \textit{multiplicative} manner by the resource quantifiers of generalised robustness and weight of resource for discrimination and exclusion games, respectively. These results hold for arbitrary properties of quantum states as well as for arbitrary properties of sets of measurements closed under classical pre and post-processing and, consequently, include measurement incompatibility as a particular case. We furthermore show that these results are not exclusive to quantum theory, but that can also be extended to the realm of general probabilistic theories.
\end{abstract} 
\pacs{}
\maketitle

\begin{textblock*}{3cm}(17cm,-10.5cm)
  \footnotesize YITP-24-176
\end{textblock*}
\vspace{-1cm}
\section{Introduction}
	
The theoretical framework of quantum resource theories (QRTs) \cite{RT_review} has consolidated itself during the past two decades as a fruitful approach to quantum information where quantum properties of physical systems are harnessed for the benefit of operational tasks. A quantum resource theory can be specified by first defining a quantum \textit{object} of interest, followed by one property of such objects to be exploited as a \textit{resource} \cite{RT_review}. The two most explored QRTs are arguably those of quantum states \cite{QRTE1, QRTE2} and measurements \cite{RT_measurements0, RT_measurements1, RT_measurements2}. QRTs of states explore desirable properties such as: entanglement \cite{QRTE2}, coherence \cite{RT_coherence}, asymmetry \cite{RT_coherence}, superposition \cite{RT_superposition}, purity \cite{RoP}, magic \cite{RT_magic}, amongst many others \cite{RT_nongaussianity, RT_nonmarkovianity1, RT_nonmarkovianity2, namit, RT_thermodynamics, RT_RF}. Similarly, QRTs of measurements explore properties such as: entanglement \cite{RT_measurements1}, coherence \cite{RT_measurements1}, informativeness \cite{RoM}, and non-projective simulability \cite{RT_PS}. There exist however additional desirable resources contained within more general types of objects such as: sets of measurements \cite{invitation_2016, review_incompatibility}, behaviours or boxes \cite{RT_nonlocality, RT_noncontextuality}, steering assemblages \cite{RT_steering}, teleportation assemblages \cite{RoT}, and channels \cite{RT_channels1, RT_channels2, Plenio, Pirandola}, amongst many others \cite{RT_review, Schmid1, Schmid2, Schmid3, Schmid4}.  In this work we focus on QRTs of \textit{sets of measurements}, with the resource of \textit{measurement incompatibility} in particular. Measurement incompatibility \cite{invitation_2016, review_incompatibility} is a property lying at the foundations of quantum mechanics which acts as a parent resource for the properties of Bell-nonlocality \cite{review_nonlocality_2014} and EPR-steering \cite{review_steering_2017, review_steering_2020} and, as a consequence of this, acts as a prerequisite for practical applications relying on fully device-independent as well as semi device-independent protocols \cite{VS_nonlocality, DI0, DI1, DI2, DI3}.

A broad research area of interest within QRTs concerns with the development of operational tasks harnessing the resources contained within quantum objects. In the particular case of the QRT of incompatible sets of measurements, it has been found that incompatibility serves as fuel for \textit{single-object} quantum state discrimination tasks \cite{paul_ivan_dani_2019, Teiko_2019, Uola_2019, RoI_local, QP1, QP2}. It has however also been pointed out, that there exist operational tasks that can simultaneously exploit the resources contained within \textit{multiple} objects, or \textit{multi-object} operational tasks for short \cite{MO1, MO2, MR1, MR2}. 
These \textit{multi-object} tasks are currently specifically tailored to exploit the resources contained within states and individual measurements and therefore, they currently do not accommodate for desirable properties of \textit{sets} of measurements such as measurement incompatibility. In this work we address this shortcoming by introducing \textit{multi-object} operational tasks for measurement incompatibility in the form of \textit{multi-object} quantum subchannel discrimination and exclusion games with prior information. We characterise the advantage provided by incompatible sets of measurements when playing these tasks in terms of resource quantifiers, and we furthermore extend these results to the framework of general probabilistic theories \cite{hardy2001quantum, PhysRevLett.99.240501, barnum2012teleportation, PhysRevA.75.032304, PhysRevA.81.062348, PhysRevA.84.012311, Masanes_Muller_2011}.

This document is organised as follows. We first address measurement incompatibility and classical pre and post-processing. We then introduce multi-object operational tasks for states and sets of measurements. We then introduce multi-object operational tasks in the form of discrimination games, exclusion games, and furthermore extend these results to the realm of general probabilistic theories. We end up with conclusions and perspectives.
	
\section{Measurement incompatibility and classical pre and post-processing}
\label{s:s2}

Let $\LL(\hi)$, $\LL_S(\hi)$, and $\LL_S^+(\hi)$, be the sets of all bounded, self-adjoint, and positive semi-definite operators, respectively, on a finite-dimensional complex Hilbert space $\hi$. A quantum {state}, or density operator, is a trace one positive semi-definite operator. The set of all states is denoted as $\mathcal{D}(\hi) \coloneqq \{\rho \in \LL_S^+(\hi) \mid \tr[\rho] = 1\}$. A quantum {measurement} or normalised positive operator-valued measure (POVM) is a set of positive semi-definite operators $\{M_a\}_{a=1}^l$ satisfying $\sum_{a\in A} M_{a} = \mathds{1}$, with $A = \{1,\ldots,l\}$, and $\1$ the identity operator on $\hi$. More formally, a measurement $\mathsf{M}$ can be stated to be a map from a $\sigma$-algebra on an outcome set $A = \{1,\ldots,l\}$ to $\LL_S^+(\hi)$, and so $\M = \{M_a\}_{a=1}^l$. The operator $M_a=\M(\{a\}) \in \LL_S^+(\hi)$ is denoted as the {POVM element} or {effect} corresponding to the specific outcome $a \in A$. A \textit{POVM set} is a set of POVMs $\{\mathsf{M}_x\}_{x=1}^\kappa$, with each POVM having the same outcome space $A$ and so, for convenience, we write $\{\mathsf{M}_x\}_{x=1}^\kappa \equiv \mathbb{M}_{A|X}=\{M_{a|x}\}_{a,x}$, where $M_{a|x}=\M_x(\{a\})$, and thus $\sum_{a\in A} M_{a|x} = \mathds{1}$, $\forall x\in\{1,\ldots, \kappa\}$. We will use the further simplified notation $\{M_{a|x}\} \equiv \{M_{a|x}\}_{a,x}$.

Let us also invoke the following notation from probability theory. Let ($X, G,...$) be random variables  on a finite alphabet $\mathcal{X}$, and the probability mass function (PMF) of a random variable $X$ represented as $p_X$ satisfying: $p_X(x)\geq 0$, $\forall x \in \mathcal{X}$, and $\sum_{x \in \mathcal{X}} p_X(x) = 1$. For simplicity, we address $p_X(x)$ as $p(x)$ when evaluating, and omit the alphabet when summing. Joint and conditional PMFs are denoted as $p_{XG}$, $p_{G|X}$, respectively. With this notation in place, let us now address the property of \textit{incompatibility} of POVM sets. A POVM set $\mathbb{M}_{A|X} = \{M_{a|x}\}$ is {compatible} whenever there exist a parent POVM $\mathbb{G}=\{G_\lambda\}$ and a PMF $p_{A|X\Lambda}$ such that:
\begin{align}
    M_{a|x}
    =
    \sum_{\lambda}
    p(a|x,\lambda)\,
    G_\lambda,
    \hspace{0.5cm}
    \forall a,x,
    \label{eq:comp}
\end{align}
and it is called incompatible otherwise. This property of POVM sets is commonly also refereed to as measurement compatibility or joint measurability. For simplicity, in this work we stick to measurement (in)compatibility or simply (in)compatibility. It will be useful to introduce the operation of \textit{simulability} of POVM sets. We say that a POVM set $\mathbb{N}_{B|Y} = \{N_{b|y}\}$, $b\in \{1,...,m\}$, $y\in \{1,...,\tau\}$ is {simulable by} the POVM set $\mathbb{M}_{A|X} = \{M_{a|x}\}$, $a\in \{1,...,l\}$, $x\in \{1,...,\kappa\}$ whenever there exist a triplet of PMFs $q_Z$, $ r_{X|YZ}$, and $s_{B|AYZ}$  such that  \cite{simulability}:
\begin{align}
    N_{b|y}
    =
    \sum_{a,x,z}
    s(b|a,y,z)\,
    M_{a|x}\,
    r(x|y,z)\,
    q(z)
    ,
    \hspace{0.5cm}
    \forall b,y.
    \label{eq:CP}
\end{align}
Simulability of POVM sets can be thought of as composed of a classical pre-processing (CPreP) stage, represented by the PMFs $q_Z$ and $r_{X|YZ}$, and a classical post-processing (CPosP) stage, represented by the PMF $s_{B|AYZ}$. The triplet of PMFs allowing the simulation are going to be refereed to as the set of \textit{strategies} $\mathcal{S} = \{q_Z, r_{X|YZ}, s_{B|AYZ}\}$. In \autoref{fig:fig1} we illustrate the simulability of POVM sets. One can check that the simulability of POVM sets defines a partial order for POVM sets and therefore, this motivates the notation $\mathbb{N}_{B|Y} \preceq \mathbb{M}_{A|X}$, meaning that $\mathbb{N}_{B|Y}$ is simulable by $\mathbb{M}_{A|X}$. It is straightforward to check that CPreP and CPosP are operations that take {compatible} POVM sets into {compatible} POVM sets and therefore, in this sense, we say measurement compatibility is closed under simulation of POVM sets. 

\begin{figure}
    \centering
    \includegraphics[scale=1.1]{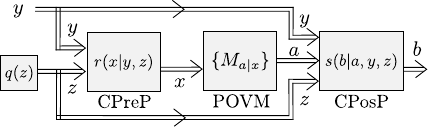}
    \vspace{-0.2cm}
    \caption{
        Classical pre-processing (CPreP) and post-processing (CPosP) of the POVM set $\mathbb{M}_{A|X} = \{M_{a|x}\}$ into a new POVM set $\mathbb{N}_{B|Y} = \{N_{b|y}\}$ with each $N_{b|y}$ as in \eqref{eq:CP}. A random variable $z$ is generated according to the PMF $q_Z$. The pre-processing stage consists on choosing a classical input $x$ according to a PMF which depends on the classical input $y$ and $z$ as $r_{X|YZ}$. The post-processing consists on processing the measurement outcome $a$ according to a PMF which depends on $y$ and $z$ as $s_{B|AYZ}$. We obtain the pre and post-processed set of measurements $\mathbb{N}_{B|Y} = \{N_{b|y}\}$.
    }
    \label{fig:fig1}
\end{figure}
We now move on to introducing the concept of an ensemble of instruments. A quantum {channel} is a map $\phi\colon \LL(\hi)\to \LL(\hi)$ that is completely positive (CP) and trace-preserving (TP). A quantum {subchannel} is a map $\phi\colon \LL(\hi)\to \LL(\hi)$ that is completely positive (CP) and trace-nonincreasing (TNI).  A quantum instrument is a set of subchannels $\Phi = \{\phi_b\}_{b=1}^m$ such that $\sum_{b=1}^m\phi_{b}$ is a channel, $\phi_b = \Phi(\{b\})$, $\forall b\in B$. Instruments are mathematical representations of quantum measurement processes. Given an instrument $\Phi = \{ \phi_b \}_{b=1}^m$, we can construct a measurement model such that the output probabilities and post-measurement states for a given target state $\rho$ are respectively $\{\tr[\phi_b(\rho)]\}_b$ and $\{\phi_b(\rho)/\tr[\phi_b(\rho)]\}_b$ \cite{Busch_quantummeasurement, heinosaari_ziman_2011}. An {instrument set} is a set of instruments $\{\Phi_y\}_{y=1}^\tau$ with each instrument having the same outcome set $B$. Similarly to the case of POVM sets, we can alternatively write an instrument set as $\Phi_{B|Y} = \{\phi_{b|y}\}_{b,y}$, where $\phi_{b|y} = \Phi_y(\{b\})$, and thus $\sum_{b=1}^m\phi_{b}$ is a channel $\forall y\in\{1,\ldots,\tau\}$, $B=\{1,\ldots\,m\}$. 
Finally, an {ensemble of instruments} is a pair $(p_Y,\Phi_{B|Y})$ with $\Phi_{B|Y}$ an instrument set and $p_Y$ a PMF. We now introduce multi-object operational tasks which are meant to be played with two quantum objects, a quantum state and a POVM set.	
	
\section{Multi-object operational tasks for states and POVM sets}
\label{s:s3}

We now introduce the operational task of multi-object quantum subchannel discrimination game with prior information (QScD-PI), as a generalisation of the single-object state discrimination tasks with prior information first introduced in \cite{Teiko_2018}. We illustrate this multi-object game in \autoref{fig:fig2} and describe it as follows. The game is played by two parties, Alice (the player) and Bob (the referee). Alice is in possession of two quantum objects: a state $\rho$ and a set of $\kappa$ POVMs $\{\M_x\}_{x=1}^{\kappa} \equiv \mathbb{M}_{A|X} =\{M_{a|x}\}$. Here all the POVMs are considered with the same outcome set $A=\{1,\ldots,l\}$ and $M_{a|x}\in\mathcal {L}_S^+(\hi)$, $\forall a,x$. Bob on the other hand is in possession of a set of $\tau$ instruments $\{\Phi_y\}_{y=1}^\tau \equiv \Phi_{B|Y} = \{\phi_{b|y}\}$. All instruments are considered with the same outcome set $B=\{1,\ldots,m\}$, and $\phi_{b|y} \colon \mathcal {L}(\hi)\to\mathcal {L}(\hi)$, $\forall b,y$. The first step of the game is for Alice (player) to prepare a quantum state $\rho$ and send it to Bob (referee). Bob  then proceeds to implement one of the instruments $\Phi_{B|Y}$, say $\Phi_y=\{\phi_{b|y}\}_y$, according to the PMF $q_Y$, on the set $Y=\{1,\ldots,\tau\}$, for which we assume that $q(y) \neq 0$, $\forall y\in Y$. Bob then conducts the measurement associated to $\Phi_y$ on $\rho$, observing an outcome $b \in B$ and a post-measurement state $\rho_{b|y} \coloneqq \phi_{b|y}(\rho) / \tr[\phi_{b|y}(\rho)]$. After Bob's measurement, Alice is informed of Bob's choice $y$ (prior information), and is also sent the state $\rho_{b|y}$. Alice's goal is to correctly guess the label $b$ of the sub-channel $\phi_{b|y}$. In order to do this, Alice first generates a random variable $z$ according to a PMF $q_Z$, and uses this to determine the choice of measurement $\M_x$ with a probability $r(x|y,z)$. Alice then proceeds to measure $\M_x=\{M_{a|x}\}$ on $\rho_{b|y}$, and observes an outcome $a$. Alice is allowed to classically post-process this measurement outcome to generate a guess $g\in B$ of $b$, according to $s(g|a,y,z)$. Finally, Alice sends her guess $g$ to Bob, and wins the game whenever $g=b$. We address this task as \textit{multi-object} quantum subchannel discrimination with prior information (QScD-PI). 
The maximum probability of {success} in such a task is given by:
\begin{align}
    P^{\rm D}_{\rm succ}
    (
    &
    p_Y,
    \Phi_{B|Y},
    \rho,
    \mathbb{M}_{A|X}
    )
    \coloneqq
    \max_{\mathcal{S}}
    \hspace{-0.2cm}
    \sum_{g,a,x,\mu,b,y}
    \hspace{-0.3cm}
    \delta_{g,b}\,
    s(g|a,y,z)
    \nonumber
    \\
    &
    \times
    \tr[
    M_{a|x}
    \phi_{b|y}
    (\rho)
    ]\,
    r(x|y,z)\,
    q(z)\,
    p(y),\label{eq:Psuc_Q}
\end{align}
with the maximisation over all possible strategies $\mathcal{S} = \{q_Z, r_{X|YZ}, s_{G|AYZ}\}$. In a multi-object quantum subchannel \emph{exclusion} game with prior information (QScE-PI) on the other hand, the goal is for Alice (player) to output a guess $g \in \{1,...,m\}$ for a subchannel that did {not} take place, that is, Alice succeeds at the game if $g \neq b$ and fails when $g=b$. The minimum probability of {error} in quantum subchannel exclusion with prior information is:
\begin{align}
    P^{\rm E}_{\rm err}
    (
    &
    p_Y,
    \Phi_{B|Y},
    \rho,
    \mathbb{M}_{A|X}
    )
    \coloneqq
    \min_{\mathcal{S}}
    \hspace{-0.2cm}
    \sum_{g,a,x,\mu,b,y}
    \hspace{-0.3cm}
    \delta_{g,b}\,
    s(g|a,y,z)
    \nonumber
    \\
    &
    \times
    \tr[
    M_{a|x}
    \phi_{b|y}
    (\rho)
    ]\,
    r(x|y,z)\,
    q(z)\,
    p(y),\label{eq:Perr_Q}
\end{align}
with the minimisation over all possible strategies $\mathcal{S} = \{q_Z, r_{X|YZ}, s_{G|AYZ}\}$. A multi-object quantum subchannel discrimination/exclusion game is specified by the \textit{ensemble of instruments} $\{p_Y,\Phi_{B|Y}\}$. A key point to remark here is that, the object of interest is now the state-POVM set pair $(\rho, \mathbb{M}_{A|X})$, as opposed to the POVM set alone, as it is the case in standard single-object state discrimination tasks \cite{Teiko_2018}. We now analyse these multi-object tasks from the point of view of resource theories.

\begin{figure}
    \centering
    \includegraphics[scale=0.97]{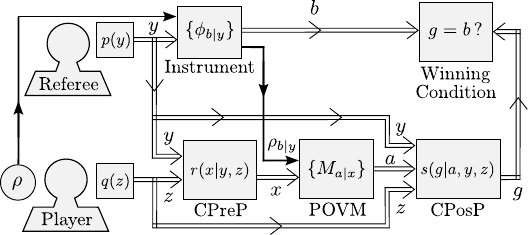}
    \caption{
        Multi-object quantum subchannel discrimination with prior information. Solid lines represent quantum information (states), and double lines represent classical information. The player is in possession of a fixed quantum state $\rho$ and a POVM set $\mathbb{M}_{A|X} = \{M_{a|x}\}$. The referee has an instrument set $\Phi_{B|Y}=\{\phi_{b|y}(\cdot)\}$. The player sends state $\rho$ to the referee, who then implements one of the instruments $\{\phi_{b|y}(\cdot)\}$, according to the PMF $p_Y$. The output of the instrument is the measurement outcome $b$ and a post-measured state $\rho_{b|y}$. The referee sends both state $\rho_{b|y}$ and label $y$ (prior information) back to the player. The goal of the game is for the player to correctly guess the output $b$. The player then proceeds to use both $y$ and $\rho_{b|y}$, together with his set of measurements $\mathbb{M}_{A|X} = \{M_{a|x}\}$, in order to generate a guess index $g$. The player is allowed to do pre and post-processing \eqref{eq:CP}. The player then sends the guess $g$ to the referee, who then checks the winning condition ($g=b$).
    }
    \label{fig:fig2}
\end{figure}
	
\section{Advantage provided by resourceful states and POVM sets}
\label{s:s4}

We start by addressing quantum resource theories (QRTs) of states and sets of measurements with arbitrary resources. In order to do this, we need some elements from conic programming \cite{GM_2012, NJ_2012}. In particular, we need the concept of a closed convex cone (CCC). First, a set $\mathcal{K} \subseteq \LL_S(\hi)$ is a \emph{cone} if $\lambda K\in \mathcal{K}$, $\forall K \in \mathcal{K}$ and $\forall \lambda \geq 0$. Second, a cone $\mathcal{K}$ is \emph{convex} when $K_1+K_2\in \mathcal{K}$, $\forall K_1, K_2\in\mathcal{K}$. Third, a \emph{closed} set here is considered as being closed under some operator topology such as the trace norm. We note that all operator topologies are equivalent in the finite dimensional case \cite{Bratteli1987, Schaefer1999}. 
Fourth, the \emph{dual} of a cone $\mathcal{K}$ is the set defined as $\mathcal{K}^\circ \coloneqq \{O \in \LL_S(\hi)| \braket{O,K}_{\mathrm{HS}} \geq 0, \forall K\in \mathcal{K}\}$, where $\braket{C,D}_{\mathrm{HS}}=\Tr[C^\dagger D]$ is the Hilbert-Schmidt inner product in $\LL(\hi)$.
For any cone $\mathcal{K}$ we have $(\mathcal{K}^\circ)^\circ = \overline{{\it conv}(\mathcal{K})}$ (the closure of the convex hull), and so for any CCC $\mathcal{K}$ we have $(\mathcal{K}^\circ)^\circ = \mathcal{K}$. We now consider a property of quantum states defining a CCC as $\mathcal{F}(\hi) \subseteq \LL_S^+(\hi)$. We will address the set of \textit{free states} as the set ${\rm F}\equiv {\rm F}(\hi) \coloneqq \{\rho \in \mathcal{F}(\hi)\mid \tr[\rho] = 1\}$. We say a state $\rho \notin {\rm F}$ is a \emph{resourceful} state and \emph{free (resourceless)} otherwise. Desirable properties of quantum states regarded as resources include the likes of entanglement, coherence, amongst others \cite{RT_review}. We now similarly consider a property of measurement sets defining a CCC and use it to introduce a \textit{free set of POVM sets} and denote it as $\mathbb{F}$. We say a POVM set $\mathbb{M}_{A|X} \notin \mathbb{F}$ is a \emph{resourceful} POVM set, and a \emph{free (resourceless)} POVM set otherwise. Desirable properties of POVM sets include measurement incompatibility as a particular case \cite{RT_review}. We now address resource quantifiers. Consider a QRT of a quantum object $O$ being either a quantum state or a POVM set. The generalised robustness of resource and the weight of resource of  $O$ are given by:
\begin{align}
    {\rm R_F}
    \left(
    O
    \right)
    &\coloneqq
    \begin{matrix}
        \text{\small \rm min}\\
        r \geq 0\\
        O^F \in {\rm F} \\
        O^G \\
    \end{matrix}
    \left\{ 
    \rule{0cm}{0.6cm} r\,\bigg| \, 
    O
    +
    rO^G
    =
    (1+r)
    O^F
    \right\},
    \label{eq:RoRs}\\
    {\rm W_F}
    \left(
    O
    \right)
    &\coloneqq 
    \begin{matrix}
        \text{\small \rm min}\\
        w \geq 0\\
        O^F \in {\rm F} \\
        O^G \\
    \end{matrix}
    \left\{ 
    \rule{0cm}{0.6cm} w\,\bigg| \, 
    O
    =
    w\,
    O^G
    +(1-w)
    O^F
    \right\}.
    \label{eq:WoRs}
\end{align}
The generalised robustness quantifies the minimum amount of a general quantum object $O^G$ (either a general state $\rho^G$ or a general POVM set $\mathbb{M}^G_{A|X}$) that has to be added to the quantum object $O$ (either state $\rho$ or POVM set $\mathbb{M}_{A|X}$) to get a free object $O^F$ (either a free state $\rho^F$ or a free POVM set $\mathbb{M}^F_{A|X}$). The weight on the other hand, quantifies the minimum amount of a general object $O^G$ (either a general state $\rho^G$ or a general POVM set $\mathbb{M}^G_{A|X}$) that has to be used for recovering the quantum object $O$ (either state $\rho$ or POVM set $\mathbb{M}_{A|X}$) from a free object $O^F$. It is well known that these resource quantifiers can be written as conic programs \cite{RT_review, RT1, RT2, DS2020, uola}. We are now dealing with multiple objects, so it is natural to introduce the following notation. We say that a state-POVM set pair $(\rho,\mathbb{M}_{A|X})$ is: \emph{fully free} when both objects are free, \emph{partially resourceful} when either is resourceful, and \emph{fully resourceful} when both are resourceful. We will be addressing, from now on, QRTs of POVM sets for which CPreP and CPosP are free operations, meaning that the free set of POVM sets is closed under simulability of POVM sets. Having established these elements from conic programming (further details in \cref{a:pre}), we are now ready to analyse multi-object tasks from the point of view of QRTs.  

The main motivation now is to address the multi-object quantum subchannel discrimination games introduced in the previous section and compare the performance of a player having access to a potentially resourceful pair $(\rho, \mathbb{M}_{A|X})$, against the performance of a player having access only to free resources $(\sigma, \mathbb{N}_{A|X}) \in \mathrm{F} \times \mathds{F}$ (fully free player). We want the comparison to be fair, and so both players are compared when playing the same game (same ensemble of instruments $\{ p_Y,\Phi_{B|Y}\}$). We can then compare the performance of both players by analysing the following ratio:
\begin{align}
    \frac{
        P^{\rm D}_{\rm succ}(
        p_Y,
        \Phi_{B|Y},
        \rho,
        \mathbb{M}_{A|X}
        )
        }{
        \displaystyle
        \max_{
        \sigma \in {\rm F}
        }
        \max_{
            \mathbb{N}_{A|X} \in \mathbb{F}
        }
        P^{\rm D}_{\rm succ}(
        p_Y,
        \Phi_{B|Y},
        \sigma,
        \mathbb{N}_{A|X}
        )
        }
        .
\end{align}
If this ratio is larger than one, it naturally means that the pair $(\rho, \mathbb{M}_{A|X})$ offers an advantage over all fully free pairs, as it leads to larger probability of winning. It is then desirable to derive upper bounds for this ratio, and to explore how large the ratio can get to be, meaning maximising the ratio over all possible games, as this would represent the best case scenario for the pair $(\rho, \mathbb{M}_{A|X})$.

We now establish connections between robustness-based (weight-based) resource quantifiers for states and POVM sets and multi-object subchannel discrimination (exclusion) games with prior information. We proceed to establish a first result comparing the performance of a \emph{fully} or \textit{partially} resourceful pair against all \emph{fully free} pairs. 

\begin{result} \label{r:r1}
    Consider a player  with a quantum state $\rho$ and a POVM set $\mathbb{M}_{A|X}$. The advantage provided by these two objects when playing multi-object subchannel discrimination games with prior information $\{p_Y,\Phi_{B|Y}\}$ is:
    \begin{align}
        \max_{
        \{p_Y,\Phi_{B|Y}\}
        } 
        \frac{
            P^{\rm D}_{\rm succ}(
            p_Y,
                \Phi_{B|Y},
            \rho,
            \mathbb{M}_{A|X}
            )
        }{
            \displaystyle
            \max_{
                \substack{
                    \sigma \in {\rm F}\\
                    \mathbb{N}_{A|X} \in \mathbb{F}
                }
            }
            P^{\rm D}_{\rm succ}(
            p_Y,
                \Phi_{B|Y},
            \sigma,
            \mathbb{N}_{A|X}
            )
        }&
        \nonumber
        \\= 
        \Big [1+{\rm R_F}(\rho)\Big]
        \Big[1+{\rm R_\mathbb{F}}
        (&
        \mathbb{M}_{A|X}
        )
        \Big],
        \label{eq:r1}
    \end{align}
    with the maximisation over all quantum subchannel discrimination games and the generalised robustness of resource of state and of POVM sets.
\end{result}
\begin{proof}(Sketch) The full proof of this result is in \cref{a:r1}. We address here a sketch of the proof. The proof of the statement has two parts. The first part is to show that the right hand side of \eqref{eq:r1} constitutes an upper bound. This first part employs the primal conic program of the generalised robustness, and it follows from relatively straightforward arguments. The second part of the statement, proving that the upper bound is achievable, is a more involved endeavour, and so we describe it next. Given a pair 
$
(
\rho
,
\mathbb{M}_{A|X}
)
$, it is possible to use the dual conic programs of the generalised robustness so to extract positive semidefinite operators $Z^\rho$ and $\{Z^{\mathbb{M}_{A|X}}_{a|x} \equiv Z^{\mathbb{M}}_{a|x}\}$, $a\in \{1,...,l\}$, $x\in \{1,...,\kappa\}$, satisfying desirable optimality properties. Using these operators, we can define the following subchannel discrimination game. Fix a PMF $p_Y$, and consider the game $\{p_Y,\Psi_{B|Y}\}$, $b\in \{1,...,l+J\}$, $y\in \{1,...,\kappa\}$, as:
    \begin{align}
        &\Psi^{
            (\rho,
            \mathbb{M}_{A|X}, p_X, J)
        }_{b|y}
        (\eta)
        \nonumber
        \\
        &\coloneqq
        \bigg \{
        \begin{matrix}
            \alpha
            \tr[Z^\rho \eta]\,
            Z^{\mathbb{M}}_{b|y}\,
            p(y)^{-1}
            , 
            & \hspace{-1.2cm}
            b=1,...,l\\
            \frac{1}{J}
            [1-F_y(\eta)]
            \chi
            &  
            b=l+1,...,l+J
        \end{matrix}
    \end{align} 
    with $J\geq 1$ an integer, $\chi$ an arbitrary quantum state, $\alpha$ a coefficient, and $\{F_y(\cdot)\}$ functions that depend on $
    (
    \rho
    ,
    \mathbb{M}_{A|X}
    )
    $. These parameters are all specified in \cref{a:r1}. The next step is to analyse the performance of a player using fully free pairs to play this game. In \cref{a:r1} we derive the following upper bound respected by all fully free pairs 
    $
    (
    \sigma
    ,
    \mathbb{N}_{A|X}
    )
    $:
    \begin{align}
        &P^{\rm D}_{\rm succ}
        (
        p_Y,
        \Psi^{
            (
            \rho,
            \mathbb{M}_{A|X}, p_X,
            J
            )
        }_{B|Y}
        ,
        \sigma
        ,
        \mathbb{N}_{A|X}
        )
        \leq 
        \alpha 
        +
        \frac{
            1
        }{
            J
        }.
    \end{align}
    The next step is to analyse the performance of a player using now the fixed pair
    $
    (
    \rho
    ,
    \mathbb{M}_{A|X}
    )
    $. In \cref{a:r1} we also derive the lower bound:
    \begin{align}
        &P^{\rm D}_{\rm succ}
        (
        p_Y,
        \Psi^{(\rho
            ,
            \mathbb{M}_{A|X}, p_X,
            J
            )}_{B|Y},
        \rho
        ,
        \mathbb{M}_{A|X}
        )
        \nonumber
        \\
        &\geq
        \alpha 
        \Big[ 
        1+{\rm R_F}
        (\rho) 
        \Big]
        \Big[ 
        1+{\rm R_\mathbb{F}}
        (\mathbb{M}_{A|X}) 
        \Big].
    \end{align}
    Taking the previous two statements with $J\rightarrow \infty$ achieves the claim.
\end{proof}

We first note that this result applies to QRTs of states with \emph{arbitrary resources} and QRTs of POVM sets with \emph{arbitrary resources} for which POVM set simulability is a free operation and therefore, it covers as particular instances, several desirable resources for both states and POVM sets, such as measurement incompatibility. A second point to highlight is that the result holds for any pair $(\rho, \mathbb{M}_{A|X})$ that is either fully resourceful or even partially resourceful. Thirdly, this result generalises two sets of results from the literature:  i) the single-object results reported in \cite{paul_ivan_dani_2019, Teiko_2019, Uola_2019} as well as ii) the multi-object case of state-measurement pairs in \cite{MO1}. Let us address these cases in more detail.

First, the result in \cite{MO1} can immediately be recovered from \eqref{eq:r1} by restricting the POVM set to trivially have only one POVM, and similarly the operational task to consist on only one instrument from which we get:
\begin{align}
        \hspace{-0.1cm}
        \max_{\Phi_{}} 
        \hspace{-0.05cm}
        \frac{
            P^{\rm D}_{\rm succ}(
            \Phi_{},
            \rho,
            \mathbb{M}_{}
            )
        }{
            \displaystyle
            \max_{
                \substack{
                    \sigma \in {\rm F}\\
                    \mathbb{N}_{} \in \mathbb{F}
                }
            }
            P^{\rm D}_{\rm succ}(
            \Phi_{},
            \sigma,
            \mathbb{N}_{}
            )
        }
        \hspace{-0.1cm}
        = 
        \hspace{-0.1cm}
        \Big [1+{\rm R_F}(\rho)\Big]
        \Big[1+{\rm R_\mathbb{F}}
        (
        \mathbb{M}_{}
        ) 
        \Big],
\end{align}
thus explicitly recovering the result for state-measurement pairs reported in \cite{MO1}.

Second, let us now address how \eqref{eq:r1} also recovers the single-object results in \cite{paul_ivan_dani_2019, Teiko_2019, Uola_2019}. In \cite{paul_ivan_dani_2019, Teiko_2019, Uola_2019}, the player has access only to POVM sets (with no quantum states at their disposal), and the resource being exploited is specifically that of measurement incompatibility. Imposing these restrictions in our setup, our operational task reduces to that of \textit{single-object} quantum \textit{state} discrimination with prior information, and in turn the advantage is given by that of the generalised robustness of incompatibility alone. Let us see this explicitly. Consider the denominator in the ratio of interest:
\begin{align}
    \max_{
        \sigma \in {\rm F}
    }
    \max_{
        \mathbb{N}_{A|X} \in \mathbb{F}
    }
    P^{\rm D}_{\rm succ}(
    p_Y,
    \Phi_{B|Y},
    \sigma,
    \mathbb{N}_{A|X}
    ),
\end{align}
and the free state and POVM set achieving this as $(\sigma^*, \mathbb{N}_{A|X}^*)$. Consider now comparing the performance of a player using such fully free pair $(\sigma^*, \mathbb{N}_{A|X}^*)$ against a player using the partially resourceful pair $(\sigma^*, \mathbb{M}_{A|X})$. In this case, because the state is free, the right hand side in \eqref{eq:r1} is simply $[1+{\rm R}_\mathbb{F}(\mathbb{M}_{A|X})]$, and so we explicitly get: 
{\small\begin{align}
    \max_{\{p_Y, \Phi_{B|Y}\}} 
    \frac{
        P^{\rm D}_{\rm succ}(
        p_Y,
        \Phi_{B|Y},
        \sigma^*,
        \mathbb{M}_{A|X}
        )
    }{
        \displaystyle
        P^{\rm D}_{\rm succ}(
        p_Y,
        \Phi_{B|Y},
        \sigma^*,
        \mathbb{N}_{A|X}^*
        )
    }
    =\Big[1+{\rm R_\mathbb{F}}
    (
    \mathbb{M}_{A|X}
    ) 
    \Big].
\end{align}}
The left hand side of the latter expression can now be seen as an ensemble of states with prior information as $\mathcal{E}_{B|Y}\coloneqq \{\rho_{b|y}, p(y)\}$ with $\rho_{b|y}\coloneqq \phi_{b|y}(\sigma^*)$, and so we get:
\begin{align}
    \max_{\mathcal{E}_{B|Y}} 
    \frac{
        P^{\rm QSD-PI}_{\rm succ}(
        \mathcal{E}_{B|Y},
        \mathbb{M}_{A|X}
        )
    }{
        \displaystyle
        P^{\rm QSD-PI}_{\rm succ}(
        \mathcal{E}_{B|Y},
        \mathbb{N}_{A|X}^*
        )
    }
    = 
    \Big[1+{\rm R_\mathbb{F}}
    (
    \mathbb{M}_{A|X}
    ) 
    \Big],
\end{align}
thus explicitly recovering the results in \cite{paul_ivan_dani_2019, Teiko_2019, Uola_2019}. When considering subchannel discrimination games being played with a POVM set alone, the advantage we see becomes $[1+{\rm R}_\mathbb{F}(\mathbb{M}_{A|X})]$ \cite{paul_ivan_dani_2019, Teiko_2019, Uola_2019}. In the multi-object scenario considered here we instead get $[1+{\rm R_F}(\rho)][1+{\rm R}_\mathbb{F}(\mathbb{M}_{A|X})]$, which can be larger than $[1+{\rm R}_\mathbb{F}(\mathbb{M}_{A|X})]$, whenever $\rho$ is resourceful. This increment can be conceptually be understood by considering that, since we are now addressing a composite object, it is natural to expect each object to contribute to the overall advantage. Having said this however, it is still appealing that the advantage can be quantified in such a straightforward multiplicative manner.

We now proceed to show that Result 1 can also be extended to multi-object quantum subchannel \emph{exclusion} games with prior information, where it is now the \emph{weight} of resource the quantifier that characterises the advantage provided by resourceful pairs of states and POVM sets. In this scenario, since the figure of merit is now the probability of \emph{error} when playing the game, the ratio of interest is now:
\begin{align}
    \frac{
            P^{\rm E}_{\rm err}(
            p_Y,
            \Phi_{B|Y},
            \rho,
            \mathbb{M}_{A|X}
            )
        }{
            \displaystyle
        \min_{
        \sigma \in {\rm F}
        }
        \min_{
            \mathbb{N}_{A|X} \in \mathbb{F}
        }
        P^{\rm E}_{\rm err}(
        p_Y,
        \Phi_{B|Y},
        \sigma,
        \mathbb{N}_{A|X}
        )
        }
        .
\end{align}
If this ratio is smaller than one, it means the pair $(\rho, \mathbb{M}_{A|X})$ offers an advantage over all fully free pairs, as it leads to smaller probability of error. It is then desirable to derive lower bounds for this ratio, and to explore how small the ratio can get to be, meaning minimising the ratio over all possible games, as this would represent the best case scenario for the pair $(\rho, \mathbb{M}_{A|X})$.

\begin{result} \label{r:r2}
    Consider a player  with a quantum state $\rho$ and a POVM set $\mathbb{M}_{A|X}$, then, the advantage provided by these two objects when playing subchannel exclusion games with prior information $\{p_Y,\Phi_{B|Y}\}$ is given by:
    \begin{align}
        \min_{
        \{p_Y,\Phi_{B|Y}\}
        } 
        \frac{
            P^{\rm E}_{\rm err}(
            p_Y,
             \Phi_{B|Y},
            \rho,
            \mathbb{M}_{A|X}
            )
        }{
            \displaystyle
            \min_{
                \substack{
                    \sigma \in {\rm F}\\
                    \mathbb{N}_{A|X} \in \mathbb{F}
                }
            }
            P^{\rm E}_{\rm err}(
            p_Y,
            \Phi_{B|Y},
            \sigma,
            \mathbb{N}_{A|X}
            )
        }&
        \nonumber
        \\= 
        \Big [1-{\rm W_F}(\rho)\Big]
        \Big[1-{\rm W_\mathbb{F}}
        (&
        \mathbb{M}_{A|X}
        ) 
        \Big],
        \label{eq:r2}
    \end{align}
    with the minimisation over all quantum subchannel exclusion games and the weight of resource of state and of POVM sets.
\end{result}
The full proof of this result is in \cref{a:r2}. Similar to the discrimination case, this result also holds for arbitrary resources of both quantum states and POVM sets. It is illustrative to address a sketch of this proof, so in order to highlight the differences with the case for discrimination.

\begin{proof}(Sketch)
Similar to the discrimination case, that the right hand side of \eqref{eq:r2} constitutes a lower bound for the ratio of interest follows from the primal conic programs of the weight measures. Showing that such lower bound is achievable is a more involved endeavour, and so we describe it next. Given a pair 
    $
    (
    \rho
    ,
    \mathbb{M}_{A|X}
    )
    $, and the dual conic programs of the weight of resource, we can extract positive semidefinite operators $Y^\rho$ and $\{ Y^{\mathbb{M}_{A|X}}_{a|x} \equiv Y^{\mathbb{M}}_{a|x}\}$, $a\in \{1,...,l\}$, $x\in \{1,...,\kappa\}$, satisfying desirable optimality properties, so that we can construct the following  game. Fix a PMF $p_Y$, and consider the game $\{ p_Y,\Psi_{B|Y}\}$, $b\in \{1,...,l+1\}$, $y\in \{1,...,\kappa\}$, as:
\begin{align}
    &\Psi_{b|y}^{(\rho, \mathbb{M}_{A|X}, p_X)}(\eta) 
    \nonumber \\
    &\coloneqq
\begin{cases} 
    \beta \, 
    \tr[Y^{\rho} \eta]\,
    Y^{\mathbb{M}}_{b|y}\, 
    p(y)^{-1}, 
    & b = 1, \ldots, l \\
    [1 - G_y(\eta)]\,
    \xi^{\mathbb{M}_{A|X}}_y, & b = l + 1
\end{cases}
\end{align}
with $\{\xi^{\mathbb{M}_{A|X}}_y\}$ a set of quantum states, $\beta$ a coefficient, and $\{G_y(\cdot)\}$ functions that depend on $
(
\rho
,
\mathbb{M}_{A|X}
)
$. These parameters are all specified in \cref{a:r2}. We derive the following lower bound for all fully free pairs 
$
(
\sigma
,
\mathbb{N}_{A|X}
)
$:
\begin{align}
    \min_{\sigma \in F} \min_{\mathbb{N}_{A|X} \in \mathbb{F}} P^{\mathrm{E}}_{\text{ err}}(p_Y, \Psi^{(\rho, \mathbb{M}_{A|X}, p_X)}_{b|y} ,  \sigma, \mathbb{N}_{A|X})
    \geq
    \beta
    .
\end{align}
In \cref{a:r2} we also prove that for the fixed pair 
$
(
\rho
,
\mathbb{M}_{A|X}
)
$ we get the upper bound:
\begin{align}
    &P^{\mathrm{E}}_{\text{ err}}
    (
    p_Y 
    , 
    \Psi^{(\rho, \mathbb{M}_{A|X}, p_X)}_{b|y} 
    , 
    \rho
    , 
    \mathbb{M}_{A|X}
    ) 
    \nonumber
    \\
    &\leq
    \beta [1-{\rm W_F}(\rho)][1-{\rm W_F}(\mathbb{M}_{A|X})]
    .
\end{align}
These two statements together then achieve the claim.
\end{proof}
One technical point of comparison regarding the proofs of Result 1 \eqref{eq:r1} and Result 2 \eqref{eq:r2} is the nature of the game saturating the bound. In the discrimination case, the subchannel game needed to achieve the upper bound contains an infinite amount of extra subchannels $J \rightarrow \infty$, whilst in the exclusion case on the other hand, the subchannel game needed to achieve the lower bound requires only one extra subchannel. This difference can qualitatively be understood by taking into account that, the goal when maximising the ratio of interest is to make it difficult for the fully free players to perform well. In the discrimination case, this can be done by \emph{increasing} the amount of objects to discriminate from (as there will be more alternatives that are ``bad" options). In the exclusion case on the other hand, increasing the amount of objects makes it instead actually easier for the player in question to win (as there will be more alternatives that are ``good" options), and so in the exclusion case, in order to make it difficult for the fully free players to perform well, \emph{decreasing} the amount of objects to exclude from is desirable.
	
\section{Extension to general probabilistic theories}
\label{s:GPT}
Our main results \eqref{eq:r1} and \eqref{eq:r2} can be extended to general probabilistic theories (GPTs) \cite{Lami2017,Plavala2023,Takakura2022}. Let $V$ be a finite-dimensional Euclidean space and $V^*$ be its dual. We often identify $V^*$ with $V$ and the action $f(x)$ of $f\in V^*$ on $x\in V$ with the Euclidean inner product $\ang{f,x}$ of two vectors $f,x\in V$ (by means of the Riesz representation theorem \cite{Conway1985}). A \textit{GPT} is a pair of sets $(\Omega, \mathscr{E})$ such that $\Omega$ is a compact convex subset of $V$ with its affine hull satisfying $\mathit{aff}(\Omega)\not\owns 0$ and linear hull satisfying $\mathit{lin}(\Omega)=V$, and $\mathscr{E}=\{e\in V^*\mid e(\forall \omega)\in[0,1]\}$. The sets $\Omega$ and $\eff$ are called the \textit{state space} and \textit{effect space}, and their elements are called \textit{states} and \textit{effects} respectively. We remark that in this paper we assume the \textit{no-restriction hypothesis} \cite{PhysRevA.81.062348}.
Clearly, GPTs are generalisations of quantum theory: taking $\LL_S(\hi)$ as $V$ and $\mathcal{D}(\hi)$ as $\Omega$, we can recover the description of quantum states. It is also easy to see that effects are generalisation of POVM elements, and the Euclidean inner product $\ang{e,\omega}=e(\omega)~(\omega\in\Omega, e\in\eff)$ generalises the Hilbert-Schmidt inner product $\ang{M,\rho}_{\mathrm{HS}}=\Tr[M\rho]~(\rho\in\mathcal{D}(\hi), M\in\LL_S^{+}(\hi))$ for quantum theory. In particular, the \textit{unit effect} $u\in\eff$ defined via $u(\omega)=1 (\forall\omega\in\Omega)$ corresponds to the identity operator $\mathds{1}$. With similar notations to the quantum case, we define a \textit{measurement} with an outcome set $A=\{1,\ldots,l\}$ as a set of effects $\{e_a\}_{a=1}^l$ such that $\sum_{a\in A}e_a=u$. A \textit{measurement set} $\mathbb{E}_{A|X}=\{e_{a|x}\}_{a,x}$ is also introduced as a generalisation of a POVM set. The notion of (in)compatibility and simulability for POVMs can be extended naturally to measurements in GPTs by rephrasing \eqref{eq:comp} and \eqref{eq:CP} in terms of effects instead of POVM elements. It allows us to use the expression $\mathbb{N}_{B|Y} \preceq \mathbb{E}_{A|X}$ for two measurement sets $\mathbb{N}_{B|Y}$ and $\mathbb{E}_{A|X}$ in GPTs meaning that $\mathbb{N}_{B|Y}$ is simulable by $\mathbb{E}_{A|X}$. It is also clear that measurement compatibility is closed under simulation in GPTs.

For a GPT $(\Omega, \mathscr{E})$ whose underlying vector space is $V$, we define the \textit{positive cone} $V_+\subset V$ by $V_+=\mathit{cone}(\Omega)$ and the \textit{dual cone} $V_+^\circ\subset V^*$ by $V_+^\circ=\mathit{cone}(\mathscr{E})=\{f\in V^*\mid f(x)\ge0, \forall x\in V_+\}$, where $\mathit{cone}(\cdot)$ is the conic hull of the set.
For the quantum case, these cones correspond to the set of positive semi-definite operators $\LL_{S}^+(\hi)$. A linear map $\xi\colon V\to V$ is called \textit{positive} if $\xi(V_+)\subset V_+$.
A positive map $\xi$ is called a \textit{channel} if it satisfies $\ang{u,\xi(\omega)}=1,\forall\omega\in\Omega$, and a \textit{subchannel} if $\ang{u,\xi(\omega)}\le1,\forall\omega\in\Omega$.
An \textit{instrument} $\Xi$ is defined as a set of subchannels $\Xi=\{\xi_b\}_{b=1}^m$ such that $\sum_{b=1}^m\xi_b$ is a channel, and we also introduce an \textit{instrument set} $\Xi_{B|Y}=\{\Xi_y\}_{y=1}^\tau=\{\xi_{b|y}\}_{b,y}$ in the same way as the quantum case.
We remark that here we do not require complete positivity for (sub)channels due to the difficulty in determining composite systems for GPTs \cite{Aubrun2021}.
It seems that quantum description for channels is not recovered, but such inconsistency can be overcome by restricting ourselves to a (convex) subset of the set of all channels in GPTs.

Now we are in position to present our third main result.
We consider \textit{general-probabilistic} multi-object subchannel discrimination game with prior information (GPScD-PI) as natural extension of QScD-PI to GPTs. 
The scenario of GPScD-PI is the same as QScD-PI: replacing a quantum state $\rho\in\mathcal{D}(\hi)$, a POVM set $\mathbb{M}_{A|X}=\{M_{a|x}\}$, and a quantum instrument set $\Phi_{B|Y}$ with a state $\omega\in \Omega$, a measurement set $\mathbb{E}_{A|X}=\{e_{a|x}\}$, and an instrument set $\Xi_{B|Y}$ for a GPT $(\Omega, \eff)$. 
The maximum success probability is given in the same way as \eqref{eq:Psuc_Q} by
\begin{align}
    P^{\rm GPD}_{\rm succ}
    (
    &
    p_Y,
    \Xi_{B|Y},
    \omega,
    \mathbb{E}_{A|X}
    )
    \coloneqq
    \max_{\mathcal{S}}
    \hspace{-0.2cm}
    \sum_{g,a,x,\mu,b,y}
    \hspace{-0.3cm}
    \delta_{g,b}\,
    s(g|a,y,z)
    \nonumber
    \\
    &
    \times
    \ang{
        e_{a|x},\,
        \xi_{b|y}
        (\omega)
    }
    r(x|y,z)\,
    q(z)\,
    p(y)\label{eq:Psuc_G}
\end{align}
with the maximisation over all possible strategies $\mathcal{S} = \{q_Z, r_{X|YZ}, s_{G|AYZ}\}$. Similarly, we can consider general-probabilistic multi-object subchannel exclusion game with prior information (GPScE-PI)  generalising the QScE-PI and \eqref{eq:Perr_Q}, and the minimum error probability is 
\begin{align}
    P^{\rm GPE}_{\rm err}
    (
    &
    p_Y,
    \Xi_{B|Y},
    \omega,
    \mathbb{E}_{A|X}
    )
    \coloneqq
    \min_{\mathcal{S}}
    \hspace{-0.2cm}
    \sum_{g,a,x,\mu,b,y}
    \hspace{-0.3cm}
    \delta_{g,b}\,
    s(g|a,y,z)
    \nonumber
    \\
    &
    \times
    \ang{
        e_{a|x},\,
        \xi_{b|y}
        (\omega)
    }
    r(x|y,z)\,
    q(z)\,
    p(y)\label{eq:Perr_G}
\end{align}
with the minimisation over all possible strategies $\mathcal{S} = \{q_Z, r_{X|YZ}, s_{G|AYZ}\}$.
The last step we need is to introduce the notion of resourceful sets in GPTs, and this can be also done by generalising quantum concepts straightforwardly.
In fact, with a CCC $\mathcal{F}\subseteq V_+$ (closed with respect to e.g. the Euclidean topology in $V$),  we can introduce free states as elements of the set $\mathrm{F}=\{\omega\in \mathcal{F}\mid\ang{u,\omega}=1\}$ and resourceful states as $\Omega\setminus \mathrm{F}$.
A set $\mathbb{F}$ of free measurement sets is similarly extended, and $\mathbb{M}_{A|X}\notin\mathbb{F}$ is called resourceful.
We can also introduce the generalised robustness of resource and the weight of resource of an object $O$ (either a state or a measurement set) in the GPT $(\Omega, \eff)$ as
{\small
    \begin{align}
        {\rm R_F^{\mathrm{GP}}}
        (
        O
        )
        \hspace{-0.5mm}&\coloneqq
        \begin{matrix}
            \text{\small \rm min}\\
            r \geq 0\\
            O^F \in {\rm F} \\
            O^G \\
        \end{matrix}
        \left\{ 
        \rule{0cm}{0.6cm} r\,\bigg| \, 
        O
        +
        rO^G
        =
        (1+r)
        O^F
        \right\},
        \label{eq:RoRs_GPT}\\
        {\rm W^{\mathrm{GP}}_F}
        (
        O
        )
        \hspace{-0.5mm}&\coloneqq 
        \begin{matrix}
            \text{\small \rm min}\\
            w \geq 0\\
            O^F \in {\rm F} \\
            O^G \\
        \end{matrix}
        \left\{ 
        \rule{0cm}{0.6cm} w\,\bigg| 
        O
        =
        w\,
        O^G
        +(1-w)
        O^F\hspace{-0.5mm}
        \right\}.
        \label{eq:WoRs_GPT}
\end{align}}
These quantities are associated with conic programs with the positive cone $V_+$ and dual cones $V_+^\circ$ generated respectively by $\Omega$ and $\eff$.
Now we can generalise the results from the quantum domain \eqref{eq:r1} and \eqref{eq:r2} to GPTs:
\begin{result} \label{r:r3}
    Consider a player with a state $\omega$ and a measurement set $\mathbb{E}_{A|X}$ of a GPT $(\Omega,\eff)$. 
    The advantage provided by these two objects when playing general-probabilistic multi-object subchannel discrimination and exclusion games with prior information $\{p_Y,\Xi_{B|Y}\}$ is:
    \begin{align}
        \hspace{-0.1cm}
        \max_{\{p_Y,\Xi_{B|Y}\}} 
        \hspace{-0.05cm}
        \frac{
            P^{\rm GPD}_{\rm succ}(
            p_Y,
            \Xi_{B|Y},
            \omega,
            \mathbb{E}_{A|X}
            )
        }{
            \displaystyle
            \max_{
                \substack{
                    \sigma \in {\rm F}\\
                    \mathbb{N}_{A|X} \in \mathbb{F}
                }
            }
            P^{\rm GPD}_{\rm succ}(
            p_Y,
            \Xi_{B|Y},
            \sigma,
            \mathbb{N}_{A|X}
            )
        }&
        \nonumber
        \\= 
        \Big [1+{\rm R_F^{GP}}(\omega)\Big]
        \Big[1+{\rm R^{GP}_\mathbb{F}}&
        (
        \mathbb{E}_{A|X}
        ) 
        \Big],
        \label{eq:result3-1}
    \end{align}

    and 
    
    \begin{align}
        \hspace{-0.1cm}\min_{\{p_Y,\Xi_{B|Y}\}}
        \hspace{-0.05cm}
        \frac{
            P^{\rm GPE}_{\rm err}(
            p_Y,
            \Xi_{B|Y},
            \omega,
            \mathbb{E}_{A|X}
            )
        }{
            \displaystyle
            \min_{
                \substack{
                    \sigma \in {\rm F}\\
                    \mathbb{N}_{A|X} \in \mathbb{F}
                }
            }
            P^{\rm GPE}_{\rm err}(
            p_Y,
            \Xi_{B|Y},
            \sigma,
            \mathbb{N}_{A|X}
            )
        }&
        \nonumber
        \\= 
        \Big [1-{\rm W^{GP}_F}(\omega)\Big]
        \Big[1-{\rm W^{GP}_\mathbb{F}}&
        (
        \mathbb{E}_{A|X}
        ) 
        \Big],
        \label{eq:result3-2}
    \end{align}
    respectively. With the maximisation (minimisation) over all general-probabilistic subchannel discrimination (exclusion) games and the generalised robustness (weight) of resource of state and of measurement sets.
\end{result}
The full proof of this result is in \cref{app:GPT}. These proofs proceed in a similar way as the previous ones by appropriately extending the quantum objects to their counterparts in GPTs.
Some examples of this include using the Euclidean inner product and the \textit{order unit norm} instead of the Hilbert-Schmidt inner product and the operator norm respectively.
	
\section{Conclusions}

In this work we have introduced \textit{multi-object} operational tasks for measurement incompatibility in the form of subchannel discrimination and exclusion games with prior information, where the player can simultaneously harness the resources contained within two quantum objects; a quantum state, and a set of measurements (POVM set). Specifically, we have shown that \emph{any} fully or partially resourceful pair (state, POVM set) is useful for a suitably chosen multi-object subchannel discrimination and exclusion game with prior information. We have found that, when compared to the best possible strategy using \emph{fully free} state-POVM set pairs, the advantage provided by a pair state-POVM set can be quantified, in a \emph{multiplicative} manner, by the resource quantifiers of generalised robustness and weight of resource of the state and the POVM set, for discrimination and exclusion respectively. These results hold true for arbitrary resources of quantum states as well as arbitrary resources of POVM sets closed under classical pre and post-processing. The results presented here are therefore telling us that all sets of incompatible measurement can be useful for multi-object operational tasks. These results also happen to generalise various other previous results in the literature. First, it generalises the results reported in \cite{paul_ivan_dani_2019, Teiko_2019, Uola_2019}, where \textit{single-object} operational tasks for measurement incompatibility were characterised. Second, it also generalises the operational tasks for state-measurement pairs introduced in \cite{MO1}. Third and finally, we furthermore have shown that these results can be extended to the realm of general probabilistic theories and, consequently, also generalise the \textit{single-object} results for GPTs reported in \cite{RT2}, now to the \textit{multi-object} regime.

There are various different directions where to further explored these findings. First, quantum resource theories have recently been explored beyond the realm of convexity and so, it would be interesting to explore multi-object tasks in such regimes \cite{bc1, bc2, bc3, MR2}. Second, whilst we have explored both discrimination and exclusion games, it is also known that these games can be considered more generally as quantum state betting games \cite{EUT_QSB, EUT_QSB2},  from the point of view of expected utility theory, and so it would be interesting to explore such betting games using the multi-object perspective employed in this paper. Third, it could also be interesting to explore these findings from the point of view of cooperative game theory, akin to the setting being explored for multi-resource tasks in \cite{MR2}. Fourth, it could also be relevant to explore multi-object tasks for more general scenarios involving the incompatibility of sets of instruments \cite{instru0, instru1, instru2, instru3, instru4, instru5, instru6, instru7, instru8}. 
	
\emph{Acknowledgements.---}A.F.D. thanks Paul Skrzypczyk for insightful discussions. A.F.D. and R.T. thank Ivan \v{S}upi\'{c} and Erkka Haapasalo for interesting discussions on measurement incompatibility. A.F.D. acknowledges support from the International Research Unit of Quantum Information, Kyoto University, the Center for Gravitational Physics and Quantum Information (CGPQI), and COLCIENCIAS 756-2016. 
R.T. acknowledges support from MEXT QLEAP and JST COI-NEXT program Grant No. JPMJPF2014.
C.E.S. and F.J.H. acknowledges support from University of C\'ordoba (Grants FCB-12-23). 

\newpage
\bibliographystyle{apsrev4-1}
\bibliography{bibliography2.bib}

\begin{thebibliography}{91}%
\makeatletter
\providecommand \@ifxundefined [1]{%
 \@ifx{#1\undefined}
}%
\providecommand \@ifnum [1]{%
 \ifnum #1\expandafter \@firstoftwo
 \else \expandafter \@secondoftwo
 \fi
}%
\providecommand \@ifx [1]{%
 \ifx #1\expandafter \@firstoftwo
 \else \expandafter \@secondoftwo
 \fi
}%
\providecommand \natexlab [1]{#1}%
\providecommand \enquote  [1]{``#1''}%
\providecommand \bibnamefont  [1]{#1}%
\providecommand \bibfnamefont [1]{#1}%
\providecommand \citenamefont [1]{#1}%
\providecommand \href@noop [0]{\@secondoftwo}%
\providecommand \href [0]{\begingroup \@sanitize@url \@href}%
\providecommand \@href[1]{\@@startlink{#1}\@@href}%
\providecommand \@@href[1]{\endgroup#1\@@endlink}%
\providecommand \@sanitize@url [0]{\catcode `\\12\catcode `\$12\catcode `\&12\catcode `\#12\catcode `\^12\catcode `\_12\catcode `\%12\relax}%
\providecommand \@@startlink[1]{}%
\providecommand \@@endlink[0]{}%
\providecommand \url  [0]{\begingroup\@sanitize@url \@url }%
\providecommand \@url [1]{\endgroup\@href {#1}{\urlprefix }}%
\providecommand \urlprefix  [0]{URL }%
\providecommand \Eprint [0]{\href }%
\providecommand \doibase [0]{http://dx.doi.org/}%
\providecommand \selectlanguage [0]{\@gobble}%
\providecommand \bibinfo  [0]{\@secondoftwo}%
\providecommand \bibfield  [0]{\@secondoftwo}%
\providecommand \translation [1]{[#1]}%
\providecommand \BibitemOpen [0]{}%
\providecommand \bibitemStop [0]{}%
\providecommand \bibitemNoStop [0]{.\EOS\space}%
\providecommand \EOS [0]{\spacefactor3000\relax}%
\providecommand \BibitemShut  [1]{\csname bibitem#1\endcsname}%
\let\auto@bib@innerbib\@empty
\bibitem [{\citenamefont {Chitambar}\ and\ \citenamefont {Gour}(2019)}]{RT_review}%
  \BibitemOpen
  \bibfield  {author} {\bibinfo {author} {\bibfnamefont {E.}~\bibnamefont {Chitambar}}\ and\ \bibinfo {author} {\bibfnamefont {G.}~\bibnamefont {Gour}},\ }\href {\doibase 10.1103/RevModPhys.91.025001} {\bibfield  {journal} {\bibinfo  {journal} {Rev. Mod. Phys.}\ }\textbf {\bibinfo {volume} {91}},\ \bibinfo {pages} {025001} (\bibinfo {year} {2019})}\BibitemShut {NoStop}%
\bibitem [{\citenamefont {Bennett}\ \emph {et~al.}(1996{\natexlab{a}})\citenamefont {Bennett}, \citenamefont {Brassard}, \citenamefont {Popescu}, \citenamefont {Schumacher}, \citenamefont {Smolin},\ and\ \citenamefont {Wootters}}]{QRTE1}%
  \BibitemOpen
  \bibfield  {author} {\bibinfo {author} {\bibfnamefont {C.~H.}\ \bibnamefont {Bennett}}, \bibinfo {author} {\bibfnamefont {G.}~\bibnamefont {Brassard}}, \bibinfo {author} {\bibfnamefont {S.}~\bibnamefont {Popescu}}, \bibinfo {author} {\bibfnamefont {B.}~\bibnamefont {Schumacher}}, \bibinfo {author} {\bibfnamefont {J.~A.}\ \bibnamefont {Smolin}}, \ and\ \bibinfo {author} {\bibfnamefont {W.~K.}\ \bibnamefont {Wootters}},\ }\href {\doibase 10.1103/PhysRevLett.76.722} {\bibfield  {journal} {\bibinfo  {journal} {Phys. Rev. Lett.}\ }\textbf {\bibinfo {volume} {76}},\ \bibinfo {pages} {722} (\bibinfo {year} {1996}{\natexlab{a}})}\BibitemShut {NoStop}%
\bibitem [{\citenamefont {Bennett}\ \emph {et~al.}(1996{\natexlab{b}})\citenamefont {Bennett}, \citenamefont {Bernstein}, \citenamefont {Popescu},\ and\ \citenamefont {Schumacher}}]{QRTE2}%
  \BibitemOpen
  \bibfield  {author} {\bibinfo {author} {\bibfnamefont {C.~H.}\ \bibnamefont {Bennett}}, \bibinfo {author} {\bibfnamefont {H.~J.}\ \bibnamefont {Bernstein}}, \bibinfo {author} {\bibfnamefont {S.}~\bibnamefont {Popescu}}, \ and\ \bibinfo {author} {\bibfnamefont {B.}~\bibnamefont {Schumacher}},\ }\href {\doibase 10.1103/PhysRevA.53.2046} {\bibfield  {journal} {\bibinfo  {journal} {Phys. Rev. A}\ }\textbf {\bibinfo {volume} {53}},\ \bibinfo {pages} {2046} (\bibinfo {year} {1996}{\natexlab{b}})}\BibitemShut {NoStop}%
\bibitem [{\citenamefont {Oszmaniec}\ \emph {et~al.}(2017)\citenamefont {Oszmaniec}, \citenamefont {Guerini}, \citenamefont {Wittek},\ and\ \citenamefont {Ac\'{\i}n}}]{RT_measurements0}%
  \BibitemOpen
  \bibfield  {author} {\bibinfo {author} {\bibfnamefont {M.}~\bibnamefont {Oszmaniec}}, \bibinfo {author} {\bibfnamefont {L.}~\bibnamefont {Guerini}}, \bibinfo {author} {\bibfnamefont {P.}~\bibnamefont {Wittek}}, \ and\ \bibinfo {author} {\bibfnamefont {A.}~\bibnamefont {Ac\'{\i}n}},\ }\href {\doibase 10.1103/PhysRevLett.119.190501} {\bibfield  {journal} {\bibinfo  {journal} {Phys. Rev. Lett.}\ }\textbf {\bibinfo {volume} {119}},\ \bibinfo {pages} {190501} (\bibinfo {year} {2017})}\BibitemShut {NoStop}%
\bibitem [{\citenamefont {Oszmaniec}\ and\ \citenamefont {Biswas}(2019)}]{RT_measurements1}%
  \BibitemOpen
  \bibfield  {author} {\bibinfo {author} {\bibfnamefont {M.}~\bibnamefont {Oszmaniec}}\ and\ \bibinfo {author} {\bibfnamefont {T.}~\bibnamefont {Biswas}},\ }\href {\doibase 10.22331/q-2019-04-26-133} {\bibfield  {journal} {\bibinfo  {journal} {{Quantum}}\ }\textbf {\bibinfo {volume} {3}},\ \bibinfo {pages} {133} (\bibinfo {year} {2019})}\BibitemShut {NoStop}%
\bibitem [{\citenamefont {Guff}\ \emph {et~al.}(2021)\citenamefont {Guff}, \citenamefont {McMahon}, \citenamefont {Sanders},\ and\ \citenamefont {Gilchrist}}]{RT_measurements2}%
  \BibitemOpen
  \bibfield  {author} {\bibinfo {author} {\bibfnamefont {T.}~\bibnamefont {Guff}}, \bibinfo {author} {\bibfnamefont {N.~A.}\ \bibnamefont {McMahon}}, \bibinfo {author} {\bibfnamefont {Y.~R.}\ \bibnamefont {Sanders}}, \ and\ \bibinfo {author} {\bibfnamefont {A.}~\bibnamefont {Gilchrist}},\ }\href {\doibase 10.1088/1751-8121/abed67} {\bibfield  {journal} {\bibinfo  {journal} {Journal of Physics A: Mathematical and Theoretical}\ }\textbf {\bibinfo {volume} {54}},\ \bibinfo {pages} {225301} (\bibinfo {year} {2021})}\BibitemShut {NoStop}%
\bibitem [{\citenamefont {Winter}\ and\ \citenamefont {Yang}(2016)}]{RT_coherence}%
  \BibitemOpen
  \bibfield  {author} {\bibinfo {author} {\bibfnamefont {A.}~\bibnamefont {Winter}}\ and\ \bibinfo {author} {\bibfnamefont {D.}~\bibnamefont {Yang}},\ }\href {\doibase 10.1103/PhysRevLett.116.120404} {\bibfield  {journal} {\bibinfo  {journal} {Phys. Rev. Lett.}\ }\textbf {\bibinfo {volume} {116}},\ \bibinfo {pages} {120404} (\bibinfo {year} {2016})}\BibitemShut {NoStop}%
\bibitem [{\citenamefont {Theurer}\ \emph {et~al.}(2017)\citenamefont {Theurer}, \citenamefont {Killoran}, \citenamefont {Egloff},\ and\ \citenamefont {Plenio}}]{RT_superposition}%
  \BibitemOpen
  \bibfield  {author} {\bibinfo {author} {\bibfnamefont {T.}~\bibnamefont {Theurer}}, \bibinfo {author} {\bibfnamefont {N.}~\bibnamefont {Killoran}}, \bibinfo {author} {\bibfnamefont {D.}~\bibnamefont {Egloff}}, \ and\ \bibinfo {author} {\bibfnamefont {M.~B.}\ \bibnamefont {Plenio}},\ }\href {\doibase 10.1103/PhysRevLett.119.230401} {\bibfield  {journal} {\bibinfo  {journal} {Phys. Rev. Lett.}\ }\textbf {\bibinfo {volume} {119}},\ \bibinfo {pages} {230401} (\bibinfo {year} {2017})}\BibitemShut {NoStop}%
\bibitem [{\citenamefont {Streltsov}\ \emph {et~al.}(2018)\citenamefont {Streltsov}, \citenamefont {Kampermann}, \citenamefont {Wölk}, \citenamefont {Gessner},\ and\ \citenamefont {Bru{\ss}}}]{RoP}%
  \BibitemOpen
  \bibfield  {author} {\bibinfo {author} {\bibfnamefont {A.}~\bibnamefont {Streltsov}}, \bibinfo {author} {\bibfnamefont {H.}~\bibnamefont {Kampermann}}, \bibinfo {author} {\bibfnamefont {S.}~\bibnamefont {Wölk}}, \bibinfo {author} {\bibfnamefont {M.}~\bibnamefont {Gessner}}, \ and\ \bibinfo {author} {\bibfnamefont {D.}~\bibnamefont {Bru{\ss}}},\ }\href {\doibase 10.1088/1367-2630/aac484} {\bibfield  {journal} {\bibinfo  {journal} {New Journal of Physics}\ }\textbf {\bibinfo {volume} {20}},\ \bibinfo {pages} {053058} (\bibinfo {year} {2018})}\BibitemShut {NoStop}%
\bibitem [{\citenamefont {Howard}\ and\ \citenamefont {Campbell}(2017)}]{RT_magic}%
  \BibitemOpen
  \bibfield  {author} {\bibinfo {author} {\bibfnamefont {M.}~\bibnamefont {Howard}}\ and\ \bibinfo {author} {\bibfnamefont {E.}~\bibnamefont {Campbell}},\ }\href {\doibase 10.1103/PhysRevLett.118.090501} {\bibfield  {journal} {\bibinfo  {journal} {Phys. Rev. Lett.}\ }\textbf {\bibinfo {volume} {118}},\ \bibinfo {pages} {090501} (\bibinfo {year} {2017})}\BibitemShut {NoStop}%
\bibitem [{\citenamefont {Takagi}\ and\ \citenamefont {Zhuang}(2018)}]{RT_nongaussianity}%
  \BibitemOpen
  \bibfield  {author} {\bibinfo {author} {\bibfnamefont {R.}~\bibnamefont {Takagi}}\ and\ \bibinfo {author} {\bibfnamefont {Q.}~\bibnamefont {Zhuang}},\ }\href {\doibase 10.1103/PhysRevA.97.062337} {\bibfield  {journal} {\bibinfo  {journal} {Phys. Rev. A}\ }\textbf {\bibinfo {volume} {97}},\ \bibinfo {pages} {062337} (\bibinfo {year} {2018})}\BibitemShut {NoStop}%
\bibitem [{\citenamefont {Bhattacharya}\ \emph {et~al.}(2020)\citenamefont {Bhattacharya}, \citenamefont {Bhattacharya},\ and\ \citenamefont {Majumdar}}]{RT_nonmarkovianity1}%
  \BibitemOpen
  \bibfield  {author} {\bibinfo {author} {\bibfnamefont {S.}~\bibnamefont {Bhattacharya}}, \bibinfo {author} {\bibfnamefont {B.}~\bibnamefont {Bhattacharya}}, \ and\ \bibinfo {author} {\bibfnamefont {A.~S.}\ \bibnamefont {Majumdar}},\ }\href {\doibase 10.1088/1751-8121/abd191} {\bibfield  {journal} {\bibinfo  {journal} {Journal of Physics A: Mathematical and Theoretical}\ }\textbf {\bibinfo {volume} {54}},\ \bibinfo {pages} {035302} (\bibinfo {year} {2020})}\BibitemShut {NoStop}%
\bibitem [{\citenamefont {Wakakuwa}(2017)}]{RT_nonmarkovianity2}%
  \BibitemOpen
  \bibfield  {author} {\bibinfo {author} {\bibfnamefont {E.}~\bibnamefont {Wakakuwa}},\ }\href@noop {} {\enquote {\bibinfo {title} {Operational resource theory of non-markovianity},}\ } (\bibinfo {year} {2017}),\ \Eprint {http://arxiv.org/abs/arXiv:1709.07248} {arXiv:1709.07248} \BibitemShut {NoStop}%
\bibitem [{\citenamefont {Anand}\ and\ \citenamefont {Brun}(2019)}]{namit}%
  \BibitemOpen
  \bibfield  {author} {\bibinfo {author} {\bibfnamefont {N.}~\bibnamefont {Anand}}\ and\ \bibinfo {author} {\bibfnamefont {T.~A.}\ \bibnamefont {Brun}},\ }\href@noop {} {\enquote {\bibinfo {title} {Quantifying non-markovianity: a quantum resource-theoretic approach},}\ } (\bibinfo {year} {2019}),\ \Eprint {http://arxiv.org/abs/arXiv:1903.03880} {arXiv:1903.03880} \BibitemShut {NoStop}%
\bibitem [{\citenamefont {Ng}\ and\ \citenamefont {Woods}(2018)}]{RT_thermodynamics}%
  \BibitemOpen
  \bibfield  {author} {\bibinfo {author} {\bibfnamefont {N.~H.~Y.}\ \bibnamefont {Ng}}\ and\ \bibinfo {author} {\bibfnamefont {M.~P.}\ \bibnamefont {Woods}},\ }\enquote {\bibinfo {title} {Resource theory of quantum thermodynamics: Thermal operations and second laws},}\ in\ \href {\doibase 10.1007/978-3-319-99046-0_26} {\emph {\bibinfo {booktitle} {Thermodynamics in the Quantum Regime: Fundamental Aspects and New Directions}}},\ \bibinfo {editor} {edited by\ \bibinfo {editor} {\bibfnamefont {F.}~\bibnamefont {Binder}}, \bibinfo {editor} {\bibfnamefont {L.~A.}\ \bibnamefont {Correa}}, \bibinfo {editor} {\bibfnamefont {C.}~\bibnamefont {Gogolin}}, \bibinfo {editor} {\bibfnamefont {J.}~\bibnamefont {Anders}}, \ and\ \bibinfo {editor} {\bibfnamefont {G.}~\bibnamefont {Adesso}}}\ (\bibinfo  {publisher} {Springer International Publishing},\ \bibinfo {address} {Cham},\ \bibinfo {year} {2018})\ pp.\ \bibinfo {pages} {625--650}\BibitemShut {NoStop}%
\bibitem [{\citenamefont {Gour}\ and\ \citenamefont {Spekkens}(2008)}]{RT_RF}%
  \BibitemOpen
  \bibfield  {author} {\bibinfo {author} {\bibfnamefont {G.}~\bibnamefont {Gour}}\ and\ \bibinfo {author} {\bibfnamefont {R.~W.}\ \bibnamefont {Spekkens}},\ }\href {\doibase 10.1088/1367-2630/10/3/033023} {\bibfield  {journal} {\bibinfo  {journal} {New Journal of Physics}\ }\textbf {\bibinfo {volume} {10}},\ \bibinfo {pages} {033023} (\bibinfo {year} {2008})}\BibitemShut {NoStop}%
\bibitem [{\citenamefont {Skrzypczyk}\ and\ \citenamefont {Linden}(2019)}]{RoM}%
  \BibitemOpen
  \bibfield  {author} {\bibinfo {author} {\bibfnamefont {P.}~\bibnamefont {Skrzypczyk}}\ and\ \bibinfo {author} {\bibfnamefont {N.}~\bibnamefont {Linden}},\ }\href {\doibase 10.1103/PhysRevLett.122.140403} {\bibfield  {journal} {\bibinfo  {journal} {Phys. Rev. Lett.}\ }\textbf {\bibinfo {volume} {122}},\ \bibinfo {pages} {140403} (\bibinfo {year} {2019})}\BibitemShut {NoStop}%
\bibitem [{\citenamefont {Guerini}\ \emph {et~al.}(2017{\natexlab{a}})\citenamefont {Guerini}, \citenamefont {Bavaresco}, \citenamefont {Cunha},\ and\ \citenamefont {Ac{\'{\i}}n}}]{RT_PS}%
  \BibitemOpen
  \bibfield  {author} {\bibinfo {author} {\bibfnamefont {L.}~\bibnamefont {Guerini}}, \bibinfo {author} {\bibfnamefont {J.}~\bibnamefont {Bavaresco}}, \bibinfo {author} {\bibfnamefont {M.~T.}\ \bibnamefont {Cunha}}, \ and\ \bibinfo {author} {\bibfnamefont {A.}~\bibnamefont {Ac{\'{\i}}n}},\ }\href {\doibase 10.1063/1.4994303} {\bibfield  {journal} {\bibinfo  {journal} {Journal of Mathematical Physics}\ }\textbf {\bibinfo {volume} {58}},\ \bibinfo {pages} {092102} (\bibinfo {year} {2017}{\natexlab{a}})}\BibitemShut {NoStop}%
\bibitem [{\citenamefont {Heinosaari}\ \emph {et~al.}(2016)\citenamefont {Heinosaari}, \citenamefont {Miyadera},\ and\ \citenamefont {Ziman}}]{invitation_2016}%
  \BibitemOpen
  \bibfield  {author} {\bibinfo {author} {\bibfnamefont {T.}~\bibnamefont {Heinosaari}}, \bibinfo {author} {\bibfnamefont {T.}~\bibnamefont {Miyadera}}, \ and\ \bibinfo {author} {\bibfnamefont {M.}~\bibnamefont {Ziman}},\ }\href {\doibase 10.1088/1751-8113/49/12/123001} {\bibfield  {journal} {\bibinfo  {journal} {Journal of Physics A: Mathematical and Theoretical}\ }\textbf {\bibinfo {volume} {49}},\ \bibinfo {pages} {123001} (\bibinfo {year} {2016})}\BibitemShut {NoStop}%
\bibitem [{\citenamefont {G\"uhne}\ \emph {et~al.}(2023)\citenamefont {G\"uhne}, \citenamefont {Haapasalo}, \citenamefont {Kraft}, \citenamefont {Pellonp\"a\"a},\ and\ \citenamefont {Uola}}]{review_incompatibility}%
  \BibitemOpen
  \bibfield  {author} {\bibinfo {author} {\bibfnamefont {O.}~\bibnamefont {G\"uhne}}, \bibinfo {author} {\bibfnamefont {E.}~\bibnamefont {Haapasalo}}, \bibinfo {author} {\bibfnamefont {T.}~\bibnamefont {Kraft}}, \bibinfo {author} {\bibfnamefont {J.-P.}\ \bibnamefont {Pellonp\"a\"a}}, \ and\ \bibinfo {author} {\bibfnamefont {R.}~\bibnamefont {Uola}},\ }\href {\doibase 10.1103/RevModPhys.95.011003} {\bibfield  {journal} {\bibinfo  {journal} {Rev. Mod. Phys.}\ }\textbf {\bibinfo {volume} {95}},\ \bibinfo {pages} {011003} (\bibinfo {year} {2023})}\BibitemShut {NoStop}%
\bibitem [{\citenamefont {de~Vicente}(2014)}]{RT_nonlocality}%
  \BibitemOpen
  \bibfield  {author} {\bibinfo {author} {\bibfnamefont {J.~I.}\ \bibnamefont {de~Vicente}},\ }\href {\doibase 10.1088/1751-8113/47/42/424017} {\bibfield  {journal} {\bibinfo  {journal} {Journal of Physics A: Mathematical and Theoretical}\ }\textbf {\bibinfo {volume} {47}},\ \bibinfo {pages} {424017} (\bibinfo {year} {2014})}\BibitemShut {NoStop}%
\bibitem [{\citenamefont {Amaral}(2019)}]{RT_noncontextuality}%
  \BibitemOpen
  \bibfield  {author} {\bibinfo {author} {\bibfnamefont {B.}~\bibnamefont {Amaral}},\ }\href {\doibase 10.1098/rsta.2019.0010} {\bibfield  {journal} {\bibinfo  {journal} {Philosophical Transactions of the Royal Society A: Mathematical, Physical and Engineering Sciences}\ }\textbf {\bibinfo {volume} {377}},\ \bibinfo {pages} {20190010} (\bibinfo {year} {2019})}\BibitemShut {NoStop}%
\bibitem [{\citenamefont {Gallego}\ and\ \citenamefont {Aolita}(2015)}]{RT_steering}%
  \BibitemOpen
  \bibfield  {author} {\bibinfo {author} {\bibfnamefont {R.}~\bibnamefont {Gallego}}\ and\ \bibinfo {author} {\bibfnamefont {L.}~\bibnamefont {Aolita}},\ }\href {\doibase 10.1103/PhysRevX.5.041008} {\bibfield  {journal} {\bibinfo  {journal} {Phys. Rev. X}\ }\textbf {\bibinfo {volume} {5}},\ \bibinfo {pages} {041008} (\bibinfo {year} {2015})}\BibitemShut {NoStop}%
\bibitem [{\citenamefont {\ifmmode \check{S}\else \v{S}\fi{}upi\ifmmode~\acute{c}\else \'{c}\fi{}}\ \emph {et~al.}(2019)\citenamefont {\ifmmode \check{S}\else \v{S}\fi{}upi\ifmmode~\acute{c}\else \'{c}\fi{}}, \citenamefont {Skrzypczyk},\ and\ \citenamefont {Cavalcanti}}]{RoT}%
  \BibitemOpen
  \bibfield  {author} {\bibinfo {author} {\bibfnamefont {I.}~\bibnamefont {\ifmmode \check{S}\else \v{S}\fi{}upi\ifmmode~\acute{c}\else \'{c}\fi{}}}, \bibinfo {author} {\bibfnamefont {P.}~\bibnamefont {Skrzypczyk}}, \ and\ \bibinfo {author} {\bibfnamefont {D.}~\bibnamefont {Cavalcanti}},\ }\href {\doibase 10.1103/PhysRevA.99.032334} {\bibfield  {journal} {\bibinfo  {journal} {Phys. Rev. A}\ }\textbf {\bibinfo {volume} {99}},\ \bibinfo {pages} {032334} (\bibinfo {year} {2019})}\BibitemShut {NoStop}%
\bibitem [{\citenamefont {Liu}\ and\ \citenamefont {Winter}(2019)}]{RT_channels1}%
  \BibitemOpen
  \bibfield  {author} {\bibinfo {author} {\bibfnamefont {Z.-W.}\ \bibnamefont {Liu}}\ and\ \bibinfo {author} {\bibfnamefont {A.}~\bibnamefont {Winter}},\ }\href@noop {} {\enquote {\bibinfo {title} {Resource theories of quantum channels and the universal role of resource erasure},}\ } (\bibinfo {year} {2019}),\ \Eprint {http://arxiv.org/abs/arXiv:1904.04201} {arXiv:1904.04201} \BibitemShut {NoStop}%
\bibitem [{\citenamefont {Liu}\ and\ \citenamefont {Yuan}(2020)}]{RT_channels2}%
  \BibitemOpen
  \bibfield  {author} {\bibinfo {author} {\bibfnamefont {Y.}~\bibnamefont {Liu}}\ and\ \bibinfo {author} {\bibfnamefont {X.}~\bibnamefont {Yuan}},\ }\href {\doibase 10.1103/PhysRevResearch.2.012035} {\bibfield  {journal} {\bibinfo  {journal} {Phys. Rev. Res.}\ }\textbf {\bibinfo {volume} {2}},\ \bibinfo {pages} {012035} (\bibinfo {year} {2020})}\BibitemShut {NoStop}%
\bibitem [{\citenamefont {Theurer}\ \emph {et~al.}(2019)\citenamefont {Theurer}, \citenamefont {Egloff}, \citenamefont {Zhang},\ and\ \citenamefont {Plenio}}]{Plenio}%
  \BibitemOpen
  \bibfield  {author} {\bibinfo {author} {\bibfnamefont {T.}~\bibnamefont {Theurer}}, \bibinfo {author} {\bibfnamefont {D.}~\bibnamefont {Egloff}}, \bibinfo {author} {\bibfnamefont {L.}~\bibnamefont {Zhang}}, \ and\ \bibinfo {author} {\bibfnamefont {M.~B.}\ \bibnamefont {Plenio}},\ }\href {\doibase 10.1103/PhysRevLett.122.190405} {\bibfield  {journal} {\bibinfo  {journal} {Phys. Rev. Lett.}\ }\textbf {\bibinfo {volume} {122}},\ \bibinfo {pages} {190405} (\bibinfo {year} {2019})}\BibitemShut {NoStop}%
\bibitem [{\citenamefont {Pirandola}\ \emph {et~al.}(2017)\citenamefont {Pirandola}, \citenamefont {Laurenza}, \citenamefont {Ottaviani},\ and\ \citenamefont {Banchi}}]{Pirandola}%
  \BibitemOpen
  \bibfield  {author} {\bibinfo {author} {\bibfnamefont {S.}~\bibnamefont {Pirandola}}, \bibinfo {author} {\bibfnamefont {R.}~\bibnamefont {Laurenza}}, \bibinfo {author} {\bibfnamefont {C.}~\bibnamefont {Ottaviani}}, \ and\ \bibinfo {author} {\bibfnamefont {L.}~\bibnamefont {Banchi}},\ }\href {\doibase 10.1038/ncomms15043} {\bibfield  {journal} {\bibinfo  {journal} {Nature Communications}\ }\textbf {\bibinfo {volume} {8}},\ \bibinfo {pages} {15043} (\bibinfo {year} {2017})}\BibitemShut {NoStop}%
\bibitem [{\citenamefont {Wolfe}\ \emph {et~al.}(2020)\citenamefont {Wolfe}, \citenamefont {Schmid}, \citenamefont {Sainz}, \citenamefont {Kunjwal},\ and\ \citenamefont {Spekkens}}]{Schmid1}%
  \BibitemOpen
  \bibfield  {author} {\bibinfo {author} {\bibfnamefont {E.}~\bibnamefont {Wolfe}}, \bibinfo {author} {\bibfnamefont {D.}~\bibnamefont {Schmid}}, \bibinfo {author} {\bibfnamefont {A.~B.}\ \bibnamefont {Sainz}}, \bibinfo {author} {\bibfnamefont {R.}~\bibnamefont {Kunjwal}}, \ and\ \bibinfo {author} {\bibfnamefont {R.~W.}\ \bibnamefont {Spekkens}},\ }\href {\doibase 10.22331/q-2020-06-08-280} {\bibfield  {journal} {\bibinfo  {journal} {Quantum}\ }\textbf {\bibinfo {volume} {4}},\ \bibinfo {pages} {280} (\bibinfo {year} {2020})}\BibitemShut {NoStop}%
\bibitem [{\citenamefont {Schmid}\ \emph {et~al.}(2020)\citenamefont {Schmid}, \citenamefont {Rosset},\ and\ \citenamefont {Buscemi}}]{Schmid2}%
  \BibitemOpen
  \bibfield  {author} {\bibinfo {author} {\bibfnamefont {D.}~\bibnamefont {Schmid}}, \bibinfo {author} {\bibfnamefont {D.}~\bibnamefont {Rosset}}, \ and\ \bibinfo {author} {\bibfnamefont {F.}~\bibnamefont {Buscemi}},\ }\href {\doibase 10.22331/q-2020-04-30-262} {\bibfield  {journal} {\bibinfo  {journal} {Quantum}\ }\textbf {\bibinfo {volume} {4}},\ \bibinfo {pages} {262} (\bibinfo {year} {2020})}\BibitemShut {NoStop}%
\bibitem [{\citenamefont {Schmid}\ \emph {et~al.}(2021)\citenamefont {Schmid}, \citenamefont {Selby}, \citenamefont {Wolfe}, \citenamefont {Kunjwal},\ and\ \citenamefont {Spekkens}}]{Schmid3}%
  \BibitemOpen
  \bibfield  {author} {\bibinfo {author} {\bibfnamefont {D.}~\bibnamefont {Schmid}}, \bibinfo {author} {\bibfnamefont {J.~H.}\ \bibnamefont {Selby}}, \bibinfo {author} {\bibfnamefont {E.}~\bibnamefont {Wolfe}}, \bibinfo {author} {\bibfnamefont {R.}~\bibnamefont {Kunjwal}}, \ and\ \bibinfo {author} {\bibfnamefont {R.~W.}\ \bibnamefont {Spekkens}},\ }\href {\doibase 10.1103/PRXQuantum.2.010331} {\bibfield  {journal} {\bibinfo  {journal} {PRX Quantum}\ }\textbf {\bibinfo {volume} {2}},\ \bibinfo {pages} {010331} (\bibinfo {year} {2021})}\BibitemShut {NoStop}%
\bibitem [{\citenamefont {Rosset}\ \emph {et~al.}(2020)\citenamefont {Rosset}, \citenamefont {Schmid},\ and\ \citenamefont {Buscemi}}]{Schmid4}%
  \BibitemOpen
  \bibfield  {author} {\bibinfo {author} {\bibfnamefont {D.}~\bibnamefont {Rosset}}, \bibinfo {author} {\bibfnamefont {D.}~\bibnamefont {Schmid}}, \ and\ \bibinfo {author} {\bibfnamefont {F.}~\bibnamefont {Buscemi}},\ }\href {\doibase 10.1103/PhysRevLett.125.210402} {\bibfield  {journal} {\bibinfo  {journal} {Phys. Rev. Lett.}\ }\textbf {\bibinfo {volume} {125}},\ \bibinfo {pages} {210402} (\bibinfo {year} {2020})}\BibitemShut {NoStop}%
\bibitem [{\citenamefont {Brunner}\ \emph {et~al.}(2014)\citenamefont {Brunner}, \citenamefont {Cavalcanti}, \citenamefont {Pironio}, \citenamefont {Scarani},\ and\ \citenamefont {Wehner}}]{review_nonlocality_2014}%
  \BibitemOpen
  \bibfield  {author} {\bibinfo {author} {\bibfnamefont {N.}~\bibnamefont {Brunner}}, \bibinfo {author} {\bibfnamefont {D.}~\bibnamefont {Cavalcanti}}, \bibinfo {author} {\bibfnamefont {S.}~\bibnamefont {Pironio}}, \bibinfo {author} {\bibfnamefont {V.}~\bibnamefont {Scarani}}, \ and\ \bibinfo {author} {\bibfnamefont {S.}~\bibnamefont {Wehner}},\ }\href {\doibase 10.1103/RevModPhys.86.419} {\bibfield  {journal} {\bibinfo  {journal} {Rev. Mod. Phys.}\ }\textbf {\bibinfo {volume} {86}},\ \bibinfo {pages} {419} (\bibinfo {year} {2014})}\BibitemShut {NoStop}%
\bibitem [{\citenamefont {Cavalcanti}\ and\ \citenamefont {Skrzypczyk}(2017)}]{review_steering_2017}%
  \BibitemOpen
  \bibfield  {author} {\bibinfo {author} {\bibfnamefont {D.}~\bibnamefont {Cavalcanti}}\ and\ \bibinfo {author} {\bibfnamefont {P.}~\bibnamefont {Skrzypczyk}},\ }\href {http://stacks.iop.org/0034-4885/80/i=2/a=024001} {\bibfield  {journal} {\bibinfo  {journal} {Reports on Progress in Physics}\ }\textbf {\bibinfo {volume} {80}},\ \bibinfo {pages} {024001} (\bibinfo {year} {2017})}\BibitemShut {NoStop}%
\bibitem [{\citenamefont {Uola}\ \emph {et~al.}(2020{\natexlab{a}})\citenamefont {Uola}, \citenamefont {Costa}, \citenamefont {Nguyen},\ and\ \citenamefont {G\"uhne}}]{review_steering_2020}%
  \BibitemOpen
  \bibfield  {author} {\bibinfo {author} {\bibfnamefont {R.}~\bibnamefont {Uola}}, \bibinfo {author} {\bibfnamefont {A.~C.~S.}\ \bibnamefont {Costa}}, \bibinfo {author} {\bibfnamefont {H.~C.}\ \bibnamefont {Nguyen}}, \ and\ \bibinfo {author} {\bibfnamefont {O.}~\bibnamefont {G\"uhne}},\ }\href {\doibase 10.1103/RevModPhys.92.015001} {\bibfield  {journal} {\bibinfo  {journal} {Rev. Mod. Phys.}\ }\textbf {\bibinfo {volume} {92}},\ \bibinfo {pages} {015001} (\bibinfo {year} {2020}{\natexlab{a}})}\BibitemShut {NoStop}%
\bibitem [{\citenamefont {Scarani}(2019)}]{VS_nonlocality}%
  \BibitemOpen
  \bibfield  {author} {\bibinfo {author} {\bibfnamefont {V.}~\bibnamefont {Scarani}},\ }\href {https://books.google.com.co/books?id=39ShDwAAQBAJ} {\emph {\bibinfo {title} {Bell Nonlocality}}},\ Oxford Graduate Texts\ (\bibinfo  {publisher} {Oxford University Press},\ \bibinfo {year} {2019})\BibitemShut {NoStop}%
\bibitem [{\citenamefont {Quintino}\ \emph {et~al.}(2019)\citenamefont {Quintino}, \citenamefont {Budroni}, \citenamefont {Woodhead}, \citenamefont {Cabello},\ and\ \citenamefont {Cavalcanti}}]{DI0}%
  \BibitemOpen
  \bibfield  {author} {\bibinfo {author} {\bibfnamefont {M.~T.}\ \bibnamefont {Quintino}}, \bibinfo {author} {\bibfnamefont {C.}~\bibnamefont {Budroni}}, \bibinfo {author} {\bibfnamefont {E.}~\bibnamefont {Woodhead}}, \bibinfo {author} {\bibfnamefont {A.}~\bibnamefont {Cabello}}, \ and\ \bibinfo {author} {\bibfnamefont {D.}~\bibnamefont {Cavalcanti}},\ }\href {\doibase 10.1103/PhysRevLett.123.180401} {\bibfield  {journal} {\bibinfo  {journal} {Phys. Rev. Lett.}\ }\textbf {\bibinfo {volume} {123}},\ \bibinfo {pages} {180401} (\bibinfo {year} {2019})}\BibitemShut {NoStop}%
\bibitem [{\citenamefont {Chen}\ \emph {et~al.}(2021)\citenamefont {Chen}, \citenamefont {Miklin}, \citenamefont {Budroni},\ and\ \citenamefont {Chen}}]{DI1}%
  \BibitemOpen
  \bibfield  {author} {\bibinfo {author} {\bibfnamefont {S.-L.}\ \bibnamefont {Chen}}, \bibinfo {author} {\bibfnamefont {N.}~\bibnamefont {Miklin}}, \bibinfo {author} {\bibfnamefont {C.}~\bibnamefont {Budroni}}, \ and\ \bibinfo {author} {\bibfnamefont {Y.-N.}\ \bibnamefont {Chen}},\ }\href {\doibase 10.1103/PhysRevResearch.3.023143} {\bibfield  {journal} {\bibinfo  {journal} {Phys. Rev. Res.}\ }\textbf {\bibinfo {volume} {3}},\ \bibinfo {pages} {023143} (\bibinfo {year} {2021})}\BibitemShut {NoStop}%
\bibitem [{\citenamefont {Sarkar}\ \emph {et~al.}(2022)\citenamefont {Sarkar}, \citenamefont {Saha},\ and\ \citenamefont {Augusiak}}]{DI2}%
  \BibitemOpen
  \bibfield  {author} {\bibinfo {author} {\bibfnamefont {S.}~\bibnamefont {Sarkar}}, \bibinfo {author} {\bibfnamefont {D.}~\bibnamefont {Saha}}, \ and\ \bibinfo {author} {\bibfnamefont {R.}~\bibnamefont {Augusiak}},\ }\href {\doibase 10.1103/PhysRevA.106.L040402} {\bibfield  {journal} {\bibinfo  {journal} {Phys. Rev. A}\ }\textbf {\bibinfo {volume} {106}},\ \bibinfo {pages} {L040402} (\bibinfo {year} {2022})}\BibitemShut {NoStop}%
\bibitem [{\citenamefont {Saha}\ \emph {et~al.}(2023)\citenamefont {Saha}, \citenamefont {Das}, \citenamefont {Das}, \citenamefont {Bhattacharya},\ and\ \citenamefont {Majumdar}}]{DI3}%
  \BibitemOpen
  \bibfield  {author} {\bibinfo {author} {\bibfnamefont {D.}~\bibnamefont {Saha}}, \bibinfo {author} {\bibfnamefont {D.}~\bibnamefont {Das}}, \bibinfo {author} {\bibfnamefont {A.~K.}\ \bibnamefont {Das}}, \bibinfo {author} {\bibfnamefont {B.}~\bibnamefont {Bhattacharya}}, \ and\ \bibinfo {author} {\bibfnamefont {A.~S.}\ \bibnamefont {Majumdar}},\ }\href {\doibase 10.1103/PhysRevA.107.062210} {\bibfield  {journal} {\bibinfo  {journal} {Phys. Rev. A}\ }\textbf {\bibinfo {volume} {107}},\ \bibinfo {pages} {062210} (\bibinfo {year} {2023})}\BibitemShut {NoStop}%
\bibitem [{\citenamefont {Skrzypczyk}\ \emph {et~al.}(2019)\citenamefont {Skrzypczyk}, \citenamefont {\ifmmode \check{S}\else \v{S}\fi{}upi\ifmmode~\acute{c}\else \'{c}\fi{}},\ and\ \citenamefont {Cavalcanti}}]{paul_ivan_dani_2019}%
  \BibitemOpen
  \bibfield  {author} {\bibinfo {author} {\bibfnamefont {P.}~\bibnamefont {Skrzypczyk}}, \bibinfo {author} {\bibfnamefont {I.}~\bibnamefont {\ifmmode \check{S}\else \v{S}\fi{}upi\ifmmode~\acute{c}\else \'{c}\fi{}}}, \ and\ \bibinfo {author} {\bibfnamefont {D.}~\bibnamefont {Cavalcanti}},\ }\href {\doibase 10.1103/PhysRevLett.122.130403} {\bibfield  {journal} {\bibinfo  {journal} {Phys. Rev. Lett.}\ }\textbf {\bibinfo {volume} {122}},\ \bibinfo {pages} {130403} (\bibinfo {year} {2019})}\BibitemShut {NoStop}%
\bibitem [{\citenamefont {Carmeli}\ \emph {et~al.}(2019)\citenamefont {Carmeli}, \citenamefont {Heinosaari},\ and\ \citenamefont {Toigo}}]{Teiko_2019}%
  \BibitemOpen
  \bibfield  {author} {\bibinfo {author} {\bibfnamefont {C.}~\bibnamefont {Carmeli}}, \bibinfo {author} {\bibfnamefont {T.}~\bibnamefont {Heinosaari}}, \ and\ \bibinfo {author} {\bibfnamefont {A.}~\bibnamefont {Toigo}},\ }\href {\doibase 10.1103/PhysRevLett.122.130402} {\bibfield  {journal} {\bibinfo  {journal} {Phys. Rev. Lett.}\ }\textbf {\bibinfo {volume} {122}},\ \bibinfo {pages} {130402} (\bibinfo {year} {2019})}\BibitemShut {NoStop}%
\bibitem [{\citenamefont {Uola}\ \emph {et~al.}(2019)\citenamefont {Uola}, \citenamefont {Kraft}, \citenamefont {Shang}, \citenamefont {Yu},\ and\ \citenamefont {G\"uhne}}]{Uola_2019}%
  \BibitemOpen
  \bibfield  {author} {\bibinfo {author} {\bibfnamefont {R.}~\bibnamefont {Uola}}, \bibinfo {author} {\bibfnamefont {T.}~\bibnamefont {Kraft}}, \bibinfo {author} {\bibfnamefont {J.}~\bibnamefont {Shang}}, \bibinfo {author} {\bibfnamefont {X.-D.}\ \bibnamefont {Yu}}, \ and\ \bibinfo {author} {\bibfnamefont {O.}~\bibnamefont {G\"uhne}},\ }\href {\doibase 10.1103/PhysRevLett.122.130404} {\bibfield  {journal} {\bibinfo  {journal} {Phys. Rev. Lett.}\ }\textbf {\bibinfo {volume} {122}},\ \bibinfo {pages} {130404} (\bibinfo {year} {2019})}\BibitemShut {NoStop}%
\bibitem [{\citenamefont {Sen}\ \emph {et~al.}(2024)\citenamefont {Sen}, \citenamefont {Halder},\ and\ \citenamefont {Sen}}]{RoI_local}%
  \BibitemOpen
  \bibfield  {author} {\bibinfo {author} {\bibfnamefont {K.}~\bibnamefont {Sen}}, \bibinfo {author} {\bibfnamefont {S.}~\bibnamefont {Halder}}, \ and\ \bibinfo {author} {\bibfnamefont {U.}~\bibnamefont {Sen}},\ }\href {\doibase 10.1103/PhysRevA.109.012415} {\bibfield  {journal} {\bibinfo  {journal} {Phys. Rev. A}\ }\textbf {\bibinfo {volume} {109}},\ \bibinfo {pages} {012415} (\bibinfo {year} {2024})}\BibitemShut {NoStop}%
\bibitem [{\citenamefont {Buscemi}\ \emph {et~al.}(2020)\citenamefont {Buscemi}, \citenamefont {Chitambar},\ and\ \citenamefont {Zhou}}]{QP1}%
  \BibitemOpen
  \bibfield  {author} {\bibinfo {author} {\bibfnamefont {F.}~\bibnamefont {Buscemi}}, \bibinfo {author} {\bibfnamefont {E.}~\bibnamefont {Chitambar}}, \ and\ \bibinfo {author} {\bibfnamefont {W.}~\bibnamefont {Zhou}},\ }\href {\doibase 10.1103/PhysRevLett.124.120401} {\bibfield  {journal} {\bibinfo  {journal} {Phys. Rev. Lett.}\ }\textbf {\bibinfo {volume} {124}},\ \bibinfo {pages} {120401} (\bibinfo {year} {2020})}\BibitemShut {NoStop}%
\bibitem [{\citenamefont {Ji}\ and\ \citenamefont {Chitambar}(2024)}]{QP2}%
  \BibitemOpen
  \bibfield  {author} {\bibinfo {author} {\bibfnamefont {K.}~\bibnamefont {Ji}}\ and\ \bibinfo {author} {\bibfnamefont {E.}~\bibnamefont {Chitambar}},\ }\href {\doibase 10.1103/PRXQuantum.5.010340} {\bibfield  {journal} {\bibinfo  {journal} {PRX Quantum}\ }\textbf {\bibinfo {volume} {5}},\ \bibinfo {pages} {010340} (\bibinfo {year} {2024})}\BibitemShut {NoStop}%
\bibitem [{\citenamefont {Ducuara}\ \emph {et~al.}(2020)\citenamefont {Ducuara}, \citenamefont {Lipka-Bartosik},\ and\ \citenamefont {Skrzypczyk}}]{MO1}%
  \BibitemOpen
  \bibfield  {author} {\bibinfo {author} {\bibfnamefont {A.~F.}\ \bibnamefont {Ducuara}}, \bibinfo {author} {\bibfnamefont {P.}~\bibnamefont {Lipka-Bartosik}}, \ and\ \bibinfo {author} {\bibfnamefont {P.}~\bibnamefont {Skrzypczyk}},\ }\href {\doibase 10.1103/PhysRevResearch.2.033374} {\bibfield  {journal} {\bibinfo  {journal} {Phys. Rev. Res.}\ }\textbf {\bibinfo {volume} {2}},\ \bibinfo {pages} {033374} (\bibinfo {year} {2020})}\BibitemShut {NoStop}%
\bibitem [{\citenamefont {Salazar}\ \emph {et~al.}(2022)\citenamefont {Salazar}, \citenamefont {Czartowski},\ and\ \citenamefont {de~Oliveira~Junior}}]{MO2}%
  \BibitemOpen
  \bibfield  {author} {\bibinfo {author} {\bibfnamefont {R.}~\bibnamefont {Salazar}}, \bibinfo {author} {\bibfnamefont {J.}~\bibnamefont {Czartowski}}, \ and\ \bibinfo {author} {\bibfnamefont {A.}~\bibnamefont {de~Oliveira~Junior}},\ }\href@noop {} {\enquote {\bibinfo {title} {Resource theory of absolute negativity},}\ } (\bibinfo {year} {2022}),\ \Eprint {http://arxiv.org/abs/arXiv:2205.13480} {arXiv:2205.13480} \BibitemShut {NoStop}%
\bibitem [{\citenamefont {Sparaciari}\ \emph {et~al.}(2020)\citenamefont {Sparaciari}, \citenamefont {del Rio}, \citenamefont {Scandolo}, \citenamefont {Faist},\ and\ \citenamefont {Oppenheim}}]{MR1}%
  \BibitemOpen
  \bibfield  {author} {\bibinfo {author} {\bibfnamefont {C.}~\bibnamefont {Sparaciari}}, \bibinfo {author} {\bibfnamefont {L.}~\bibnamefont {del Rio}}, \bibinfo {author} {\bibfnamefont {C.~M.}\ \bibnamefont {Scandolo}}, \bibinfo {author} {\bibfnamefont {P.}~\bibnamefont {Faist}}, \ and\ \bibinfo {author} {\bibfnamefont {J.}~\bibnamefont {Oppenheim}},\ }\href {\doibase 10.22331/q-2020-04-30-259} {\bibfield  {journal} {\bibinfo  {journal} {Quantum}\ }\textbf {\bibinfo {volume} {4}},\ \bibinfo {pages} {259} (\bibinfo {year} {2020})}\BibitemShut {NoStop}%
\bibitem [{\citenamefont {Salazar}\ \emph {et~al.}(2024)\citenamefont {Salazar}, \citenamefont {Czartowski}, \citenamefont {Rodr\`iguez}, \citenamefont {Rajchel-Mieldzio\'c}, \citenamefont {Horodecki},\ and\ \citenamefont {\ifmmode~\dot{Z}\else \.{Z}\fi{}yczkowski}}]{MR2}%
  \BibitemOpen
  \bibfield  {author} {\bibinfo {author} {\bibfnamefont {R.}~\bibnamefont {Salazar}}, \bibinfo {author} {\bibfnamefont {J.}~\bibnamefont {Czartowski}}, \bibinfo {author} {\bibfnamefont {R.~R.}\ \bibnamefont {Rodr\`iguez}}, \bibinfo {author} {\bibfnamefont {G.}~\bibnamefont {Rajchel-Mieldzio\'c}}, \bibinfo {author} {\bibfnamefont {P.}~\bibnamefont {Horodecki}}, \ and\ \bibinfo {author} {\bibfnamefont {K.}~\bibnamefont {\ifmmode~\dot{Z}\else \.{Z}\fi{}yczkowski}},\ }\href@noop {} {\enquote {\bibinfo {title} {Quantum resource theories beyond convexity},}\ } (\bibinfo {year} {2024}),\ \Eprint {http://arxiv.org/abs/arXiv:2405.05785} {arXiv:2405.05785} \BibitemShut {NoStop}%
\bibitem [{\citenamefont {Hardy}(2001)}]{hardy2001quantum}%
  \BibitemOpen
  \bibfield  {author} {\bibinfo {author} {\bibfnamefont {L.}~\bibnamefont {Hardy}},\ }\href@noop {} {\  (\bibinfo {year} {2001})},\ \Eprint {http://arxiv.org/abs/quant-ph/0101012} {arXiv:quant-ph/0101012} \BibitemShut {NoStop}%
\bibitem [{\citenamefont {Barnum}\ \emph {et~al.}(2007)\citenamefont {Barnum}, \citenamefont {Barrett}, \citenamefont {Leifer},\ and\ \citenamefont {Wilce}}]{PhysRevLett.99.240501}%
  \BibitemOpen
  \bibfield  {author} {\bibinfo {author} {\bibfnamefont {H.}~\bibnamefont {Barnum}}, \bibinfo {author} {\bibfnamefont {J.}~\bibnamefont {Barrett}}, \bibinfo {author} {\bibfnamefont {M.}~\bibnamefont {Leifer}}, \ and\ \bibinfo {author} {\bibfnamefont {A.}~\bibnamefont {Wilce}},\ }\href {\doibase 10.1103/PhysRevLett.99.240501} {\bibfield  {journal} {\bibinfo  {journal} {Physical Review Letters}\ }\textbf {\bibinfo {volume} {99}},\ \bibinfo {pages} {240501} (\bibinfo {year} {2007})}\BibitemShut {NoStop}%
\bibitem [{\citenamefont {Barnum}\ \emph {et~al.}(2012)\citenamefont {Barnum}, \citenamefont {Barrett}, \citenamefont {Leifer},\ and\ \citenamefont {Wilce}}]{barnum2012teleportation}%
  \BibitemOpen
  \bibfield  {author} {\bibinfo {author} {\bibfnamefont {H.}~\bibnamefont {Barnum}}, \bibinfo {author} {\bibfnamefont {J.}~\bibnamefont {Barrett}}, \bibinfo {author} {\bibfnamefont {M.}~\bibnamefont {Leifer}}, \ and\ \bibinfo {author} {\bibfnamefont {A.}~\bibnamefont {Wilce}},\ }in\ \href {\doibase https://doi.org/10.1090/psapm/071} {\emph {\bibinfo {booktitle} {Proceedings of Symposia in Applied Mathematics}}},\ Vol.~\bibinfo {volume} {71}\ (\bibinfo {year} {2012})\ pp.\ \bibinfo {pages} {25--48}\BibitemShut {NoStop}%
\bibitem [{\citenamefont {Barrett}(2007)}]{PhysRevA.75.032304}%
  \BibitemOpen
  \bibfield  {author} {\bibinfo {author} {\bibfnamefont {J.}~\bibnamefont {Barrett}},\ }\href {\doibase 10.1103/PhysRevA.75.032304} {\bibfield  {journal} {\bibinfo  {journal} {Physical Review A}\ }\textbf {\bibinfo {volume} {75}},\ \bibinfo {pages} {032304} (\bibinfo {year} {2007})}\BibitemShut {NoStop}%
\bibitem [{\citenamefont {Chiribella}\ \emph {et~al.}(2010)\citenamefont {Chiribella}, \citenamefont {D'Ariano},\ and\ \citenamefont {Perinotti}}]{PhysRevA.81.062348}%
  \BibitemOpen
  \bibfield  {author} {\bibinfo {author} {\bibfnamefont {G.}~\bibnamefont {Chiribella}}, \bibinfo {author} {\bibfnamefont {G.~M.}\ \bibnamefont {D'Ariano}}, \ and\ \bibinfo {author} {\bibfnamefont {P.}~\bibnamefont {Perinotti}},\ }\href {\doibase 10.1103/PhysRevA.81.062348} {\bibfield  {journal} {\bibinfo  {journal} {Physical Review A}\ }\textbf {\bibinfo {volume} {81}},\ \bibinfo {pages} {062348} (\bibinfo {year} {2010})}\BibitemShut {NoStop}%
\bibitem [{\citenamefont {Chiribella}\ \emph {et~al.}(2011)\citenamefont {Chiribella}, \citenamefont {D'Ariano},\ and\ \citenamefont {Perinotti}}]{PhysRevA.84.012311}%
  \BibitemOpen
  \bibfield  {author} {\bibinfo {author} {\bibfnamefont {G.}~\bibnamefont {Chiribella}}, \bibinfo {author} {\bibfnamefont {G.~M.}\ \bibnamefont {D'Ariano}}, \ and\ \bibinfo {author} {\bibfnamefont {P.}~\bibnamefont {Perinotti}},\ }\href {\doibase 10.1103/PhysRevA.84.012311} {\bibfield  {journal} {\bibinfo  {journal} {Physical Review A}\ }\textbf {\bibinfo {volume} {84}},\ \bibinfo {pages} {012311} (\bibinfo {year} {2011})}\BibitemShut {NoStop}%
\bibitem [{\citenamefont {Masanes}\ and\ \citenamefont {M\"uller}(2011)}]{Masanes_Muller_2011}%
  \BibitemOpen
  \bibfield  {author} {\bibinfo {author} {\bibfnamefont {L.}~\bibnamefont {Masanes}}\ and\ \bibinfo {author} {\bibfnamefont {M.~P.}\ \bibnamefont {M\"uller}},\ }\href {\doibase 10.1088/1367-2630/13/6/063001} {\bibfield  {journal} {\bibinfo  {journal} {New Journal of Physics}\ }\textbf {\bibinfo {volume} {13}},\ \bibinfo {pages} {063001} (\bibinfo {year} {2011})}\BibitemShut {NoStop}%
\bibitem [{\citenamefont {Guerini}\ \emph {et~al.}(2017{\natexlab{b}})\citenamefont {Guerini}, \citenamefont {Bavaresco}, \citenamefont {Cunha},\ and\ \citenamefont {Ac{\'{\i}}n}}]{simulability}%
  \BibitemOpen
  \bibfield  {author} {\bibinfo {author} {\bibfnamefont {L.}~\bibnamefont {Guerini}}, \bibinfo {author} {\bibfnamefont {J.}~\bibnamefont {Bavaresco}}, \bibinfo {author} {\bibfnamefont {M.~T.}\ \bibnamefont {Cunha}}, \ and\ \bibinfo {author} {\bibfnamefont {A.}~\bibnamefont {Ac{\'{\i}}n}},\ }\href {\doibase 10.1063/1.4994303} {\bibfield  {journal} {\bibinfo  {journal} {Journal of Mathematical Physics}\ }\textbf {\bibinfo {volume} {58}},\ \bibinfo {pages} {092102} (\bibinfo {year} {2017}{\natexlab{b}})}\BibitemShut {NoStop}%
\bibitem [{\citenamefont {Busch}\ \emph {et~al.}(2016)\citenamefont {Busch}, \citenamefont {Lahti}, \citenamefont {Pellonp{\"a}{\"a}},\ and\ \citenamefont {Ylinen}}]{Busch_quantummeasurement}%
  \BibitemOpen
  \bibfield  {author} {\bibinfo {author} {\bibfnamefont {P.}~\bibnamefont {Busch}}, \bibinfo {author} {\bibfnamefont {P.~J.}\ \bibnamefont {Lahti}}, \bibinfo {author} {\bibfnamefont {J.-P.}\ \bibnamefont {Pellonp{\"a}{\"a}}}, \ and\ \bibinfo {author} {\bibfnamefont {K.}~\bibnamefont {Ylinen}},\ }\href {\doibase https://doi.org/10.1007/978-3-319-43389-9} {\emph {\bibinfo {title} {Quantum Measurement}}},\ Theoretical and Mathematical Physics\ (\bibinfo  {publisher} {Springer International Publishing},\ \bibinfo {year} {2016})\BibitemShut {NoStop}%
\bibitem [{\citenamefont {Heinosaari}\ and\ \citenamefont {Ziman}(2011)}]{heinosaari_ziman_2011}%
  \BibitemOpen
  \bibfield  {author} {\bibinfo {author} {\bibfnamefont {T.}~\bibnamefont {Heinosaari}}\ and\ \bibinfo {author} {\bibfnamefont {M.}~\bibnamefont {Ziman}},\ }\href {\doibase 10.1017/CBO9781139031103} {\emph {\bibinfo {title} {The Mathematical Language of Quantum Theory: From Uncertainty to Entanglement}}}\ (\bibinfo  {publisher} {Cambridge University Press},\ \bibinfo {address} {Cambridge},\ \bibinfo {year} {2011})\BibitemShut {NoStop}%
\bibitem [{\citenamefont {Carmeli}\ \emph {et~al.}(2018)\citenamefont {Carmeli}, \citenamefont {Heinosaari},\ and\ \citenamefont {Toigo}}]{Teiko_2018}%
  \BibitemOpen
  \bibfield  {author} {\bibinfo {author} {\bibfnamefont {C.}~\bibnamefont {Carmeli}}, \bibinfo {author} {\bibfnamefont {T.}~\bibnamefont {Heinosaari}}, \ and\ \bibinfo {author} {\bibfnamefont {A.}~\bibnamefont {Toigo}},\ }\href {\doibase 10.1103/PhysRevA.98.012126} {\bibfield  {journal} {\bibinfo  {journal} {Phys. Rev. A}\ }\textbf {\bibinfo {volume} {98}},\ \bibinfo {pages} {012126} (\bibinfo {year} {2018})}\BibitemShut {NoStop}%
\bibitem [{\citenamefont {G\"{a}rtner}\ and\ \citenamefont {Matousek}(2012)}]{GM_2012}%
  \BibitemOpen
  \bibfield  {author} {\bibinfo {author} {\bibfnamefont {B.}~\bibnamefont {G\"{a}rtner}}\ and\ \bibinfo {author} {\bibfnamefont {J.}~\bibnamefont {Matousek}},\ }\href {\doibase 10.1007/978-3-642-22015-9} {\emph {\bibinfo {title} {Approximation Algorithms and Semidefinite Programming}}}\ (\bibinfo  {publisher} {Springer Berlin Heidelberg},\ \bibinfo {year} {2012})\BibitemShut {NoStop}%
\bibitem [{\citenamefont {Johnston}(2012)}]{NJ_2012}%
  \BibitemOpen
  \bibfield  {author} {\bibinfo {author} {\bibfnamefont {N.}~\bibnamefont {Johnston}},\ }\href@noop {} {\enquote {\bibinfo {title} {Norms and cones in the theory of quantum entanglement},}\ } (\bibinfo {year} {2012}),\ \Eprint {http://arxiv.org/abs/arXiv:1207.1479} {arXiv:1207.1479} \BibitemShut {NoStop}%
\bibitem [{\citenamefont {Bratteli}\ and\ \citenamefont {Robinson}(1987)}]{Bratteli1987}%
  \BibitemOpen
  \bibfield  {author} {\bibinfo {author} {\bibfnamefont {O.}~\bibnamefont {Bratteli}}\ and\ \bibinfo {author} {\bibfnamefont {D.~W.}\ \bibnamefont {Robinson}},\ }\href {\doibase 10.1007/978-3-662-02520-8} {\emph {\bibinfo {title} {Operator Algebras and Quantum Statistical Mechanics 1: C*- and W*-Algebras Symmetry Groups Decomposition of States}}},\ \bibinfo {edition} {2nd}\ ed.,\ Texts and Monographs in Physics\ (\bibinfo  {publisher} {Springer-Verlag Berlin Heidelberg},\ \bibinfo {year} {1987})\BibitemShut {NoStop}%
\bibitem [{\citenamefont {Schaefer}(1999)}]{Schaefer1999}%
  \BibitemOpen
  \bibfield  {author} {\bibinfo {author} {\bibfnamefont {H.~H.}\ \bibnamefont {Schaefer}},\ }\href {\doibase 10.1007/978-1-4612-1468-7} {\emph {\bibinfo {title} {Topological Vector Spaces}}},\ \bibinfo {edition} {2nd}\ ed.,\ \bibinfo {series} {Graduate Texts in Mathematics}, Vol.~\bibinfo {volume} {3}\ (\bibinfo  {publisher} {Springer-Verlag New York},\ \bibinfo {year} {1999})\BibitemShut {NoStop}%
\bibitem [{\citenamefont {Takagi}\ \emph {et~al.}(2019)\citenamefont {Takagi}, \citenamefont {Regula}, \citenamefont {Bu}, \citenamefont {Liu},\ and\ \citenamefont {Adesso}}]{RT1}%
  \BibitemOpen
  \bibfield  {author} {\bibinfo {author} {\bibfnamefont {R.}~\bibnamefont {Takagi}}, \bibinfo {author} {\bibfnamefont {B.}~\bibnamefont {Regula}}, \bibinfo {author} {\bibfnamefont {K.}~\bibnamefont {Bu}}, \bibinfo {author} {\bibfnamefont {Z.-W.}\ \bibnamefont {Liu}}, \ and\ \bibinfo {author} {\bibfnamefont {G.}~\bibnamefont {Adesso}},\ }\href {\doibase 10.1103/PhysRevLett.122.140402} {\bibfield  {journal} {\bibinfo  {journal} {Phys. Rev. Lett.}\ }\textbf {\bibinfo {volume} {122}},\ \bibinfo {pages} {140402} (\bibinfo {year} {2019})}\BibitemShut {NoStop}%
\bibitem [{\citenamefont {Takagi}\ and\ \citenamefont {Regula}(2019)}]{RT2}%
  \BibitemOpen
  \bibfield  {author} {\bibinfo {author} {\bibfnamefont {R.}~\bibnamefont {Takagi}}\ and\ \bibinfo {author} {\bibfnamefont {B.}~\bibnamefont {Regula}},\ }\href {\doibase 10.1103/PhysRevX.9.031053} {\bibfield  {journal} {\bibinfo  {journal} {Phys. Rev. X}\ }\textbf {\bibinfo {volume} {9}},\ \bibinfo {pages} {031053} (\bibinfo {year} {2019})}\BibitemShut {NoStop}%
\bibitem [{\citenamefont {Ducuara}\ and\ \citenamefont {Skrzypczyk}(2020)}]{DS2020}%
  \BibitemOpen
  \bibfield  {author} {\bibinfo {author} {\bibfnamefont {A.~F.}\ \bibnamefont {Ducuara}}\ and\ \bibinfo {author} {\bibfnamefont {P.}~\bibnamefont {Skrzypczyk}},\ }\href {\doibase 10.1103/PhysRevLett.125.110401} {\bibfield  {journal} {\bibinfo  {journal} {Phys. Rev. Lett.}\ }\textbf {\bibinfo {volume} {125}},\ \bibinfo {pages} {110401} (\bibinfo {year} {2020})}\BibitemShut {NoStop}%
\bibitem [{\citenamefont {Uola}\ \emph {et~al.}(2020{\natexlab{b}})\citenamefont {Uola}, \citenamefont {Bullock}, \citenamefont {Kraft}, \citenamefont {Pellonp\"a\"a},\ and\ \citenamefont {Brunner}}]{uola}%
  \BibitemOpen
  \bibfield  {author} {\bibinfo {author} {\bibfnamefont {R.}~\bibnamefont {Uola}}, \bibinfo {author} {\bibfnamefont {T.}~\bibnamefont {Bullock}}, \bibinfo {author} {\bibfnamefont {T.}~\bibnamefont {Kraft}}, \bibinfo {author} {\bibfnamefont {J.-P.}\ \bibnamefont {Pellonp\"a\"a}}, \ and\ \bibinfo {author} {\bibfnamefont {N.}~\bibnamefont {Brunner}},\ }\href {\doibase 10.1103/PhysRevLett.125.110402} {\bibfield  {journal} {\bibinfo  {journal} {Phys. Rev. Lett.}\ }\textbf {\bibinfo {volume} {125}},\ \bibinfo {pages} {110402} (\bibinfo {year} {2020}{\natexlab{b}})}\BibitemShut {NoStop}%
\bibitem [{\citenamefont {Lami}(2017)}]{Lami2017}%
  \BibitemOpen
  \bibfield  {author} {\bibinfo {author} {\bibfnamefont {L.}~\bibnamefont {Lami}},\ }\emph {\bibinfo {title} {Non-classical correlations in quantum mechanics and beyond}},\ \href {https://ddd.uab.cat/record/187745} {Ph.D. thesis},\ \bibinfo  {school} {Universitat Aut\`onoma de Barcelona} (\bibinfo {year} {2017})\BibitemShut {NoStop}%
\bibitem [{\citenamefont {Pl\'avala}(2023)}]{Plavala2023}%
  \BibitemOpen
  \bibfield  {author} {\bibinfo {author} {\bibfnamefont {M.}~\bibnamefont {Pl\'avala}},\ }\href {\doibase https://doi.org/10.1016/j.physrep.2023.09.001} {\bibfield  {journal} {\bibinfo  {journal} {Physics Reports}\ }\textbf {\bibinfo {volume} {1033}},\ \bibinfo {pages} {1} (\bibinfo {year} {2023})}\BibitemShut {NoStop}%
\bibitem [{\citenamefont {Takakura}(2022)}]{Takakura2022}%
  \BibitemOpen
  \bibfield  {author} {\bibinfo {author} {\bibfnamefont {R.}~\bibnamefont {Takakura}},\ }\emph {\bibinfo {title} {Convexity and uncertainty in operational quantum foundations}},\ \href {\doibase https://doi.org/10.48550/arXiv.2202.13834} {Ph.D. thesis},\ \bibinfo  {school} {Kyoto University} (\bibinfo {year} {2022})\BibitemShut {NoStop}%
\bibitem [{\citenamefont {Conway}(1985)}]{Conway1985}%
  \BibitemOpen
  \bibfield  {author} {\bibinfo {author} {\bibfnamefont {J.~B.}\ \bibnamefont {Conway}},\ }\href {\doibase 10.1007/978-1-4757-3828-5} {\emph {\bibinfo {title} {A Course in Functional Analysis}}},\ \bibinfo {edition} {1st}\ ed.,\ \bibinfo {series} {Graduate Texts in Mathematics}, Vol.~\bibinfo {volume} {96}\ (\bibinfo  {publisher} {Springer-Verlag New York},\ \bibinfo {year} {1985})\BibitemShut {NoStop}%
\bibitem [{\citenamefont {Aubrun}\ \emph {et~al.}(2021)\citenamefont {Aubrun}, \citenamefont {Lami}, \citenamefont {Palazuelos},\ and\ \citenamefont {Pl\'avala}}]{Aubrun2021}%
  \BibitemOpen
  \bibfield  {author} {\bibinfo {author} {\bibfnamefont {G.}~\bibnamefont {Aubrun}}, \bibinfo {author} {\bibfnamefont {L.}~\bibnamefont {Lami}}, \bibinfo {author} {\bibfnamefont {C.}~\bibnamefont {Palazuelos}}, \ and\ \bibinfo {author} {\bibfnamefont {M.}~\bibnamefont {Pl\'avala}},\ }\href {\doibase https://doi.org/10.1007/s00039-021-00565-5} {\bibfield  {journal} {\bibinfo  {journal} {Geometric and Functional Analysis}\ }\textbf {\bibinfo {volume} {31}},\ \bibinfo {pages} {181} (\bibinfo {year} {2021})}\BibitemShut {NoStop}%
\bibitem [{\citenamefont {Kuroiwa}\ \emph {et~al.}(2024{\natexlab{a}})\citenamefont {Kuroiwa}, \citenamefont {Takagi}, \citenamefont {Adesso},\ and\ \citenamefont {Yamasaki}}]{bc1}%
  \BibitemOpen
  \bibfield  {author} {\bibinfo {author} {\bibfnamefont {K.}~\bibnamefont {Kuroiwa}}, \bibinfo {author} {\bibfnamefont {R.}~\bibnamefont {Takagi}}, \bibinfo {author} {\bibfnamefont {G.}~\bibnamefont {Adesso}}, \ and\ \bibinfo {author} {\bibfnamefont {H.}~\bibnamefont {Yamasaki}},\ }\href {\doibase 10.1103/PhysRevLett.132.150201} {\bibfield  {journal} {\bibinfo  {journal} {Phys. Rev. Lett.}\ }\textbf {\bibinfo {volume} {132}},\ \bibinfo {pages} {150201} (\bibinfo {year} {2024}{\natexlab{a}})}\BibitemShut {NoStop}%
\bibitem [{\citenamefont {Kuroiwa}\ \emph {et~al.}(2024{\natexlab{b}})\citenamefont {Kuroiwa}, \citenamefont {Takagi}, \citenamefont {Adesso},\ and\ \citenamefont {Yamasaki}}]{bc2}%
  \BibitemOpen
  \bibfield  {author} {\bibinfo {author} {\bibfnamefont {K.}~\bibnamefont {Kuroiwa}}, \bibinfo {author} {\bibfnamefont {R.}~\bibnamefont {Takagi}}, \bibinfo {author} {\bibfnamefont {G.}~\bibnamefont {Adesso}}, \ and\ \bibinfo {author} {\bibfnamefont {H.}~\bibnamefont {Yamasaki}},\ }\href {\doibase 10.1103/PhysRevA.109.042403} {\bibfield  {journal} {\bibinfo  {journal} {Phys. Rev. A}\ }\textbf {\bibinfo {volume} {109}},\ \bibinfo {pages} {042403} (\bibinfo {year} {2024}{\natexlab{b}})}\BibitemShut {NoStop}%
\bibitem [{\citenamefont {Schluck}\ \emph {et~al.}(2023)\citenamefont {Schluck}, \citenamefont {Murta}, \citenamefont {Kampermann}, \citenamefont {Bruß},\ and\ \citenamefont {Wyderka}}]{bc3}%
  \BibitemOpen
  \bibfield  {author} {\bibinfo {author} {\bibfnamefont {J.}~\bibnamefont {Schluck}}, \bibinfo {author} {\bibfnamefont {G.}~\bibnamefont {Murta}}, \bibinfo {author} {\bibfnamefont {H.}~\bibnamefont {Kampermann}}, \bibinfo {author} {\bibfnamefont {D.}~\bibnamefont {Bruß}}, \ and\ \bibinfo {author} {\bibfnamefont {N.}~\bibnamefont {Wyderka}},\ }\href {\doibase 10.1088/1751-8121/acd500} {\bibfield  {journal} {\bibinfo  {journal} {Journal of Physics A: Mathematical and Theoretical}\ }\textbf {\bibinfo {volume} {56}},\ \bibinfo {pages} {255303} (\bibinfo {year} {2023})}\BibitemShut {NoStop}%
\bibitem [{\citenamefont {Ducuara}\ and\ \citenamefont {Skrzypczyk}(2022)}]{EUT_QSB}%
  \BibitemOpen
  \bibfield  {author} {\bibinfo {author} {\bibfnamefont {A.~F.}\ \bibnamefont {Ducuara}}\ and\ \bibinfo {author} {\bibfnamefont {P.}~\bibnamefont {Skrzypczyk}},\ }\href {\doibase 10.1103/PRXQuantum.3.020366} {\bibfield  {journal} {\bibinfo  {journal} {PRX Quantum}\ }\textbf {\bibinfo {volume} {3}},\ \bibinfo {pages} {020366} (\bibinfo {year} {2022})}\BibitemShut {NoStop}%
\bibitem [{\citenamefont {Ducuara}\ and\ \citenamefont {Skrzypczyk}(2023)}]{EUT_QSB2}%
  \BibitemOpen
  \bibfield  {author} {\bibinfo {author} {\bibfnamefont {A.~F.}\ \bibnamefont {Ducuara}}\ and\ \bibinfo {author} {\bibfnamefont {P.}~\bibnamefont {Skrzypczyk}},\ }\href@noop {} {\enquote {\bibinfo {title} {Fundamental connections between utility theories of wealth and information theory},}\ } (\bibinfo {year} {2023}),\ \Eprint {http://arxiv.org/abs/arXiv:2306.07975} {arXiv:2306.07975} \BibitemShut {NoStop}%
\bibitem [{\citenamefont {Lepp{\"{a}}j{\"{a}}rvi}\ and\ \citenamefont {Sedl{\'{a}}k}(2024)}]{instru0}%
  \BibitemOpen
  \bibfield  {author} {\bibinfo {author} {\bibfnamefont {L.}~\bibnamefont {Lepp{\"{a}}j{\"{a}}rvi}}\ and\ \bibinfo {author} {\bibfnamefont {M.}~\bibnamefont {Sedl{\'{a}}k}},\ }\href {\doibase 10.22331/q-2024-02-12-1246} {\bibfield  {journal} {\bibinfo  {journal} {{Quantum}}\ }\textbf {\bibinfo {volume} {8}},\ \bibinfo {pages} {1246} (\bibinfo {year} {2024})}\BibitemShut {NoStop}%
\bibitem [{\citenamefont {Lepp\"aj\"arvi}\ and\ \citenamefont {Sedl\'ak}(2021)}]{instru1}%
  \BibitemOpen
  \bibfield  {author} {\bibinfo {author} {\bibfnamefont {L.}~\bibnamefont {Lepp\"aj\"arvi}}\ and\ \bibinfo {author} {\bibfnamefont {M.}~\bibnamefont {Sedl\'ak}},\ }\href {\doibase 10.1103/PhysRevA.103.022615} {\bibfield  {journal} {\bibinfo  {journal} {Phys. Rev. A}\ }\textbf {\bibinfo {volume} {103}},\ \bibinfo {pages} {022615} (\bibinfo {year} {2021})}\BibitemShut {NoStop}%
\bibitem [{\citenamefont {Mitra}\ and\ \citenamefont {Farkas}(2022)}]{instru2}%
  \BibitemOpen
  \bibfield  {author} {\bibinfo {author} {\bibfnamefont {A.}~\bibnamefont {Mitra}}\ and\ \bibinfo {author} {\bibfnamefont {M.}~\bibnamefont {Farkas}},\ }\href {\doibase 10.1103/PhysRevA.105.052202} {\bibfield  {journal} {\bibinfo  {journal} {Phys. Rev. A}\ }\textbf {\bibinfo {volume} {105}},\ \bibinfo {pages} {052202} (\bibinfo {year} {2022})}\BibitemShut {NoStop}%
\bibitem [{\citenamefont {Mitra}\ and\ \citenamefont {Farkas}(2023)}]{instru3}%
  \BibitemOpen
  \bibfield  {author} {\bibinfo {author} {\bibfnamefont {A.}~\bibnamefont {Mitra}}\ and\ \bibinfo {author} {\bibfnamefont {M.}~\bibnamefont {Farkas}},\ }\href {\doibase 10.1103/PhysRevA.107.032217} {\bibfield  {journal} {\bibinfo  {journal} {Phys. Rev. A}\ }\textbf {\bibinfo {volume} {107}},\ \bibinfo {pages} {032217} (\bibinfo {year} {2023})}\BibitemShut {NoStop}%
\bibitem [{\citenamefont {Mitra}\ \emph {et~al.}(2024)\citenamefont {Mitra}, \citenamefont {Saha}, \citenamefont {Bhattacharya},\ and\ \citenamefont {Majumdar}}]{instru4}%
  \BibitemOpen
  \bibfield  {author} {\bibinfo {author} {\bibfnamefont {A.}~\bibnamefont {Mitra}}, \bibinfo {author} {\bibfnamefont {D.}~\bibnamefont {Saha}}, \bibinfo {author} {\bibfnamefont {S.}~\bibnamefont {Bhattacharya}}, \ and\ \bibinfo {author} {\bibfnamefont {A.~S.}\ \bibnamefont {Majumdar}},\ }\href {\doibase 10.1103/PhysRevA.109.062213} {\bibfield  {journal} {\bibinfo  {journal} {Phys. Rev. A}\ }\textbf {\bibinfo {volume} {109}},\ \bibinfo {pages} {062213} (\bibinfo {year} {2024})}\BibitemShut {NoStop}%
\bibitem [{\citenamefont {Mori}(2020)}]{instru5}%
  \BibitemOpen
  \bibfield  {author} {\bibinfo {author} {\bibfnamefont {J.}~\bibnamefont {Mori}},\ }\href {\doibase 10.1103/PhysRevA.101.032331} {\bibfield  {journal} {\bibinfo  {journal} {Phys. Rev. A}\ }\textbf {\bibinfo {volume} {101}},\ \bibinfo {pages} {032331} (\bibinfo {year} {2020})}\BibitemShut {NoStop}%
\bibitem [{\citenamefont {D’Ariano}\ \emph {et~al.}(2022)\citenamefont {D’Ariano}, \citenamefont {Perinotti},\ and\ \citenamefont {Tosini}}]{instru6}%
  \BibitemOpen
  \bibfield  {author} {\bibinfo {author} {\bibfnamefont {G.~M.}\ \bibnamefont {D’Ariano}}, \bibinfo {author} {\bibfnamefont {P.}~\bibnamefont {Perinotti}}, \ and\ \bibinfo {author} {\bibfnamefont {A.}~\bibnamefont {Tosini}},\ }\href {\doibase 10.1088/1751-8121/ac88a7} {\bibfield  {journal} {\bibinfo  {journal} {Journal of Physics A: Mathematical and Theoretical}\ }\textbf {\bibinfo {volume} {55}},\ \bibinfo {pages} {394006} (\bibinfo {year} {2022})}\BibitemShut {NoStop}%
\bibitem [{\citenamefont {Heinosaari}\ and\ \citenamefont {Miyadera}(2017)}]{instru7}%
  \BibitemOpen
  \bibfield  {author} {\bibinfo {author} {\bibfnamefont {T.}~\bibnamefont {Heinosaari}}\ and\ \bibinfo {author} {\bibfnamefont {T.}~\bibnamefont {Miyadera}},\ }\href {\doibase 10.1088/1751-8121/aa5f6b} {\bibfield  {journal} {\bibinfo  {journal} {Journal of Physics A: Mathematical and Theoretical}\ }\textbf {\bibinfo {volume} {50}},\ \bibinfo {pages} {135302} (\bibinfo {year} {2017})}\BibitemShut {NoStop}%
\bibitem [{\citenamefont {Yamada}\ and\ \citenamefont {Miyadera}(2024)}]{instru8}%
  \BibitemOpen
  \bibfield  {author} {\bibinfo {author} {\bibfnamefont {M.}~\bibnamefont {Yamada}}\ and\ \bibinfo {author} {\bibfnamefont {T.}~\bibnamefont {Miyadera}},\ }\href {\doibase 10.1103/PhysRevA.110.062210} {\bibfield  {journal} {\bibinfo  {journal} {Phys. Rev. A}\ }\textbf {\bibinfo {volume} {110}},\ \bibinfo {pages} {062210} (\bibinfo {year} {2024})}\BibitemShut {NoStop}%
\bibitem [{\citenamefont {Brand\~ao}(2005)}]{WoE_Brandao}%
  \BibitemOpen
  \bibfield  {author} {\bibinfo {author} {\bibfnamefont {F.~G. S.~L.}\ \bibnamefont {Brand\~ao}},\ }\href {\doibase 10.1103/PhysRevA.72.022310} {\bibfield  {journal} {\bibinfo  {journal} {Phys. Rev. A}\ }\textbf {\bibinfo {volume} {72}},\ \bibinfo {pages} {022310} (\bibinfo {year} {2005})}\BibitemShut {NoStop}%
\bibitem [{\citenamefont {Wilde}(2017)}]{Wilde_book}%
  \BibitemOpen
  \bibfield  {author} {\bibinfo {author} {\bibfnamefont {M.}~\bibnamefont {Wilde}},\ }\href {https://books.google.co.uk/books?id=gYcHDgAAQBAJ} {\emph {\bibinfo {title} {Quantum Information Theory}}}\ (\bibinfo  {publisher} {Cambridge University Press},\ \bibinfo {year} {2017})\BibitemShut {NoStop}%
\bibitem [{\citenamefont {Alfsen}(1971)}]{Alfsen1971}%
  \BibitemOpen
  \bibfield  {author} {\bibinfo {author} {\bibfnamefont {E.~M.}\ \bibnamefont {Alfsen}},\ }\href {\doibase 10.1007/978-3-642-65009-3} {\emph {\bibinfo {title} {Compact Convex Sets and Boundary Integrals}}},\ \bibinfo {series} {Ergebnisse der Mathematik und ihrer Grenzgebiete}, Vol.~\bibinfo {volume} {57}\ (\bibinfo  {publisher} {Springer-Verlag Berlin Heidelberg},\ \bibinfo {year} {1971})\BibitemShut {NoStop}%
\end{thebibliography}%

\onecolumngrid
\appendix
	
\section{Preliminaries}
\label{a:pre}
Consider a set of POVMs $\mathbb{M}_{A|X} \coloneqq \{M_{a|x}\}$, $M_{a|x}\geq 0$, $\forall a,x$, $\sum_a M_{a|x} = \mathds{1}$, $\forall x$, $x=1,...,\kappa$, with each POVM having elements $a=1,...,l$, a set of subchannels $\Phi_{B|Y} \coloneqq \{\Phi_{b|y}(\cdot)\}$, $\Phi_{b|y}(\cdot)$ a completely positive (CP) trace-nonincreasing map $\forall b,y$, $\sum_b \Phi_{b|y}(\cdot) = \Phi_y (\cdot)$ a trace-preserving (TP) map $\forall y$, $y=1,...,m$, $b=1,...,n$, and a quantum state $\rho$, $\rho \geq 0$, $\tr(\rho) = 1$. We start by rewriting the probability of success in quantum subchannel discrimination (QScD) with prior information \cite{Teiko_2018} as:
\begin{align}
    P^{\rm D}_{\rm succ}(
    p_Y,
    \Phi_{B|Y}, 
    \rho,
    \mathbb{M}_{A|X}
    )
    &\coloneqq
    \max_{\mathcal{S}}
    \sum_{g,a,x,z,b,y}
    \delta_{g,b}\,
    s(g|a,y,z)
    \tr[
    M_{a|x}
    \Phi_{b|y}
    (\rho)
    ]\,
    r(x|y,z)\,
    q(z)\,
    p(y)
    \\
    &=
    \max_{\mathcal{S}}
    \sum_{a,x,z,b,y}
    s(b|a,y,z)
    \tr[
    M_{a|x}
    \Phi_{b|y}
    (\rho)
    ]\,
    r(x|y,z)\,
    q(z)\,
    p(y)
    \\
    &=
    \max_{\mathcal{S}}
    \sum_{b,y}
    \tr
    \left[
    \left(
    \sum_{a,x,z}
    s(b|a,y,z)\,
    M_{a|x}\,
    r(x|y,z)\,
    q(z)
    \right)
    \Phi_{b|y}
    (\rho)
    \right]
    p(y)
    \\
    &=\max_{
        \mathbb{N}_{B|Y}
        \preceq \,
        \mathbb{M}_{A|X}
    }
    \sum_{b,y}
    \tr
    \left[
    N_{b|y}
    \Phi_{b|y}
    (\rho)
    \right] 
    p(y)
    ,\label{eq:app_sim}
\end{align}
with the set of strategies 
$\mathcal{S} = \{q_Z, p_{X|YZ}, p_{B|AYZ}\}$ and the simulability of POVM sets
$
\mathbb{N}_{B|Y}
\preceq \,
\mathbb{M}_{A|X}
$ defined as:
\begin{align}
    N_{b|y}
    =
    \sum_{a,x,z}
    s(b|a,y,z)\,
    M_{a|x}\,
    r(x|y,z)\,
    q(z)
    .
\end{align}
\begin{lemma}
    \label{lem}
    (Dual conic programs for the generalised robustness of state and POVM sets) The generalised robustness of resource of a state $\rho$ and a POVM set $\mathbb{M}=\{M_{a|x}\}$, $x\in\{1,...,\kappa\}$, $a\in\{1,...,l\}$ is given by:
    \begin{subequations}
        \begin{align}
            {\rm R_F}\left(\rho\right)=
            \max_{Z} \hspace{0.2cm}
            &\tr[Z\rho]-1, \label{eq:RoRs1}\\
            {\rm s. t.} \hspace{0.2cm}
            & Z\geq 0,\label{eq:RoRs2}\\
            &\tr[Z\sigma]\leq 1, \hspace{0.4cm} \forall \sigma \in {\rm F} ,
            \label{eq:RoRs3}
        \end{align}
    \end{subequations}
    and
    \begin{subequations}
        \begin{align}
            {\rm R_\mathbb{F}}
            \left(
            \mathbb{M}_{A|X}
            \right)
            =
            \max_{\{Z_{a,x}\}} 
            \hspace{0.2cm}
            &
            \sum_{x=1}^{\kappa}
            \sum_{a=1}^{l}
            \tr[
            Z_{a,x}
            M_{a|x}
            ]
            -1, 
            \label{eq:RoRm1}\\
            {\rm s. t.} 
            \hspace{0.2cm}
            & Z_{a,x}\geq 0, 
            \hspace{0.5cm}
            \forall a, x, 
            \label{eq:RoRm2}\\
            &
            \sum_{x=1}^{\kappa}
            \sum_{a=1}^{l}
            \tr[
            Z_{a,x} N_{a|x}
            ]
            \leq
            1, 
            \hspace{0.2cm}
            \forall \mathbb{N}_{A|X}=\{N_{a|x}\} \in \mathbb{F}.
            \label{eq:RoRm3}
        \end{align}
    \end{subequations}
    These are the dual conic formulations of the generalised robustnesses for states and POVM sets, respectively.
\end{lemma}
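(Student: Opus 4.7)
The plan is to obtain both dual formulations by rewriting the primal definition \eqref{eq:RoRs} as a linear conic program and then applying Lagrangian duality. The first step, common to both cases, is to absorb the multiplicative factor $(1+r)$ into the free object: setting $\tilde{\sigma} := (1+r)O^F$ for the state case and $\tilde{N}^F_{a|x} := (1+r)M^F_{a|x}$ for the POVM-set case, the unconstrained general object $O^G$ can be eliminated (using $rO^G \ge 0$) by replacing the equality constraint with the cone dominance $\tilde{\sigma} \ge \rho$, respectively $\tilde{N}^F_{a|x} \ge M_{a|x}$ for every $a,x$. This turns each primal into a minimisation of a linear functional over the conic hull of the free set (either $\mathrm{F}$ or $\mathbb{F}$) subject to a positive-semidefinite dominance constraint, which is the standard form to which conic duality applies.

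For the state case I would introduce a multiplier $Z \in \LL_S^+(\hi)$ for the dominance constraint $\tilde{\sigma}-\rho \ge 0$ and carry out the inner infimum of $\tr[(\mathds{1}-Z)\tilde{\sigma}]$ over the cone $\mathcal{F}(\hi)$. Finiteness of the Lagrangian forces $\mathds{1}-Z \in \mathcal{F}(\hi)^\circ$, which, using that $\mathcal{F}(\hi) = \mathit{cone}(\mathrm{F})$ and the trace-one normalisation of free states, unpacks to $\tr[Z\sigma] \le 1$ for all $\sigma \in \mathrm{F}$. The remaining dual objective is $\tr[Z\rho]-1$, which together with $Z \ge 0$ is exactly \eqref{eq:RoRs1}--\eqref{eq:RoRs3}.

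For the POVM-set case I would introduce one multiplier $Z_{a,x} \in \LL_S^+(\hi)$ per dominance constraint $\tilde{N}^F_{a|x} \ge M_{a|x}$, noting that the scale $r$ is encoded in the normalisation $\sum_{a}\tilde{N}^F_{a|x} = (1+r)\mathds{1}$ so that the primal objective becomes a linear functional of $\{\tilde{N}^F_{a|x}\}$ (after averaging over $x$, say). Computing the inner infimum over the conic hull of $\mathbb{F}$, the polar condition that emerges is precisely $\sum_{a,x}\tr[Z_{a,x}N_{a|x}] \le 1$ for every $\mathbb{N}_{A|X} \in \mathbb{F}$, and the dual objective reduces to $\sum_{a,x}\tr[Z_{a,x}M_{a|x}] - 1$, yielding \eqref{eq:RoRm1}--\eqref{eq:RoRm3}.

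To close the argument I would invoke strong duality via Slater's condition, which is immediate in finite dimension because the primal admits a strictly interior feasible point (any sufficiently large scalar multiple of a full-rank free object works, and such objects exist in the natural QRTs covered here). I expect the main technical step to lie in the POVM-set case, specifically in correctly packaging the per-$x$ normalisation constraints $\sum_{a}\tilde{N}^F_{a|x} = (1+r)\mathds{1}$ together with the closure of $\mathbb{F}$ under CPreP/CPosP, so that the polar of $\mathit{cone}(\mathbb{F})$ is characterised precisely by the joint linear functionals $\{Z_{a,x}\}$ appearing in \eqref{eq:RoRm3}; the state case is, by comparison, a textbook one-variable conic dual.
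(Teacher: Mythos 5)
Your overall route coincides with the paper's: reparametrise the primal by $\tilde O:=(1+r)O^F$, eliminate $O^G$ in favour of the cone--dominance constraint $\tilde O\ge O$, read off the objective as a linear functional of $\tilde O$ (in the POVM-set case via the per-$x$ normalisation $\sum_a\tilde N^F_{a|x}=(1+r)\mathds{1}$, which the paper implements by pairing with $\tfrac{1}{\kappa}\oplus_{x,a}\rho_0$ for an arbitrary state $\rho_0$), form the Lagrangian, and extract the polar conditions $\tr[Z\sigma]\le1$ and $\sum_{a,x}\tr[Z_{a,x}N_{a|x}]\le1$ from membership of the residual functional in $\mathit{cone}(\mathrm{F})^\circ$, resp.\ $\mathit{cone}(\mathbb{F})^\circ$. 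Up to the point of strong duality your argument and the paper's are the same computation.

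The one step that would fail as you state it is the Slater certificate. You invoke \emph{primal} strict feasibility, claiming that a large scalar multiple of a full-rank free object is a strictly interior feasible point. But the lemma is asserted for an arbitrary CCC of free objects, and for many of the intended examples the cone $\mathit{cone}(\mathrm{F})$ (or $\mathit{cone}(\mathbb{F})$) has \emph{empty interior} in the ambient space: for coherence the free cone consists of diagonal positive operators, and for any set of free POVM sets the cone is confined to the proper subspace $\{\,\{O_{a|x}\}\mid \sum_a O_{a|x}=t\,\mathds{1}\ \forall x,\ t\ge0\,\}$. No point of such a cone is interior, so "a sufficiently large multiple of a full-rank free object" is not a Slater point in the sense required; you would have to pass to relative interiors and argue that your candidate lies in $\mathrm{relint}(\mathit{cone}(\mathbb{F}))$, which is not automatic for an abstract $\mathbb{F}$. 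The paper avoids this entirely by exhibiting a \emph{strictly feasible dual point} instead, namely $Z_{a|x}=\mathds{1}/(2d\kappa)$ (resp.\ $Z=\mathds{1}/(2d)$ for states), which lies in the interior of the positive-semidefinite cone and satisfies the normalisation inequality strictly for \emph{every} free object, independently of the structure of $\mathbb{F}$. You should replace your primal Slater argument by this dual certificate (or supply the missing relative-interior argument); otherwise the derivation is sound and matches the paper's.
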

	
\begin{proof}
    The dual version of the generalised robustness of resource of a state is a known and well reported  case in the literature (see for instance \cite{WoE_Brandao, RT1}), and so we only address here the case for POVM sets. This proof follows similar techniques to that of states, and we present it below for completeness. 
    For simplicity, we use a symbol $\overset{x,a}{\oplus} K$ to represent the direct sum of the identical subset (or element) $K$ of the linear space $\LL_{S}(\hi)$, i.e., $\overset{x,a}{\oplus} K=\oplus_{x=1}^\kappa\oplus_{a=1}^l K=K\oplus K\oplus \cdots\oplus K$.
    A POVM set $\mathbb{M}_{A|X}=\{M_{a|x}\}$ ($x\in\{1,...,\kappa\}$, $a\in\{1,...,l\}$) is naturally an element of the direct sum $\overset{x,a}{\oplus} \LL_{S}(\hi)$, and the generalised robustness of resource of $\mathbb{M}=\{M_{a|x}\}$ is given as
    \begin{align}
            {\rm R_\mathbb{F}}
        \left(
        \mathbb{M}_{A|X}
        \right)
        &\coloneqq
        \begin{matrix}
            \text{\small \rm min}\\
            r \geq 0\\
            \mathbb{M}_{A|X}^F \in {\rm \mathbb{F}} \\
            \mathbb{M}_{A|X}^G \\
        \end{matrix}
        \left\{ 
        \rule{0cm}{0.6cm} r\,\bigg| \, 
        \mathbb{M}_{A|X}+r \mathbb{M}_{A|X}^G=(1+r)\mathbb{M}_{A|X}^F
        \right\}.
        \label{rob}
    \end{align}
    Let us choose an arbitrary $\rho_0\in\mathcal{D}(\hi)$ to construct $\overset{x,a}{\oplus}\rho_0\in \overset{x,a}{\oplus}\LL(\hi)$.
     The direct sum $\overset{x,a}{\oplus}\LL(\hi)$ equips an inner product $\ang{\cdot,\cdot}$ induced naturally by the Hilbert-Schmidt inner product $\ang{\cdot,\cdot}_{\mathrm{HS}}$ on $\LL(\hi)$.
    Since $\ang{\mathbb{M}_{A|X}^F,\overset{x,a}{\oplus}\rho_0}=\sum_{x,a}\ang{M^F_{a|x},\rho_0}_{\mathrm{HS}}=\kappa$ holds, 
    we can rewrite \eqref{rob} as a conic program with primal variables $O_{A|X}:=(1+r)\mathbb{M}_{A|X}^F$:
    \begin{align}
        1
        +
        {\rm R_\mathbb{F}}
        \left(
        \mathbb{M}_{A|X}
        \right)
        =
        \min_{O_{A|X}} \hspace{0.2cm}
        &	\ang{O_{A|X},\frac{1}{\kappa}\overset{x,a}{\oplus}\rho_0}
        \\
        {\rm s. t.} \hspace{0.2cm}
        & \mathbb{M}_{A|X}\leq_{\overset{x,a}{\oplus}\LL_{S}^+} O_{A|X},\\
        &
        O_{A|X} \in \mathfrak{F}(\hi),
    \end{align}
    where we defined $\mathbb{M}_{A|X}^F\ge_{\overset{x,a}{\oplus}\LL_S^+}O_{A|X}$ by $O_{A|X}-\mathbb{M}_{A|X}\in \overset{x,a}{\oplus}\LL_{S}^+$ and $\mathfrak{F}(\hi):=\mathit{cone}(\mathbb{F})$. Expressing $O_{A|X}=\{O_{a|x}\}$, we explicitly have
    \begin{align}
        1
        +
        {\rm R_\mathbb{F}}
        \left(
        \mathbb{M}_{A|X}
        \right)
        =
        \min_{O_{A|X}} \hspace{0.2cm}
        &\frac{1}{\kappa}
        \sum_{a,x}
        \tr
        \left[
        O_{a|x}\rho_0
        \right], 
        \label{eq:RoRMS1}\\
        {\rm s. t.} \hspace{0.2cm}
        & M_{a|x}\leq O_{a|x}, \hspace{0.2cm} \forall a,x,\label{eq:RoRMS2}\\
        &O_{A|X} \in \mathfrak{F}(\hi).
    \end{align}
    The primal constraints can alternatively be written as i) $O_{a|x}-M_{a|x} \geq 0$, $\forall a,x$, and ii) $\tr[O_{a|x}Q_{a|x}] \geq 0$, $\forall a,x$, $\forall Q_{A|X}\in \mathfrak{F}(\hi)^\circ$. Consider now a set of dual variables $Z_{A|X} \coloneqq \{Z_{a|x}\}\in\overset{x,a}{\oplus}\LL(\hi)$ with a first dual constraint i) $Z_{A|X} \geq_{\overset{x,a}{\oplus}\LL_S^+(\hi)} 0$. Similarly, for the second primal constraints consider $g(Q_{A|X})\geq 0$, $\forall Q_{A|X} \in \mathfrak{F}(\hi)^\circ$. Let us now construct the Lagrangian:
    \begin{align}
        \mathcal{L}
        \left(
        O_{A|X}
        ,
        Z_{A|X}
        ,
        \mathbb{M}_{A|X}
        \right)
        &\coloneqq
        \frac{1}{\kappa}
        \ang{O_{A|X},\overset{x,a}{\oplus}\rho_0}
        -
        \ang{
            \left(
            O_{A|X}
            -
            \mathbb{M}_{A|X}
            \right),
            Z_{A|X}
        }
        -
        \int_{
            Q_{A|X}\in \mathfrak{F}(\hi)^\circ
        }
        \hspace{-1.2cm}
        dQ\,
        g(Q_{A|X})
        \ang
        {
            O_{A|X},
            Q_{A|X}
        }
        \\	
        &=
        \ang
        {
        \mathbb{M}_{A|X},~
        Z_{A|X}
    }   
        +
        \ang
    {
        O_{A|X},~
        \left(
        \frac{1}{\kappa}\overset{x,a}{\oplus}\rho_0
        -
        Z_{A|X}
        -
        Q_{A|X}'
        \right)
    }
        ,
        \hspace{0.1cm}
        Q_{A|X}'
        \coloneqq
        \hspace{-0.2cm}
        \int_{
            Q_{A|X}\in \mathfrak{F}(\hi)^\circ
        }
        \hspace{-1.2cm}
        dQ\,
        g(Q_{A|X})
        Q_{A|X}\\
        &\left(=
        \sum_{a,x}
        \tr
        \left[
        M_{a|x}
        Z_{a|x}
        \right]   
        +
        \sum_{a,x}
        \tr
        \left[ 
        O_{a|x}
        \left(
        \frac{1}{\kappa}\rho_0
        -
        Z_{a|x}
        -
        Q_{a|x}'
        \right)
        \right]\right)
        .
    \end{align} 
We have that $Q_{A|X}'\coloneqq\{Q_{a|x}'\} \in \mathfrak{F}(\hi)^\circ$ because $Q_{A|X} \in \mathfrak{F}(\hi)^\circ$ and $\mathfrak{F}(\hi)^\circ$ is a convex cone. By construction, the Lagrangian satisfies
$
\mathcal{L}
(
O_{A|X}
,
Z_{A|X}
,
\mathbb{M}_{A|X}
)
\leq
1
+
{\rm R_\mathbb{F}}
\left(
\mathbb{M}_{A|X}
\right)
$. We can now eliminate the Lagrangian dependence on the primal variables by imposing suitable constraints on the dual variables as ii)
$
\frac{1}{\kappa}\overset{x,a}{\oplus}\rho_0
-
Z_{A|X}
-
Q_{A|X}'=0
$
(i.e., $\frac{1}{\kappa}\rho_0-Z_{a|x}
-
Q_{a|x}'=0$, $\forall a,x$)
We can then multiply these second dual constraints with $\mathbb{N}_{A|X}=\{N_{a|x}\} \in \mathbb{F}$ to get
$
1=
\ang{
Z_{A|X},
N_{A|X}
}
+
\langle
Q_{A|X}',
N_{A|X}
\rangle
$. The last term in the r.h.s is non-negative (since $Q_{A|X}' \in \mathfrak{F}(\hi)^\circ$ and $N_{A|X}\in \mathfrak{F}(\hi)$) and so we get
$
1
\geq
\ang{
Z_{A|X}, 
N_{A|X}
}
$. 
Maximising the Lagrangian over the dual variables subject to the dual constraints achieves the upper bound because of strong duality. Strong duality in turn follows from Slatter's condition, since there exists a strictly feasible choice for $Z_{A|X}$, take for instance $Z_{a|x} = \frac{\mathds{1}}{2d\kappa}$, $\forall a,x$. This choice satisfies i) $Z_{a|x}>0$, $\forall a,x$, and ii) $\ang{Z_{A|X},N_{A|X}}=\sum_{a,x} \tr[Z_{a|x}N_{a|x}] = \frac{1}{d\kappa} \sum_{a,x} \tr[N_{a|x}] = \frac{1}{2d\kappa} \sum_{x} \tr[\mathds{1}] =\frac{1}{2} <1$. Overall, the dual conic program reads:
\begin{align*}
    1
    +
    {\rm R_\mathbb{F}}
    \left(
    \mathbb{M}_{A|X}
    \right)
    =
    \max_{Z_{A|X}} \hspace{0.2cm}
    &
    \ang{
    M_{a|x},
    Z_{a|x}
},\\
    {\rm s. t.} \hspace{0.2cm}
    & Z_{A|X}\geq_{\overset{x,a}{\oplus}\LL_{S}^+} 0, \hspace{0.2cm} \\
    &
    \ang{
    Z_{A|X}, 
    \mathbb{N}_{A|X}
}
    \leq 1,
    \hspace{0.2cm}
    \forall \mathbb{N}_{A|X} \in \mathbb{F},
\end{align*}
or
\begin{align*}
    1
    +
    {\rm R_\mathbb{F}}
    \left(
    \mathbb{M}_{A|X}
    \right)
    =
    \max_{Z_{A|X}} \hspace{0.2cm}
    &
    \sum_{a,x}
    \tr
    \left[
    M_{a|x}
    Z_{a|x}
    \right],\\
    {\rm s. t.} \hspace{0.2cm}
    & Z_{a|x}\geq 0, \hspace{0.2cm} \forall a,x,\\
    &
    \sum_{a,x}
    \tr
    \left[ 
    Z_{a|x} 
    N_{a|x}
    \right]
    \leq 1,
    \hspace{0.2cm}
    \forall \mathbb{N}_{A|X} \in \mathbb{F},
\end{align*}
and thus achieving the claim.	
\end{proof}

Lemma \ref{lem} can be also rewritten for the exclusion scenario. In short, it reads:

\begin{lemma} 
\label{lem:2}
(Dual conic programs for the weight of resource of states and POVM sets) The weight of resource of a state $\rho$ and a POVM set $\mathbb{M} = \{M_{a|x}\}$ for $x \in \{1, \ldots, \kappa\}, a \in \{1, \ldots, l\}$ can be written as:
	\begin{equation}\label{eq:WoRs3}
		\begin{aligned}
			{\rm W_F}(\rho) = \max_Y
			\hspace{0.2cm} 
			&\tr[(-Y)\rho] + 1, \\
			{\rm s. t.} \hspace{0.2cm}
			& Y\geq 0,\\
			&\tr[Y\sigma]\geq 1, \hspace{0.4cm} \forall \sigma \in {\rm F} ,
		\end{aligned}
	\end{equation}

    and
\begin{equation}\label{eq:WoRm3}
	\begin{aligned}
	{\rm W_{F}}(\mathbb{M}_{a|x}) = \max_{\{Y_{a,x}\}}\hspace{0.2cm} &\sum_{x=1}^{\kappa} \sum_{a=1}^{l} \tr[(-Y_{a,x}) M_{a|x}] + 1,
		\\
		{\rm s. t.} 
		\hspace{0.2cm}
		& Y_{a,x}\geq 0, 
		\hspace{0.5cm}
		\forall a, x, 
		\\
		&
		\sum_{x=1}^{\kappa}
		\sum_{a=1}^{l}
		\tr[
		Y_{a,x} N_{a|x}
		]
		\geq
		1, 
		\hspace{0.2cm}
		\forall \mathbb{N}_{A|X}=\{N_{a|x}\} \in \mathbb{F},
	\end{aligned}
\end{equation}
respectively. 
\end{lemma}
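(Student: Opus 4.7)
The plan is to derive both dual programs in parallel by mirroring the Lagrangian technique used in Lemma 1, but starting from a primal formulation tailored to the weight rather than the generalised robustness. For the state case, I would reformulate the decomposition $\rho = w\rho^G + (1-w)\rho^F$ by setting $P := (1-w)\rho^F$. Then $P$ lies in the cone $\mathcal{F}(\hi)$ generated by $\mathrm{F}$, and the condition that $\rho - P = w\rho^G$ be a valid subnormalised state reduces to the single positivity requirement $\rho - P \geq 0$, with $\tr[P] = 1-w$ handled automatically by trace conservation. Minimising $w$ thus becomes the primal conic program
\begin{align}
1 - {\rm W_F}(\rho) = \max_{P}\, \{\, \tr[P] \,\mid\, \rho - P \geq 0,\ P \in \mathcal{F}(\hi)\,\}.
\end{align}

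Next I would construct the Lagrangian with a PSD multiplier $Z \geq 0$ for $\rho - P \geq 0$ and an element $Q' \in \mathcal{F}(\hi)^\circ$ arising from the integrated polar-cone encoding $\tr[PQ] \geq 0$ for all $Q \in \mathcal{F}(\hi)^\circ$, exactly as in Lemma 1:
\begin{align}
\mathcal{L}(P, Z, Q') = \tr[P] + \tr[Z(\rho - P)] + \tr[Q' P] = \tr[Z\rho] + \tr[P(\mathds{1} - Z + Q')].
\end{align}
Boundedness of $\sup_P \mathcal{L}$ forces $\mathds{1} - Z + Q' = 0$, hence $Q' = Z - \mathds{1} \in \mathcal{F}(\hi)^\circ$. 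Pairing $Q'$ with any normalised free state $\sigma \in \mathrm{F}$ and using $\tr[\sigma] = 1$ gives $\tr[Z\sigma] \geq 1$, so the dual becomes $\min_Z \tr[Z\rho]$ over $Z \geq 0$ with $\tr[Z\sigma] \geq 1$ for all $\sigma \in \mathrm{F}$. Renaming $Z \to Y$ and writing ${\rm W_F}(\rho) = 1 - \min_Y \tr[Y\rho] = \max_Y \tr[-Y\rho] + 1$ recovers \eqref{eq:WoRs3}.

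For the POVM-set case I would repeat the construction on the direct-sum space $\overset{x,a}{\oplus}\LL_S(\hi)$, mirroring Lemma 1 step by step. Setting $O_{A|X} := (1-w)\mathbb{M}^F_{A|X}$, choosing an arbitrary $\rho_0 \in \mathcal{D}(\hi)$, and using $\ang{\mathbb{M}^F_{A|X}, \overset{x,a}{\oplus}\rho_0} = \kappa$ give the primal
\begin{align}
1 - {\rm W_\mathbb{F}}(\mathbb{M}_{A|X}) = \max_{O_{A|X}}\ \tfrac{1}{\kappa}\sum_{a,x}\tr[O_{a|x}\rho_0] \quad \text{s.t.} \quad M_{a|x} - O_{a|x} \geq 0,\ O_{A|X} \in \mathfrak{F}(\hi),
\end{align}
with $\mathfrak{F}(\hi) := \mathit{cone}(\mathbb{F})$. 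The same Lagrangian calculation with multipliers $Z_{a|x} \geq 0$ and $Q'_{A|X} \in \mathfrak{F}(\hi)^\circ$ yields the stationarity condition $Z_{a|x} = \tfrac{1}{\kappa}\rho_0 + Q'_{a|x}$; pairing with any $\mathbb{N}_{A|X} \in \mathbb{F}$ and using $\sum_{a,x}\tr[\tfrac{1}{\kappa}\rho_0\, N_{a|x}] = 1$ gives $\sum_{a,x}\tr[Z_{a|x}N_{a|x}] \geq 1$, and the same rearrangement (after $Z \to Y$) delivers \eqref{eq:WoRm3}.

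The main technical point requiring care is strong duality, since the dual inequalities flip direction relative to Lemma 1 (becoming $\tr[Y\sigma]\geq 1$ instead of $\leq 1$), so the strictly feasible dual points of Lemma 1 no longer work. I would verify Slater's condition in the weight setting by exhibiting new strictly feasible dual points: $Y = 2\mathds{1}$ satisfies $Y > 0$ and $\tr[Y\sigma] = 2 > 1$ for any state $\sigma$; and $Y_{a,x} = \tfrac{2\mathds{1}}{\kappa d}$ satisfies $Y_{a,x} > 0$ together with $\sum_{a,x}\tr[Y_{a,x}N_{a|x}] = \tfrac{2}{\kappa d}\sum_x \tr[\mathds{1}] = 2 > 1$ for every $\mathbb{N}_{A|X}\in\mathbb{F}$. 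Combined with closedness and convexity of $\mathcal{F}(\hi)$ and $\mathfrak{F}(\hi)$, this ensures strong duality and attainment of the optima, closing the proof.
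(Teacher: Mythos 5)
Your proof is correct and follows essentially the same route the paper intends: the paper gives no separate proof of this lemma, stating only that it is Lemma~1 ``rewritten for the exclusion scenario,'' and your derivation is precisely that adaptation --- the same primal reformulation on the cone $\mathcal{F}(\hi)$ (resp.\ $\mathfrak{F}(\hi)$ on the direct-sum space), the same Lagrangian with a PSD multiplier plus a polar-cone element, and Slater points correctly adjusted for the flipped inequalities. The sign bookkeeping ($1-{\rm W}$ as a primal maximum, dual minimum rewritten as $\max\,\tr[(-Y)\,\cdot\,]+1$) and the strict-feasibility checks ($Y=2\mathds{1}$ and $Y_{a,x}=2\mathds{1}/\kappa d$) are all valid.
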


\section{Proof of \cref{r:r1}}
\label{a:r1}
	
Consider a set of free states $F$ and a set of free POVM assemblages $\mathbb{F}$. The statement we want to prove is:
\begin{align}
    \hspace{-0.1cm}
    \max_{
    \{p_Y,\Phi_{B|Y}\}
    } 
    \frac{
        P^{\rm D}_{\rm succ}(
        p_Y,
        \Phi_{B|Y}, 
        \rho,
        \mathbb{M}_{A|X}
        )
    }{
        \displaystyle
        \max_{
            \substack{
                \sigma \in {\rm F}\\
                \mathbb{N}_{A|X} \in \mathbb{F}
            }
        }
        P^{\rm D}_{\rm succ}(
        p_Y,
        \Phi_{B|Y}, 
        \sigma,
        \mathbb{N}_{A|X}
        )
    }
    = \hspace{-0.1cm}
    \Big [1+{\rm R_F}(\rho)\Big]
    \Big[1+{\rm R_\mathbb{F}}
    (
    \mathbb{M}_{A|X}
    ) \Big],
    \label{eq:r1a}
\end{align}
with the maximisation over all sets of subchannels. We start by proving the upper bond.
\begin{proof}(Upper bound)
    Given \emph{any} game $(p_Y,\Phi_{B|Y})$ and \emph{any} pair $(\rho, \mathbb{M}_{A|X})$ we have:
    \begin{align}
        &P^{\rm D}_{\rm succ}
        (
        p_Y,
        \Phi_{B|Y}, 
        \rho,
        \mathbb{M}_{A|X}
        )
        \nonumber
        \\
        &
        =
        \max_{
            \mathbb{N}_{B|Y}
            \preceq \,
            \mathbb{M}_{A|X}
        }
        \sum_{b,y}
        \tr[
        N_{b|y}
        \Phi_{b|y}
        (\rho)
        ]\,
        p(y)
        \nonumber
        \\
        &\leq 
        \Big[
        1+{\rm R_F}(\rho)
        \Big] 
        \max_{
            \mathbb{N}_{B|Y}
            \preceq \,
            \mathbb{M}_{A|X}
        }
        \sum_{b,y} 
        \tr[
        N_{b|y}
        \Phi_{b|y}
        (\sigma^*)
        ]\,
        p(y)
        \nonumber\\
        &\leq \Big[1+{\rm R_F}(\rho)\Big]
        \max_{\sigma \in {\rm F}}
        \max_{
            \mathbb{N}_{B|Y}
            \preceq 
            \mathbb{M}_{A|X}
        }
        \sum_{b,y} 
        \tr[
        N_{b|y}
        \Phi_{b|y}(\sigma)
        ]
        p(y)
        \nonumber\\
        & = \Big[1+{\rm R_F}(\rho)\Big]
        \max_{\sigma \in {\rm F}}
        \max_{
            \mathcal{S}
        }
        \sum_{b,y} 
        \tr
        \left[
        \left(
        \sum_{a,x,\mu}
        p(b|a,y,\mu)\,
        p(x|y,\mu)\,
        p(\mu)\,
        M_{a|x}
        \right)
        \Phi_{b|y}
        (\sigma)
        \right]
        p(y)
        \nonumber\\
        &\leq 
        \Big[1+{\rm R_F}(\rho)\Big]
        \Big[1+{\rm R_\mathbb{F}}(\mathbb{M}_{A|X})
        \Big] 
        \max_{\sigma \in {\rm F}}
        \max_{
            \mathcal{S}
        }
        \sum_{b,y} 
        \tr
        \left[
        \left(
        \sum_{a,x}
        p(b|a,y,\mu)\,
        p(x|y,\mu)\,
        p(\mu)\,
        M_{a|x}
        \tilde N^*_{a|x}
        \right)
        \Phi_{b|y}
        (\sigma)
        \right]
        p(y)
        \nonumber\\
        & =
        \Big[1+{\rm R_F}(\rho)\Big]
        \Big[1+{\rm R_\mathbb{F}}(\mathbb{M}_{A|X})
        \Big]
        \max_{\sigma \in {\rm F}}
        \max_{
            \mathbb{\dbtilde{N}}_{B|Y}
            \preceq \,
            \mathbb{\tilde N}^*_{A|X}
        }
        \sum_{b,y} 
        \tr
        \left[
        \dbtilde{N}_{b|y}
        \Phi_{b|y}
        (\sigma)
        \right]
        p(y)
        \nonumber\\
        &\leq 
        \Big[1+{\rm R_F}(\rho)\Big]
        \Big[1+{\rm R_\mathbb{F}}(\mathbb{M}_{A|X})
        \Big]
        \max_{\sigma \in {\rm F}}
        \max_{
            \mathbb{\tilde N}_{A|X} \in \mathbb{F}
        }
        \max_{
            \mathbb{\dbtilde{N}}_{B|Y}
            \preceq \,
            \mathbb{\tilde N}_{A|X}
        }
        \sum_{b,y} 
        \tr
        \left[
        \dbtilde{N}_{b|y}
        \Phi_{b|y}
        (\sigma)
        \right]
        p(y)
        \nonumber\\
        &=
        \Big[
        1+{\rm R_F}(\rho)
        \Big]
        \Big[
        1+{\rm R_\mathbb{F}}
        (\mathbb{M}_{A|X})
        \Big]
        \max_{\sigma \in {\rm F}}
        \max_{
            \mathbb{\tilde N}_{A|X}
            \in 
            \mathbb{F}
        }
        P^{\rm D}_{\rm succ}(
        p_Y,
        \Phi_{B|Y},
        \sigma,
        \mathbb{\tilde{N}}_{A|X}
        ).
        \label{eq:upperbound}
    \end{align}
    In the first inequality we use $\Phi_{b|y}(\rho)\leq [1+{\rm R_F}(\rho)] \Phi_{b|y}(\sigma^*)$, $\forall x$, which follows from $\rho \leq [1+{\rm R_F}(\rho)]\sigma^*$ ($\sigma^*$ the free state from the definition of the generalised robustness) and $\Phi_{b|y}(\cdot)$ being positive $\forall b,y$. In the second inequality we maximise over all free states. In the third inequality, we use $M_{a|x} \leq [1+{\rm R_\mathbb{F}}( \mathbb{M}_{A|X} )] \tilde N_{a|x}^*$, $\forall a,x$, $N_{A|X}^*$ the free measurement assemblage from the definition of the generalised robustness. In the fourth inequality we maximise over all free measurement assemblages.
\end{proof}
	
We now prove the upper bound is achievable by using the dual conic programs. We also use the following CPosP operation. Given an arbitrary POVM $\mathbb{N}=\{N_a\}$ with $a\in\{1,...,K+N\}$, $N$ and $K$ integers, we then construct the POVM $ \mathbb{\tilde N}=\{\tilde N_x\}$ with $K$ elements as:
\begin{align}
    \nonumber \tilde N_x& 
    \coloneqq 
    N_x, \hspace{1cm}
    x\in \{1,...,K-1\},\\
    \tilde N_K& 
    \coloneqq 
    N_k
    +
    \sum_{y=K+1}^{K+N}
    N_y
    . 
    \label{eq:Ntilde}
\end{align}
This constitutes a POVM and the operation transforming $\mathbb{N}$ into $\mathbb{\tilde N}$ is a CPP operation on the initial measurement $\mathbb{N}$. In short, that any outcome of $\mathds{N}$ greater or equal than $K$ is now declared as outcome $K$.
	
\begin{proof}(Achievability)
    We start by considering a pair $(\rho,\mathbb{M}_{A|X})$. Using the dual conic formulations, there exist an operator $Z^\rho$ satisfying the conditions \eqref{eq:RoRs1}, \eqref{eq:RoRs2}, \eqref{eq:RoRs3} and a set of operators $\{Z^{\mathbb{M}_{A|X}}_{a|x}\} $, $x=1,...,\kappa$, $a=1,...,l$, satisfying \eqref{eq:RoRm1}, \eqref{eq:RoRm2}, and \eqref{eq:RoRm3}. Consider also any PMF $p_X$. We now define a set of maps $\{\Phi^{ ( \rho, \mathbb{M}_{A|X},p_X)}_{a|x} (\cdot)\}$ such that for any state $\eta$:
    \begin{align*}
        \Phi^{
            (
            \rho,
            \mathbb{M}_{A|X},
            p_X
            )
        }_{a|x}
        (\eta)
        &\coloneqq
        \alpha^{
            (
            \rho,
            \mathbb{M}_{A|X},
            p_X
            )
        }
        \tr[Z^\rho \eta]
        Z^{\mathbb{M}_{A|X}}_{a|x}
        p(x)^{-1}
        ,\\
        \alpha^{
            (
            \rho,\mathbb{M}_{A|X},p_X
            )
        }
        &\coloneqq 
        \frac{1}{
            \norm{Z^\rho}_{\mathrm{op}}
            \tr
            [
            Z^{\mathbb{M}_{A|X}}
            ]
        },
        \hspace{0.5cm}
        Z^{\mathbb{M}_{A|X}}
        \coloneqq
        \sum_{x=1}^{k}
        \sum_{a=1}^{l} Z^{\mathbb{M}_{A|X}}_{a|x}
        p(x)^{-1},
    \end{align*}
    with $\norm{\cdot}_\mathrm{op}$ the operator norm. 
    For simplicity, we use the notation $\alpha \equiv \alpha^{(\rho, \mathbb{M}_{A|X}, p_X)}$. We can check that these maps are completely-positive, linear, and that they satisfy that $\forall \eta$, $\forall x$:
    \begin{align}\label{F}
        F_x(\eta) 
        \coloneqq 
        \tr \left[
        \sum_{a=1}^{l} 
        \Phi^{
            (
            \rho,
            \mathbb{M}_{A|X},
            p_X
            )
        }_{a|x}
        (\eta)
        \right]
        =
        \frac{
            \tr[Z^\rho \eta]
        }{
            \norm{Z^\rho}_\mathrm{op}
        }
        \frac{
            \tr
            \left[
            \sum_{a=1}^{l} Z^{\mathbb{M}_{A|X}}_{a|x}
            p(x)^{-1}
            \right]
        }{
            \tr
            \left[
            Z^{\mathbb{M}_{A|X}}
            \right]
        }
        \leq 1.
    \end{align}
    The inequality follows from the variational characterisation of the operator norm $\norm{C}_{\mathrm{op}} = \max_{\rho\in\mathcal{D}(\hi)}\{|\tr[C\rho]|\}$ for any Hermitian operator $C$ \cite{Wilde_book}. We now define an instrument set as follows. Given a pair $(\rho, \mathbb{M}_{A|X})$, $\mathbb{M}_{A|X} = \{M_{a|x}\}$, $x=1,...,\kappa$, $a=1,...,l$, and an integer $J \geq 1$, we define the set of subchannels given by $\Psi^{(\rho, \mathbb{M}_{A|X}, p_X, J)} = \{\Psi^{ (\rho, \mathbb{M}_{A|X}, p_X, J)}_{b|y}(\cdot) \}$, $y=1,...,\kappa$, $b=1,...,l+J$, as:
    \begin{align}
        \Psi^{
            (\rho,
            \mathbb{M}_{A|X}, p_X
            J)
        }_{b|y}
        (\eta)
        \coloneqq
        \bigg \{
        \begin{matrix}
            \alpha
            \tr[Z^\rho \eta] 
            Z^{\mathbb{M}_{A|X}}_{b|y}
            p(y)^{-1}
            , 
            & \hspace{-1.1cm}
            b=1,...,l\\
            \frac{1}{J}
            [1-F_y(\eta)]
            \chi, 
            & 
            \hspace{0.1cm} 
            b=l+1,...,l+J
        \end{matrix}
        \label{eq:scgameD}
    \end{align} 
    with $\chi$ begin an arbitrary quantum state $\chi\geq 0$, $\tr[\chi]=1$. We can check that this is a well-defined set of subchannels because they add up to a CPTP linear map as follows. We have that $\forall J,\forall \eta, \forall y$:
    \begin{align*}
        \tr
        \left[
        \sum_{b=1}^{l+J} 
        \Psi^{ 
            (\rho,
            \mathbb{M}_{A|X}, p_X,
            J) 
        }_{b|y}
        (\eta)
        \right]
        &=
        \tr
        \left[
        \sum_{b=1}^{l} 
        \Psi^{ 
            (\rho,
            \mathbb{M}_{A|X}, p_X,
            J) 
        }_{b|y}
        (\eta)
        \right]
        +
        \tr
        \left[
        \sum_{b=l+1}^{l+J}
        \Psi^{ 
            (\rho,
            \mathbb{M}_{A|X}, p_X,
            J) 
        }_{b|y}
        (\eta)
        \right],
        \nonumber \\
        &=
        \tr
        \left[
        \sum_{b=1}^{l} 
        \alpha
        \tr[Z^\rho \eta] 
        Z^{\mathbb{M}_{A|X}}_{b|y}
        p(y)^{-1}
        \right]
        +
        \tr
        \left[
        \sum_{b=l+1}^{l+J} 
        \frac{1}{J}
        [1-F_y(\eta)]
        \chi
        \right],
        \nonumber \\
        &=
        \alpha
        \tr[Z^\rho \eta] 
        \tr
        \left[
        \sum_{b=1}^{l} 
        Z^{\mathbb{M}_{A|X}}_{b|y}
        p(y)^{-1}
        \right]
        +
        \frac{1}{J}
        [1-F_y(\eta)]
        \sum_{b=l+1}^{l+J} 
        \tr
        \left[
        \chi
        \right],
        \nonumber \\
        &=
        \alpha
        \tr[Z^\rho \eta] 
        \tr
        \left[
        \sum_{b=1}^{l} 
        Z^{\mathbb{M}_{A|X}}_{b|y}
        p(y)^{-1}
        \right]
        +
        [1-F_y(\eta)],
        \nonumber \\
        &=
        F_y(\eta)
        +
        [1-F_y(\eta)],
        \nonumber \\
        &=
        1
        .
    \end{align*}
    We now analyse the multi-object subchannel discrimination game with prior information given by $\Psi^{(\rho,\mathbb{M}_{A|X}, p_X, J)}_{B|Y}$ and a PMF $p_Y$ which we specify as $p_Y=p_X$, which can be done since $|Y| = |X| = \kappa$. We start by analysing the best \emph{fully free} player:
    \begin{align*}
        \max_{
            \substack{
                \sigma \in {\rm F}\\
                \mathbb{N}_{A|X} \in \mathbb{F}
            }
        }
        P^{\rm D}_{\rm succ}
        (
        p_Y
        ,
        \Psi^{(\rho, \mathbb{M}_{A|X}, p_X, J)}_{B|Y},
        \sigma
        ,
        \mathbb{N}_{A|X}
        )    
        =
        \max_{
            \substack{
                \sigma \in {\rm F}\\
                \mathbb{N}_{A|X} \in \mathbb{F}\\
                \mathbb{\tilde N}_{B|Y} \preceq \,
                \mathbb{N}_{A|X}
            }
        }
        \sum_{y=1}^{\kappa}
        \sum_{b=1}^{l+J}
        \tr
        \left[
        \tilde N_{b|y} 
        \Psi^{(\rho, \mathbb{M}_{A|X}, p_X, J)}_{b|y}
        (\sigma)
        \right]
        p(y).
    \end{align*} 
    We are considering QRTs of POVM sets closed under CProP, so the optimal set $\{\tilde N_{b|y}\}$ is a free object. Let us now consider, without loss of generality, that this maximisation is achieved by the \emph{fully free} pair $(\sigma^*,\mathbb{N}^*_{B|Y})$. We have:
    \begin{align}
        &\max_{
            \substack{
                \sigma \in {\rm F}\\
                \mathbb{N}_{A|X} \in \mathbb{F}
            }
        }
        P^{\rm D}_{\rm succ} (
        p_Y
        ,
        \Psi^{(\rho, \mathbb{M}_{A|X}, p_X, J)}_{B|Y},
        \sigma
        ,
        \mathbb{N}_{A|X}
        )
        \\
        &=
        \sum_{y=1}^{\kappa}
        \sum_{b=1}^{l+J}\tr
        \left[
        N^*_{b|y} \Psi^{(\rho, \mathbb{M}_{A|X}, p_X, J)}_{b|y}
        (\sigma^*)
        \right]
        p(y),
        \\
        &=
        \alpha
        \tr[Z^\rho \sigma^*] 
        \sum_{y=1}^{\kappa}
        \sum_{b=1}^{l} 
        \tr
        \left[
        N^*_{b|y} 
        Z^{\mathbb{M}_{A|X}}_{b|y}
        \right]
        +
        \sum_{y=1}^{\kappa}
        \sum_{b=l+1}^{l+J} 
        \frac{1}{J}
        \left[
        1-F_y(\sigma^*)
        \right]
        \tr
        \left[
        N^*_{b|y}
        \chi
        \right]
        p(y)
        .
        \label{eq:eq10}
    \end{align}
    In the second equality we have replaced the subchannel game \eqref{eq:scgameD}. The first term can be upper bounded as:
    \begin{align*}
        \sum_{y=1}^{\kappa}
        \sum_{b=1}^{l} 
        \tr
        \left[
        N^*_{b|y}
        Z^{\mathbb{M}_{A|X}}_{b|y}
        \right]
        \leq 
        \sum_{y=1}^{\kappa}
        \sum_{b=1}^{l} 
        \tr
        \left[
        \tilde N^*_{b|y} 
        Z^{\mathbb{M}_{A|B}}_{b|y}
        \right]
        \leq
        1,
    \end{align*}
    with the POVM $\mathbb{\tilde N}^*_y$ (with $l$ outcomes) constructed from the POVM $\mathbb{N}^*_y$ (which has $l+J$ outcomes), $\forall y=1,...,\kappa$, as defined in \eqref{eq:Ntilde}. The first inequality follows from the definition of the POVMs $\mathbb{\tilde N}^*_y$ \eqref{eq:Ntilde}.  In the second inequality we use the fact that $\mathbb{\tilde N}^*_{B|Y}$ is a free set of POVMs (because it was constructed from a free set of POVMs $\mathbb{N}^*_{B|Y}$ and a CPP operation) and therefore we can use the conic program condition \eqref{eq:RoRm3}. We now also use the fact that $1-F_y(\eta)\leq 1$, $\forall \eta$, $\forall y$, as well as \eqref{eq:RoRs3} and so equation \eqref{eq:eq10} becomes:
    \begin{align*}
        \max_{
            \substack{
                \sigma \in {\rm F}\\
                \mathbb{N}_{A|X} \in \mathbb{F}
            }
        }
        P^{\rm D}_{\rm succ}(
        p_Y
        ,
        \Psi^{(\rho, \mathbb{M}_{A|X}, p_X, J)}_{B|Y},
        \sigma
        ,
        \mathbb{N}_{A|X}
        )
        &\leq 
        \alpha 
        + 
        \frac{1}{J}
        \sum_{y=1}^{\kappa}
        p(y)
        \sum_{b=l+1}^{l+J} 
        \tr
        \left[
        N^*_{b|y}
        \chi
        \right].
    \end{align*}
    The second term in the latter expression can now be upper bounded as:
    \begin{align*}
        \sum_{b=l+1}^{l+J} 
        \tr[
        N^*_{b|y}
        \chi
        ] 
        \leq  
        \sum_{b=1}^{l+J} 
        \tr[
        N^*_{b|y}
        \chi
        ]
        &=
        \tr 
        \left[
        \left(
        \sum_{b=1}^{l+J} 
        N^*_{b|y}
        \right)
        \chi
        \right]
        =
        1
        ,
        \hspace{0.5cm}
        \forall y.
    \end{align*}
    The inequality follows because we added $l$ non-negative terms and the last equality follows from $\mathbb{N}^*_y$ being a POVM, $\sum_{b=1}^{l+J} N_{b|y}^* =\mathds{1}$, $\forall y$, and $\chi$ being a quantum state. We then get:
    \begin{align*}
        \max_{
            \substack{
                \sigma \in {\rm F}\\
                \mathbb{N}_{A|X} \in \mathbb{F}
            }
        }
        P^{\rm D}_{\rm succ}
        (
        p_Y,
        \Psi^{
            (
            \rho,
            \mathbb{M}_{A|X}, p_X,
            J
            )
        }
        ,
        \sigma
        ,
        \mathbb{N}_{A|X}
        )
        \leq 
        \alpha 
        +
        \frac{
            1
        }{
            J
        }.
    \end{align*}
    We now choose the subchannel game given by $\Psi^{ (\rho, \mathbb{M}_{A|X}, p_X, J\rightarrow \infty)}$ and therefore we get:
    \begin{align}
        \max_{
            \substack{
                \sigma \in {\rm F}\\
                \mathbb{N}_{A|X} \in \mathbb{F}
            }
        }
        P^{\rm D}_{\rm succ}
        \left(
        p_Y,
        \Psi^{
            (
            \rho
            ,
            \mathbb{M}_{A|X}
            , p_X,
            J\rightarrow \infty
            )
        },
        \sigma,
        \mathbb{N}_{A|X}
        \right)
        \leq 
        \alpha.
        \label{eq:ineqD2}
    \end{align}
    We now analyse the probability of success of a player using the fully resourceful pair $(\rho, \mathbb{M}_{A|X})$:
    \begin{align}
        P^{\rm D}_{\rm succ}
        (p_Y,
        \Psi^{(\rho
            ,
            \mathbb{M}_{A|X}, p_X,
            J
            )}_{B|Y},
        \rho
        ,
        \mathbb{M}_{A|X}
        )
        &= \nonumber
        \max_{
            \mathbb{N}_{B|Y}
            \preceq \,
            \mathbb{M}_{A|X}
        }
        \sum_{y=1}^{\kappa}
        \sum_{b=1}^{l+J} 
        \tr[
        N_{b|y}
        \Psi^{
            (\rho,
            \mathbb{M}_{A|X},
            p_X,
            J)
        }_{b|y}
        (\rho)
        ]\,
        p(y)
        \\ \nonumber
        &\geq
        \sum_{y=1}^{\kappa}
        \sum_{b=1}^{l} 
        \tr[
        M_{b|y}
        \Psi^{ (\rho, \mathbb{M}_{A|X}, 
            p_X, 
            J)}_{b|y}
        (\rho)]\,
        p(y)
        \\
        & = 
        \alpha
        \tr[
        Z^\rho 
        \rho
        ] 
        \sum_{y=1}^{\kappa}
        \sum_{b=1}^{l}
        \tr
        \left[
        M_{b|y}
        Z^{\mathbb{M}_{A|X}}_{b|y}
        \right]
        \nonumber
        \\
        &=
        \alpha 
        \Big[ 
        1+{\rm R_F}
        (\rho) 
        \Big]
        \Big[ 
        1+{\rm R_\mathbb{F}}
        (\mathbb{M}_{A|X}) 
        \Big].
        \label{eq:b4r0}
    \end{align}
    The inequality follows because one can choose to simulate the specific measurement, i.e. $N_{b|y} = M_{b|y}$ for $b \leq l$ and $N_{b|y} = 0$ for $l<b < J$. We have replaced the subchannel discrimination game with \eqref{eq:scgameD}. The last line follows from \eqref{eq:RoRs1} and \eqref{eq:RoRm1}.  We now choose the subchannel game given by $\Psi^{(\rho,\mathbb{M}_{A|X}, p_X, J\rightarrow \infty)}$ and have:
    {\small\begin{align}
        P^{\rm D}_{\rm succ}
        (p_Y,
        \Psi^{(\rho
            ,
            \mathbb{M}_{A|X}, p_X,
            J\rightarrow \infty
            )}_{B|Y},
        \rho
        ,
        \mathbb{M}_{A|X}
        )
        \geq
        \max_{
            \substack{
                \sigma \in {\rm F}\\
                \mathbb{N}_{A|X} \in \mathbb{F}
            }
        }
        P^{\rm D}_{\rm succ}
        \left(
        p_Y,
        \Psi^{
            (
            \rho
            ,
            \mathbb{M}_{A|X},
            p_X,
            J\rightarrow \infty
            )
        }_{B|Y},
        \sigma,
        \mathbb{N}_{A|X}
        \right)
        \Big[ 
        1+{\rm R_F}
        (\rho) 
        \Big]
        \Big[ 
        1+{\rm R_\mathbb{F}}
        (\mathbb{M}_{A|X}) 
        \Big].
        \label{eq:b4r2}
    \end{align}}
    Putting together \eqref{eq:b4r2} and \eqref{eq:upperbound} achieves:
    \begin{align*}
        \frac{
            P^{\rm D}_{\rm succ}(
            p_Y,
            \Psi^{
                (\rho,
                \mathbb{M}_{A|X}, p_X,
                J\rightarrow \infty
                )
            }_{B|Y}, 
            \rho
            ,
            \mathbb{M}_{A|X}
            )
        }{
            \displaystyle
            \max_{
                \substack{
                    \sigma \in {\rm F}\\
                    \mathbb{N}_{A|X} \in \mathbb{F}
                }
            }
            P^{\rm D}_{\rm succ}(
            p_Y, 
            \Psi^{
                (
                \rho,
                \mathbb{M}_{A|X}, p_X,
                J\rightarrow \infty
                )
            }_{B|Y}, 
            \sigma, \mathbb{N}_{A|X})
        }
        = \hspace{-0.1cm}
        \Big[ 
        1+{\rm R_F}(\rho) 
        \Big]
        \Big[ 
        1+{\rm R_\mathbb{F}}(\mathbb{M}_{A|X}) 
        \Big].
    \end{align*}
    This shows the upper bound is achievable thus completing the proof.
\end{proof}
		
\section{Proof of \cref{r:r2}}
\label{a:r2}

Following the same line of reasoning as in \cref{a:r1} and the preliminaries as in \cref{a:pre}, here we prove the lower bound and the achievability of the statement in \cref{r:r2} (Eq. \eqref{eq:r2}):
\begin{equation}
    \min_{
    \{p_Y,\Phi_{B|Y}\}
    } 
    \frac{P^{\mathrm{E}}_{\mathrm{err}}(p_Y,\Phi_{B|Y} ,  \rho,\mathbb{M}_{A|X})}{\min_{
    \substack{
    \sigma \in {\rm F}\\
    \mathbb{N}_{A|X} \in \mathbb{F}
    }
    }P^{\mathrm{E}}_{\mathrm{err}}( p_Y ,\Phi_{B|Y} , \sigma,\mathbb{N}_{A|X})} = \left[1- {\rm W_F}(\rho)\right] \left[1- {\rm W}_{\mathbb{F}}(\mathbb{M}_{A|X})\right] ,\nonumber
\end{equation}
where the probability of error in quantum subchannel exclusion with prior information reads:
\begin{align} 
    P^{\mathrm{E}}_{\mathrm{err}}
    (p_Y,\Phi_{B|Y},  \rho,\mathbb{M}_{A|X} ) 
    &\coloneqq   
    \min_{\mathcal{S}} 
    \sum_{g,a,x,z, b,y} 
    \delta_{g,b}\,
    s(g|a,y, z)
    \tr[M_{a|x}
    \Phi_{b|y}(\rho)] 
    r(x|y,z)\,
    q(z)\,
    p(y) 
    \\
    &=  
    \min_{\mathbb{N}_{B|Y}
    \preceq \mathbb{M}_{A|X}} 
    \sum_{b,y} 
    \tr 
    \left[ N_{b|y} 
    \Phi_{b|y}
    (\rho) 
    \right]    
    p(y) 
    ,
\end{align}
with the minimisation over all the possible strategies $\mathcal{S}$ and POVM sets $\mathbb{N}_{B|Y}$ simulable by $\mathbb{M}_{A|X}$, respectively. 

\begin{proof} (Lower bound) Given any game $(p_Y,\Phi_{B|Y})$ and any pair $(\rho, \mathbb{M}_{A|X} )$, we have:
\begin{align}
& P^{\mathrm{E}}_{\mathrm{err}}(p_Y,\Phi_{B|Y},  \rho,\mathbb{M}_{A|X}) \nonumber \\
    &= \min_{\mathbb{N}_{B|Y}\preceq \mathbb{M}_{A|X}} \sum_{b,y} \tr \left[  N_{b|y} \Phi_{b|y}(\rho)  \right]    p(y) \nonumber \\
	&\geq  [1-{\rm W_F}(\rho)] \min_{\sigma \in F} \min_{\mathbb{N}_{B|Y}\preceq \mathbb{M}_{A|X}}  \sum_{b,y} \tr\left[  N_{b|y} \Phi_{b|y}(\sigma)  \right]    p(y) , \nonumber \\
	&=  
    [1-{\rm W_F}(\rho)]  
    \min_{\sigma \in \mathrm{F}} \min_{S} \sum_{b,y} 
    \tr
    \left[   
    \sum_{a,x,z} 
    s(b|a,y, z)\, 
    M_{a|x}
    r(x|y, z)\,
    q(z)\,
    \Phi_{b|y}
    (\sigma) 
    \right]    
    p(y) 
    \nonumber \\
   &\geq  
   [1-{\rm W_F}(\rho)]
   [1-{\rm W_F}(\mathbb{M}_{A|X})]   \min_{\sigma \in \mathrm{F}} \min_{\tilde{\mathbb{N}}_{A|X}\in\mathbb{F}} \min_{S} \sum_{b,y} \tr\left[   \sum_{a,x,z} 
   s(b|a,y, z)\, 
   \tilde{N}_{a|x}\,
   r(x|y, z)\,
   q(z)\,
   \Phi_{b|y}(\sigma)  
   \right]    
   p(y)	
   \nonumber \\
   &=  [1-{\rm W_\mathrm{F}}(\rho)][1-{\rm W_F}(\mathbb{M}_{A|X})] \min_{\sigma \in F} \min_{\mathbb{\tilde{N}_{A|X} \in F}} \min_{\tilde{ \tilde{\mathbb{N}}}_{B|Y}\preceq \tilde{\mathbb{N}}_{A|X}} \sum_{b,y} \tr\left[ \tilde{\tilde{N}}_{b|y}  \Phi_{b|y}(\sigma)  \right]    p(y) \nonumber \\
   &=  [1-{\rm W_\mathrm{F}}(\rho)][1-{\rm W_F}(\mathbb{M}_{A|X})]  \min_{\sigma \in F} \min_{\mathbb{\tilde{N}_{A|X} \in F}}  P^{\mathrm{E}}_{\mathrm{err}}(p_Y,\Phi_{B|Y},  \sigma, \mathbb{\tilde{N}}_{A|X} ) 
   .
   \label{eq:lowerbound}
\end{align}
From the definition of the weight of the resource (Eq. \eqref{eq:WoRs}) for quantum states and we have: $\rho = {\rm W_F}(\rho) \rho^{G} +(1-{\rm W_F}(\rho))\sigma^*$ for ${\rm W_F}(\rho)$ and $\sigma^*\in\mathrm{F}$ satisfying the minimum in the above definition. 
The positivity of $\Phi_{b|y}$ then implies $\Phi_{b|y}( \rho) \geq (1-{\rm W_F}(\rho)) \Phi_{b|y}(\sigma^*)$, which shows the first inequality. 
The equivalent definition of weight for POVM sets leads to the inequality $M_{a|x}\geq(1-{\rm W}_{\mathbb{F}}(\mathbb{M}_{A|X}))\tilde{N}_{a|x}^*$ for ${\rm W}_{\mathbb{F}}(\mathbb{M}_{A|X})$ and $\tilde{\mathbb{N}}^*_{A|X}=\{\tilde{N}^*_{a|x}\}\in\mathbb{F}$ that satisfies the minimum weight. 
Thus the second inequality is proved.
\end{proof}

\begin{proof} (Achievability) For a pair \((\rho, \mathbb{M}_{A|X})\), the dual cone formulations assures the existence of an operator \( Y^\rho \) satisfying conditions \eqref{eq:WoRs3} and a set of operators \( \{Y^{\mathbb{M}_{A|X}}_{a|x}\} \) ($x = 1, ..., \kappa$, $a = 1, ..., l$) satisfying \eqref{eq:WoRm3}. We define a set of subchannels \( \{\Phi^{(\rho, \mathbb{M}_{A|X}, p_X)}_{a|x}\}_{a,x} \) such that for any state \( \eta \):
\begin{align}
   &\Phi^{(\rho, \mathbb{M}_{A|X}, p_X)}_{a|x}(\eta) := \beta^{(\rho, \mathbb{M}_{A|X}, p_X)} \, \tr\left[Y^\rho \eta\right] \, Y^{\mathbb{M}_{A|X}}_{a|x} \, p(x)^{-1}, \nonumber\\
   &\beta\equiv\beta^{(\rho, \mathbb{M}_{A|X}, p_X)} := \frac{1}{2 \|Y^\rho\|_{\text{op}} \, \tr\left[Y^{\mathbb{M}_{A|X}}\right]},\;\; Y^{\mathbb{M}_{A|X}} := \sum_{x=1}^{\kappa} \sum_{a=1}^{l} Y^{\mathbb{M}_{A|X}}_{a|x} \, p(x)^{-1} .\nonumber
\end{align}
We can verify that these maps are completely positive and linear, and that they satisfy:
\begin{align}
G_x(\eta) 
\coloneqq
\tr 
\left[ 
\sum_{a=1}^{l} 
\Phi^{
(
\rho,\mathbb{M}_{A|X}, p_X
)
}_{a|x}
(\eta) 
\right]  
=
\frac{
\tr 
\left[ 
Y^\rho \rho 
\right] }{ 2\|Y^\rho\|_{\text{op}}} 
\frac{
\tr
\left[ 
\sum_{a=1}^{l} 
Y^{\mathbb{M}_{A|X}}_{a|x} \, p(x)^{-1} \right] 
}{
\tr 
\left[
Y^{\mathbb{M}_{A|X}} 
\right]} 
\leq
\frac{1}{2},
\hspace{0.5cm}
\forall\, \eta,  x.
\label{eq:G}
\end{align}
Given a pair \((\rho, \mathbb{M}_{A|X})\), where \( \mathbb{M}_{A|X} = \{M_{a|x}\},~ x = 1, \ldots, \kappa,~ a = 1, \ldots, l \), we now define an instrument set
$
\Psi^{(\rho, \mathbb{M}_{A|X}, p_X)}_{B|Y}
=
\{\Psi_{b|y}^{(\rho, \mathbb{M}_{A|X}, p_X)}(\cdot)\},~ y = 1, \ldots, \kappa,~ b = 1, \ldots, l + 1,
$ as:
\begin{align}
    \Psi_{b|y}^{(\rho, \mathbb{M}_{A|X}, p_X)}(\eta) 
    \coloneqq
    \begin{cases} 
	\beta \, \tr[Y^{\rho} \eta] Y^{\mathbb{M}_{A|X}}_{b|y} \, p(y)^{-1}, & b = 1, \ldots, l \\
	 [1 - G_y(\eta)] 
     \xi^{\mathbb{M}_{A|X}}_y,
     & b = l + 1
    \end{cases}
    ,
    \hspace{1cm}
    \xi^{\mathbb{M}_{B|Y}}_y
    \coloneqq
    \frac{
    \sum_{b=1}^l 
    p(b|y) 
    Y^{\mathbb{M}_{A|X}}_{b|y}
    }{
    \sum_{b=1}^l 
    p(b|y) 
    \tr 
    [Y^{\mathbb{M}_{A|X}}_{b|y}]
    }  
    ,
    \label{eq:xi}
\end{align}
with \(\{p(b|y)\}_{b,y}\) an arbitrary conditional PMF. We can verify that this is a well-defined instrument set because they sum up to a CPTP map as follows, $\forall  \eta$, $\forall  y$:
\begin{align*}
    \tr \left[ \sum_{b=1}^{l+1} \Psi_{b|y}^{(\rho, \mathbb{M}_{A|X}, p_X)}(\eta) \right] 
    &= 
    \tr \left[ \sum_{b=1}^{l} \Psi_{b|y}^{(\rho, \mathbb{M}_{A|X}, p_X)}(\eta) \right] 
    +
    \tr 
    \left[ \Psi_{(l+1)|y}^{(\rho, \mathbb{M}_{A|X}, p_X)}(\eta) 
    \right] 
    \\
    &= 
    \tr
    \left[
    \sum_{b=1}^{l} 
    \Phi_{b|y}^{
    (\rho,
    \mathbb{M}_{A|X},
    p_X)
    }(\eta)
    \right] 
    +
    \tr \left[ 
    (1 - G_y(\eta)) 
    \xi^{\mathbb{M}_{A|X}}_y
    \right] 
    \\
    &= 
    G_y(\eta)
    +
    [1 - G_y(\eta)] 	
    \tr 
    \left[
    \xi^{\mathbb{M}_{A|X}}_y
    \right] 
    \\
    &=1.
\end{align*}
We now analyse the multi-object subchannel exclusion game with prior information given by \(\Psi^{(\rho, \mathbb{M}_{A \mid X}, p_X)}_{B \mid Y}\) and a PMF \(p_Y = p_X\), which can be done because \(|Y| = |X| = \kappa\). We start by addressing the best fully free player:
\begin{align}
    \min_{\sigma \in {\rm F}} 
    \min_{\mathbb{N}_{A|X} \in \mathbb{F}} P^{\mathrm{E}}_{\mathrm{ err}}
    (p_Y,
    \Psi^{
    (
    \rho,
    \mathbb{M}_{A|X},
    p_X
    )
    }_{B|Y},
    \sigma, 
    \mathbb{N}_{A|X}
    )
    = 
    \min_{\substack{
    \sigma \in {\rm F}\\
    \mathbb{N}_{A|X} \in \mathbb{F}\\
    \tilde{\mathbb{N}}_{B|Y} \preceq \mathbb{N}_{A|X}
    }
    }
    \sum_{y=1}^{\kappa} 
    \sum_{b=1}^{l+1} 
    \tr 
    \left[
    \tilde{N}_{b|y} 
    \Psi^{
    (\rho, \mathbb{M}_{A|X},p_X)
    }_{b|y} 
    (\sigma)
    \right] 
    p(y).
\end{align}
Let us consider that the minimum in the right hand side is achieved with the fully free pair $(\sigma = \sigma^*, \tilde{\mathbb{N}}_{B|Y} = \mathbb{N}^*_{B|Y})$. This can be done because the free set $\mathbb{F}$ is assumed to be closed under simulability of POVM sets. We then can write this as:
\begin{align}
    & 
    \min_{\sigma \in {\rm F}} 
    \min_{\mathbb{N}_{A|X} \in \mathbb{F}} P^{\mathrm{E}}_{\mathrm{ err}}
    (p_Y,
    \Psi^{
    (
    \rho,
    \mathbb{M}_{A|X},
    p_X
    )
    }_{B|Y},
    \sigma, 
    \mathbb{N}_{A|X}
    )
    =
    \sum_{y=1}^{\kappa}
    \sum_{b=1}^{l+1}
    \tr
    \left[
    N^*_{b|y} \Psi^{
    (
    \rho,
    \mathbb{M}_{A|X},
    p_X)
    }_{b|y}
    (
    \sigma^*
    )
    \right] 
    p(y)  
    \\
    &=
    \beta \,
    \tr[
    Y^{\rho} \sigma^*
    ] 
    \tr
    \left[
    \sum_{y=1}^{\kappa} 
    \sum_{b=1}^{l} 	
    N^*_{b|y} 
    Y^{\mathbb{M}_{A|X}}_{b|y} 
    \right] 
    +
    \sum_{y=1}^{\kappa} 
    [1 - G_y(\sigma^*)] 
    \tr
    \left[
    N^*_{(l+1)|y}\,
    \xi^{\mathbb{M}_{A|X}}_y
    \right]
    p(y)
    .
    \label{appCeq1}
\end{align}
We now introduce the POVM set $\tilde{\mathbb{N}}^*_{B|Y} = \{ \tilde{N}_{b|y}^* \}$ ($y=1,\ldots,\kappa,~b=1,\ldots,l$) with $\tilde{N}_{b|y}^{*} \coloneqq N_{b|y}^{*} + p(b|y) N_{(l+1)|y}^{*} $ via a CPosP of $\{ N_{b|y}^* \}$, where $\{p(b|y)\}$ is the PMF from the state in \eqref{eq:xi}. We now add and subtract the term $\beta \tr\left[Y^{\rho} \sigma^* \right] \sum_{y=1}^{\kappa} \sum_{b=1}^{l}  \tr \left[  p(b|y)N^{*}_{(l+1)|y} Y^{\mathbb{M}_{A|X}}_{b|y} \right]$, and so \eqref{appCeq1} can be rewritten as:
\begin{align}
    &\min_{\sigma \in {\rm F}} 
    \min_{\mathbb{N}_{A|X} \in \mathbb{F}} P^{\mathrm{E}}_{\mathrm{ err}}
    (p_Y,
    \Psi^{
    (
    \rho,
    \mathbb{M}_{A|X},
    p_X
    )
    }_{B|Y},
    \sigma, 
    \mathbb{N}_{A|X}
    )
    =\beta \, \tr[Y^{\rho} \sigma^*] \tr \left[ \sum_{y=1}^{\kappa} \sum_{b=1}^{l} 	\tilde{N}^*_{b|y} Y^{\mathbb{M}_{A|X}}_{b|y} \right]   \nonumber\\
    &+\sum_{y=1}^{\kappa} 
    [1 - G_y(\sigma^*)] 
    \tr
    \left[
    N^*_{(l+1)|y}\,
    \xi^{\mathbb{M}_{A|X}}_y
    \right]
    p(y) 
    - 
    \beta
    \tr
    \left[
    Y^{\rho}
    \sigma^*
    \right]
    \sum_{y=1}^{\kappa}
    \sum_{b=1}^{l} 
    p(b|y)
    \tr
    \left[
    N^{*}_{(l+1)|y}\, Y^{\mathbb{M}_{A|X}}_{b|y} 
    \right].
    \label{eq:bound_exclusion_beta}
\end{align}
The first term in \eqref{eq:bound_exclusion_beta} can be lower bounded by $\beta$ because $ \tr\left[Y^{\rho} \sigma^*\right] \geq 1$ and $ \sum_{y=1}^\kappa \sum_{b=1}^l \tr\left[Y^{\mathbb{M}_{A|X}}_{b|y} \tilde{N}^*_{b|y}\right] \geq 1$ for the positive semidefinite operators $Y^\rho$ and $\{Y^{\mathbb{M}_{A|X}}_{b|y}\}$ (as per \eqref{eq:WoRm3}).
Thus we have:
\begin{align}
    \min_{\sigma \in {\rm F}} 
    \min_{\mathbb{N}_{A|X} \in \mathbb{F}}
    P^{\mathrm{E}}_{\mathrm{ err}}
    (p_Y,
    \Psi^{
    (
    \rho,
    \mathbb{M}_{A|X},
    p_X
    )
    }_{B|Y},
    \sigma, 
    \mathbb{N}_{A|X}
    )
    &\geq
    \beta 
    +\sum_{y=1}^{\kappa}
    [1 - G_y(\sigma^*)] 
    \tr
    \left[
    N^*_{(l+1)|y}\,
    \xi^{\mathbb{M}_{A|X}}_y
    \right]
    p(y)
    \nonumber\\
    &
    -
    \beta
    \tr
    \left[
    Y^{\rho}
    \sigma^*
    \right]
    \sum_{y=1}^{\kappa}
    \sum_{b=1}^{l}
    p(b|y)
    \tr
    \left[
    N^{*}_{(l+1)|y}\, Y^{\mathbb{M}_{A|X}}_{b|y}
    \right].
    \label{eq:bound_exclusion_beta_geq}
\end{align}
Let us now prove that the last two terms add up to a non-negative value. The last two terms can be written as:
\begin{align}
  &\sum_{y=1}^{\kappa} 
  [1 - G_y(\sigma^*)]
  \tr
  \left[
  N^*_{(l+1)|y}\,
  \xi^{\mathbb{M}_{A|X}}_y
  \right]
  p(y)
  -
  \beta
  \tr
  \left[
  Y^{\rho}
  \sigma^*
  \right]
  \sum_{y=1}^{\kappa}
  \sum_{b=1}^{l}
  p(b|y)
  \tr
  \left[
  N^{*}_{(l+1)|y}
  Y^{\mathbb{M}_{A|X}}_{b|y}
  \right]
  \nonumber\\
  &=\tr\left[
  \sum_{y=1}^{\kappa}
  N^*_{(l+1)|y}
  \left(  [1 - G_y(\sigma^*)]   \xi^{\mathbb{M}_{A|X}}_y
  p(y)
  - \beta \tr\left[Y^{\rho} \sigma^* \right]  \sum_{b=1}^{l} p(b|y) Y^{\mathbb{M}_{A|X}}_{b|y} \right)
  \right].
\end{align}
Let us check the operator inside the brackets is positive semidefinite for all $y$. Let us write:
\begin{align}
  & 
  [1 - G_y(\sigma^*)]   
  \xi^{\mathbb{M}_{A|X}}_y
  p(y) 
  - 
  \beta 
  \tr
  \left[
      Y^{\rho}
      \sigma^*
  \right]
  \sum_{b=1}^{l} 
  p(b|y)
  Y^{\mathbb{M}_{A|X}}_{b|y}
  \nonumber\\
  &= 
  [1 - G_y(\sigma^*)] 
  \frac{
      \sum_{b'=1}^l 
      p(b'|y)
      Y^{
      \mathbb{M}_{A|X}}_{b'|y}
  }{
      \sum_{b'=1}^l 
      p(b'|y)
      \tr
      \left[
          Y^{\mathbb{M}_{A|X}}_{b'|y}
      \right]
  } 
  p(y)
  - 
  \beta
  \tr
  \left[
      Y^{\rho}
      \sigma^*
  \right]
  \sum_{b=1}^{l} 
  p(b|y)
  Y^{\mathbb{M}_{A|X}}_{b|y} 
  .
  \nonumber
\end{align}
Multiplying by the positive term 
$
A_y
\coloneqq
\sum_{b'=1}^l 
p(b'|y)
\tr
\left[
Y^{\mathbb{M}_{A|X}}_{b'|y}
\right]
$, we get:
{\small \begin{align}
    &    
  [1 - G_y(\sigma^*)] 
  \left(
      \left[
          \sum_{b'=1}^l 
          p(b'|y)
          Y^{
          \mathbb{M}_{A|X}}_{b'|y}
      \right]
  \right)
  p(y)
    - 
  \beta
  \tr
  \left[
      Y^{\rho}
      \sigma^*
  \right]
  \left(
      \left[
          \sum_{b=1}^{l} 
          p(b|y)
          Y^{\mathbb{M}_{A|X}}_{b|y} 
      \right]
  \right)
    A_y
    \nonumber
    \\
    \overset{1}{=}&
  \Bigg(
       [1 - G_y(\sigma^*)] 
       p(y)
        - 
      \beta
      \tr
      \left[
          Y^{\rho}
          \sigma^*
      \right]
      A_y
  \Bigg)
  \left(
      \left[
          \sum_{b=1}^{l} 
          p(b|y)
          Y^{\mathbb{M}_{A|X}}_{b|y} 
      \right]
  \right)
    \nonumber
\end{align}
We now analyse the coefficient inside the first brackets:
\begin{align}
       &[1 - G_y(\sigma^*)] 
       p(y)
        - 
      \beta
      \tr
      \left[
          Y^{\rho}
          \sigma^*
      \right]
      A_y
    \nonumber
    \\
    \overset{1}{=}&
    [1 - G_y(\sigma^*)] 
       p(y)
        - 
      \beta
      \tr
      \left[
          Y^{\rho}
          \sigma^*
      \right]
      \left(
        \sum_{b'=1}^l 
        p(b'|y)
        \tr
        \left[
        Y^{\mathbb{M}_{A|X}}_{b'|y}
        \right]
    \right)
    \\
    \overset{2}{=}&
    p(y)
    -
    \beta \, 
    \tr[
    Y^{\rho} 
    \sigma^*
    ] 
    \tr 
    \left[
    \sum_{b=1}^{l}
    Y^{\mathbb{M}_{A|X}}_{b|y} 
    \right]
    \, p(y)^{-1}
    p(y)
        - 
      \beta
      \tr
      \left[
          Y^{\rho}
          \sigma^*
      \right]
      \left(
        \sum_{b'=1}^l 
        p(b'|y)
        \tr
        \left[
        Y^{\mathbb{M}_{A|X}}_{b'|y}
        \right]
    \right)
    \\
    \overset{3}{=}&
    p(y)
    -
    \beta \, 
    \tr[
    Y^{\rho} 
    \sigma^*
    ] 
    \tr 
    \left[
    \sum_{b=1}^{l}
    Y^{\mathbb{M}_{A|X}}_{b|y} 
    \right]
        - 
      \beta
      \tr
      \left[
          Y^{\rho}
          \sigma^*
      \right]
      \left(
        \sum_{b'=1}^l 
        p(b'|y)
        \tr
        \left[
        Y^{\mathbb{M}_{A|X}}_{b'|y}
        \right]
    \right)
    \\
    \geq&
    p(y)
    -
    2
    \beta \, 
    \tr[
    Y^{\rho} 
    \sigma^*
    ] 
    \tr 
    \left[
    \sum_{b=1}^{l}
    Y^{\mathbb{M}_{A|X}}_{b|y} 
    \right]
    .
\end{align}}
In the first equality we replace $A_y$. In the second equality we replace $G_y(\sigma^*)$. In the third equality we reorganise. Finally, the inequality is due to the fact that we are subtracting a larger quantity. We now continue:
\begin{align}
    p(y) -  2\beta \, \tr[Y^{\rho} \sigma^*] \sum_{b=1}^{l}\tr \left[Y^{\mathbb{M}_{A|X}}_{b|y}\right]
    &\overset{1}{=}
    p(y) - \frac{\tr[Y^{\rho} \sigma^*]}{\|Y^\rho\|_{\text{op}}} \, \frac{\sum_{b=1}^{l}\tr \left[Y^{\mathbb{M}_{A|X}}_{b|y}\right]}{\tr\left[Y^{\mathbb{M}_{A|X}}\right]} \nonumber\\
    &\overset{2}{=}
    p(y) - \frac{\tr[Y^{\rho} \sigma^*]}{\|Y^\rho\|_{\text{op}}} \, \frac{\sum_{b=1}^{l}
    \tr 
    \left[
    Y^{\mathbb{M}_{A|X}}_{b|y}
    \right]}{
    \sum_{y'=1}^{\kappa}
    \sum_{b=1}^{l}
    \tr
    \left[
    Y^{\mathbb{M}_{A|X}}_{b|y'} 
    p(y')^{-1}
    \right]
    } \nonumber\\
    &\overset{3}{\geq} 
    p(y) - \frac{\tr[Y^{\rho} \sigma^*]}{\|Y^\rho\|_{\text{op}}} \, 
    \frac{
        p(y)
        \sum_{y'=1}^{\kappa}
        \sum_{b=1}^{l}
        \tr
        \left[
        Y^{\mathbb{M}_{A|X}}_{b|y'}
        p(y')^{-1}
        \right]
    }{
        \sum_{y'=1}^{\kappa}
        \sum_{b=1}^{l}
        \tr
        \left[
        Y^{\mathbb{M}_{A|X}}_{b|y'} 
        p(y')^{-1}
        \right]
    } \nonumber\\
    &\overset{4}{=} 
    p(y)
    \left(
    1 
    -
    \frac{
    \tr[Y^{\rho} \sigma^*]
    }{
    \|Y^\rho\|_{\text{op}}
    }
    \right)
    \nonumber
    \\
    &\overset{5}{\geq}
    0 .\nonumber
\end{align}
In the first line we replace $\beta$. In the second line we replace
$Y^{\mathbb{M}_{A|X}}$.
The inequality in the third line follows because 
$
\sum_{y'=1}^{\kappa}\sum_{b=1}^{l}\tr \left[Y^{\mathbb{M}_{A|X}}_{b|y'} p(y')^{-1}\right]\ge \sum_{b=1}^{l}\tr \left[Y^{\mathbb{M}_{A|X}}_{b|y} p(y)^{-1}\right]
$, $\forall y$, and so
$
p(y)
\sum_{y'=1}^{\kappa}\sum_{b=1}^{l}\tr \left[Y^{\mathbb{M}_{A|X}}_{b|y'} p(y')^{-1}\right]\ge \sum_{b=1}^{l}\tr \left[Y^{\mathbb{M}_{A|X}}_{b|y} 
\right]
$, $\forall y$. In the fourth line we reorganise.
The inequality in the fifth line follows because $\frac{\tr[Y^{\rho} \sigma^*]}{\|Y^\rho\|_{\text{op}}} \leq 1$. This then ultimately implies that:
\begin{align}
    \min_{\sigma \in F} 
    \min_{
    \mathbb{N}_{A|X}
    \in
    \mathbb{F}
    } 
    P^{\mathrm{E}}_{\text{ err}}
    (
    p_Y,
    \Psi^{
    (
    \rho,
    \mathbb{M}_{A|X},
    p_X)}_{B|Y},
    \sigma,
    \mathbb{N}_{A|X}
    )
    \geq
    \beta
    .
    \label{eq:greater_beta}
\end{align}
To complete the proof, let us analyse the exclusion game $(p_Y,\Psi^{(\rho, \mathbb{M}_{A|X}, p_X)}_{B|Y})$ with a fully resourceful pair $(\rho, \mathbb{M}_{A|X})$. The probability of error is given by:
\begin{align}
    P^{\mathrm{E}}_{\mathrm{ err}}(p_Y ,\Psi^{(\rho, \mathbb{M}_{A|X}, p_X)}_{B|Y} ,  \rho, \mathbb{M}_{A|X})
   &= \min_{\mathbb{N}_{B|Y}\preceq \mathbb{M}_{A|X}} \sum_{y=1}^{\kappa} \sum_{b=1}^{l+1} \tr \left[ N_{b|y} \Psi^{(\rho,\mathbb{M}_{A|X},p_X)}_{b|y} (\rho) \right] p(y)  \nonumber\\
   &\leq \sum_{y=1}^{\kappa} \sum_{b=1}^{l+1} \tr \left[ \tilde{M}_{b|y} \Psi^{(\rho,\mathbb{M}_{A|X},p_X)}_{b|y} (\rho) \right] p(y)  \nonumber\\
   &= \beta \, \tr\left[Y^\rho \rho\right] \, \sum_{y=1}^{\kappa} \sum_{b=1}^{l} \tr \left[ {M}_{b|y} Y^{\mathbb{M}_{A|X}}_{b|y} \right] p(y)^{-1} p(y)  \nonumber\\  
  &= \beta \, \tr\left[Y^\rho \rho\right] \, \sum_{y=1}^{\kappa} \sum_{b=1}^{l} \tr \left[ M_{b|y} Y^{\mathbb{M}_{A|X}}_{b|y} \right]. \nonumber
\end{align}
The inequality is due to the choice of $\mathbb{N}_{B|Y}=\{N_{b|y}\}=\{\tilde{M}_{b|y}\}$ ($y=1,\ldots,\kappa,~b=1,\ldots,l+1$) with 
\[
\tilde{M}_{b|y}=\left\{
\begin{aligned}
	&M_{b|y}\quad&&(b=1,\ldots,l)\\
	&0&&(b=l+1),
\end{aligned}
\right.
\]
which is a CProP of $\mathbb{M}_{A|X}=\{M_{a|x}\}$. Since $\tr\left[Y^\rho \rho\right]=1-{\rm W_F}(\rho)$ and $\sum_{y=1}^{\kappa} \sum_{b=1}^{l} \tr \left[ M_{b|y} Y^{\mathbb{M}_{A|X}}_{b|y} \right]=1-{\rm W_\mathbb{F}}(\mathbb{M}_{A|X})$ hold for the positive operators $Y^\rho$ and $\{Y^{\mathbb{M}_{A|X}}_{b|y}\}$, respectively, we obtain
\begin{align}
    P^{\mathrm{E}}_{\mathrm{ err}}(p_Y ,\Psi^{(\rho, \mathbb{M}_{A|X}, p_X)}_{b|y} ,  \rho, \mathbb{M}_{B|Y}) \le \beta\left[1-{\rm W_F}(\rho)\right]\left[1-{\rm W_\mathbb{F}}(\mathbb{M}_{A|X})\right].
  \label{eq:less_beta}
\end{align}
It follows from \eqref{eq:greater_beta} and \eqref{eq:less_beta} that the ratio of interest is upper bounded as
\begin{align}
    \frac{P^{\mathrm{E}}_{\mathrm{ err}}(p_Y ,\Psi^{(\rho, \mathbb{M}_{A|X}, p_X)}_{b|y} ,  \rho, \mathbb{M}_{B|Y})}
    {
    \displaystyle
    \min_{\sigma \in F} \min_{\mathbb{N}_{A|X} \in \mathbb{F}} 
    P^{\mathrm{E}}_{\mathrm{ err}}(p_Y ,\Psi^{(\rho, \mathbb{M}_{A|X}, p_X)}_{b|y} ,  \sigma, \mathbb{N}_{A|X})} 
   &\leq \frac{\beta [1-{\rm W_F}(\rho)][1-{\rm W_F}(\mathbb{M}_{A|X})]}
    {
    \displaystyle
    \min_{\sigma \in F} \min_{\mathbb{N}_{A|X} \in \mathbb{F}} P^{\mathrm{E}}_{\mathrm{ err}}( p_Y ,\Psi^{(\rho, \mathbb{M}_{A|X}, p_X)}_{b|y} , \sigma, \mathbb{N}_{A|X})} \nonumber\\
    &\leq \frac{\beta [1-{\rm W_F}(\rho)][1-{\rm W_F}(\mathbb{M}_{A|X})]}
    {\beta} \nonumber\\
    &= [1-{\rm W_F}(\rho)][1-{\rm W_F}(\mathbb{M}_{A|X})]
    .
    \label{eq:achievability}
\end{align}
Putting together \eqref{eq:lowerbound} and \eqref{eq:achievability}, we finally prove Result 2 \eqref{eq:r2}.
\end{proof}
	
\section{Proof of \cref{r:r3} (GPTs)}
\label{app:GPT}
In this appendix, we prove \eqref{eq:result3-1} in \cref{r:r3} following the proof for the quantum case in \cref{a:r1}.
We will omit the proof of \eqref{eq:result3-2}, but it is given similarly by generalising the argument in \cref{a:r2}. We first rephrase Lemma \ref{lem} in terms of GPTs:
\begin{lemma}\label{lem_conic prog_GP}
    Let $V_+$ be the positive cone generated by the state space $\Omega$ and $V_+^\circ$ be the dual cone.
    We can regard $V_+$ and $V_+^\circ$ as CCCs in $V=\mathit{lin}(\Omega)$, and they define orderings $\le_{V_+}$ and $\le_{V_+^\circ}$ in $V$ through $x\le_{V_+} y \underset{\mathrm{def}}{\iff} y-x\in V_+$ and $x\le_{V_+^\circ} y \underset{\mathrm{def}}{\iff} y-x\in V_+^\circ$ respectively.
    The generalised robustness of resource of a state $\omega\in\Omega$ and a measurement set  $\mathbb{E}_{A|X}=\{e_{a|x}\}$ ($x\in\{1,...,\kappa\},\,a\in\{1,...,l\}$) are given by:
    \begin{subequations}
        \begin{align}
            {\rm R_F^{GP}}\left(\omega\right)=
            \max_{z} \ \ 
            &\ang{z,\omega}-1, \label{eq:RoRs1_GP}\\
            {\rm s. t.} \quad
            & z\geq_{V_+^\circ} 0,\label{eq:RoRs2_GP}\\
            &\ang{z,\sigma}\leq 1\ \ (\forall \sigma \in {\rm F}),
            \label{eq:RoRs3_GP}
        \end{align}
    \end{subequations}

    and
    \begin{subequations}
        \begin{align}
            {\rm R_\mathbb{F}^{GP}}
            \left(
            \mathbb{E}_{A|X}
            \right)
            =
            \max_{\{z_{a,x}\}} 
            \ \ 
            &
            \sum_{x=1}^{\kappa}
            \sum_{a=1}^{l}
            \ang{e_{a|x},z_{a,x}}
            -1, 
            \label{eq:RoRm1_GP}\\
            {\rm s. t.} 
            \quad
            & z_{a,x}\geq_{V_+} 0\ \ 
            (\forall a, x), 
            \label{eq:RoRm2_GP}\\
            &
            \sum_{x=1}^{\kappa}
            \sum_{a=1}^{l}
            \ang{
                N_{a|x}, z_{a,x}
            }
            \leq
            1\ \ 
            (\forall \mathbb{N}=\{N_{a|x}\} \in \mathbb{F}).
            \label{eq:RoRm3_GP}
        \end{align}
    \end{subequations}
    These are the dual conic formulations of the generalised robustnesses for states and measurement sets.
\end{lemma}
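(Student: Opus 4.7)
The plan is to follow the quantum proof of Lemma~\ref{lem} in \cref{a:pre}, systematically replacing the quantum-specific ingredients (trace, Hilbert--Schmidt inner product, positive semi-definite cone $\LL_{S}^+(\hi)$) with their GPT analogues (the unit effect $u$, the Euclidean inner product $\langle\cdot,\cdot\rangle$, and the cones $V_+$, $V_+^\circ$). Because the quantum argument is a Lagrangian duality computation that does not use Hilbert-space specifics beyond these roles, the same steps carry over once the right replacements are made, and the orderings $\leq_{V_+}$ and $\leq_{V_+^\circ}$ in the lemma statement make this dictionary unambiguous.

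For the state case, I would first absorb the primal variables of \eqref{eq:RoRs_GPT} into a single object $O=(1+r)\omega^F$. Since $\omega,\omega^F,\omega^G\in\Omega\subset V_+$, the equation $\omega+r\omega^G=O$ becomes $O-\omega\in V_+$, and applying $\langle u,\cdot\rangle$ gives $\langle u,O\rangle=1+r$. The primal thus reads
\begin{align*}
1+{\rm R_F^{GP}}(\omega) = \min_{O}\ &\langle u, O\rangle \\
{\rm s.t.}\ &O \geq_{V_+} \omega,\ O \in \mathit{cone}({\rm F}).
\end{align*}
Introducing a dual variable $z\in V_+^\circ$ for the ordering constraint and $w\in\mathit{cone}({\rm F})^\circ$ for the cone-membership constraint, the Lagrangian rearranges as $L(O,z,w)=\langle z,\omega\rangle+\langle u-z-w,\,O\rangle$, whose boundedness forces $w=u-z$. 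The requirement $u-z\in\mathit{cone}({\rm F})^\circ$, combined with $\langle u,\sigma\rangle=1$ for $\sigma\in{\rm F}$, is equivalent to $\langle z,\sigma\rangle\leq 1\ (\forall\sigma\in{\rm F})$, recovering \eqref{eq:RoRs1_GP}--\eqref{eq:RoRs3_GP}.

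For the measurement-set case, I would parallel the quantum proof by working on the direct-sum space $\bigoplus_{x,a}V^*$ and writing $O_{a|x}=(1+r)e^F_{a|x}$. Fixing an arbitrary state $\omega_0\in\Omega$ and using $\sum_a e^F_{a|x}=u$ yields the primal
\begin{align*}
1+{\rm R_\mathbb{F}^{GP}}(\mathbb{E}_{A|X}) = \min_{O_{A|X}}\ &\tfrac{1}{\kappa}\sum_{x,a}\langle O_{a|x},\omega_0\rangle \\
{\rm s.t.}\ &O_{a|x} \geq_{V_+^\circ} e_{a|x}\ \forall a,x,\\
&O_{A|X} \in \mathit{cone}(\mathbb{F}).
\end{align*}
An identical Lagrangian argument with dual variables $z_{a,x}\in V_+$ and $w_{A|X}\in\mathit{cone}(\mathbb{F})^\circ$ forces $w_{a|x}=\tfrac{1}{\kappa}\omega_0-z_{a,x}$; pairing $w_{A|X}$ with an arbitrary $\mathbb{N}_{A|X}=\{N_{a|x}\}\in\mathbb{F}$ and using $\sum_a N_{a|x}=u$ then produces the dual constraint $\sum_{x,a}\langle N_{a|x}, z_{a,x}\rangle\leq 1$, reproducing \eqref{eq:RoRm1_GP}--\eqref{eq:RoRm3_GP}.

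The remaining step, and the principal obstacle, is upgrading weak duality to strong duality by verifying Slater's condition in the GPT setting. In the state program, I would exhibit $z=\tfrac{1}{2}u$ as strictly feasible, giving $\langle z,\sigma\rangle=\tfrac{1}{2}<1$ for every $\sigma\in{\rm F}$; in the measurement-set program, $z_{a,x}=\omega_0/(2\kappa)$ yields $\sum_{x,a}\langle N_{a|x},z_{a,x}\rangle=\tfrac{1}{2}$. The subtle point is that the GPT cones $V_+,V_+^\circ$ need not possess the benign self-dual interior structure of $\LL_S^+(\hi)$, so one must invoke the GPT axioms ($\mathit{lin}(\Omega)=V$, $0\notin\mathit{aff}(\Omega)$, and the no-restriction hypothesis) to ensure that $u$ lies in the interior of $V_+^\circ$ and that $\omega_0$ can be chosen in the interior of $V_+$, thereby certifying these candidates as genuine strict-feasibility witnesses and closing the duality gap.
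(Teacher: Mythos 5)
Your proposal is correct and follows essentially the same route as the paper: both proofs transcribe the Lagrangian duality argument of Lemma~\ref{lem} via the dictionary $\LL_S(\hi)\to V$, $\ang{\cdot,\cdot}_{\mathrm{HS}}\to\ang{\cdot,\cdot}$, $\LL_S^+(\hi)\to V_+$ or $V_+^\circ$, and then close the duality gap with a Slater point whose existence rests on the cones being generating. Your explicit witnesses $z=\tfrac{1}{2}u$ and $z_{a,x}=\omega_0/(2\kappa)$ (justified by $u\in\mathrm{int}(V_+^\circ)$ and $\omega_0\in\mathrm{int}(V_+)$) are just a more concrete instance of the paper's generic ``scale an interior point by small $\lambda$'' argument, so there is no substantive difference.
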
 
\begin{proof}
The proof proceeds in a similar way as Lemma \ref{lem}.
In fact, instead of the vector space $\LL_S(\hi)$, the Hilbert-Schmidt inner product $\ang{\cdot,\cdot}_{\mathrm{HS}}$, and the cone $\LL_S^+(\hi)$, here we use $V$, $\ang{\cdot,\cdot}$, and $V_+^\circ$ respectively (remember that $(V_+^\circ)^\circ=V_+$ holds).
It is easy to see that the same argument in the proof of Lemma \ref{lem} can be developed also in this case.
The strong duality is verified by the fact that the positive cone $V_+$ is generating ($\mathit{lin}(V_+)=V$, which is satisfied in the present setting) if and only if there is an interior point $z_0\in \mathrm{int}(V_+^{})$.
Multiplying $z_0$ by small $\lambda>0$, we can construct a strictly feasible solution $\{z_{a,x}\}$ with $z_{a,x}=\lambda z_0\ (\forall a,x)$ for \eqref{eq:RoRs2_GP} and \eqref{eq:RoRs3_GP}, and thus the strong duality holds.
\end{proof}
	
\begin{proof}[Proof of Result 3]
    (Upper bound)
    For any GPScD-PI $(p_Y,\Xi_{B|Y},\omega,\mathbb{E}_{A|X})$, we have
    \begin{align}
        &P^{\rm GPD}_{\rm succ}
        (p_Y,\Xi_{B|Y},\omega,\mathbb{E}_{A|X})
        \nonumber
        \\
        &
        =
        \max_{
            \mathbb{N}_{B|Y}
            \preceq \,
            \mathbb{E}_{A|X}
        }
        \sum_{b,y}
        \ang{
            N_{b|y},\,
            \xi_{b|y}
            (\omega)
        }\,
        p(y)
        \nonumber
        \\
        &\leq 
        \Big[
        1+{\rm R_F^{GP}}(\omega)
        \Big] 
        \max_{
            \mathbb{N}_{B|Y}
            \preceq \,
            \mathbb{E}_{A|X}
        }
        \sum_{b,y} 
        \ang{
            N_{b|y},\, 
            \xi_{b|y}
            (\sigma^*)
        }\,
        p(y)
        \nonumber\\
        &\leq \Big[1+{\rm R_F^{GP}}(\omega)\Big]
        \max_{\sigma \in {\rm F}}
        \max_{
            \mathbb{N}_{B|Y}
            \preceq 
            \mathbb{E}_{A|X}
        }
        \sum_{b,y} 
        \ang{
            N_{b|y},\,
            \xi_{b|y}(\sigma)
        }
        p(y)
        \nonumber\\
        & = \Big[1+{\rm R_F^{GP}}(\omega)\Big]
        \max_{\sigma \in {\rm F}}
        \max_{
            \mathcal{S}
        }
        \sum_{b,y} 
        \ang
        {
            \left(
            \sum_{a,x,\mu}
            p(b|a,y,\mu)\,
            p(x|y,\mu)\,
            p(\mu)\,
            e_{a|x}
            \right),\,
            \xi_{b|y}
            (\sigma)
        }
        p(y)
        \nonumber\\
        &\leq 
        \Big[1+{\rm R_F^{GP}}(\omega)\Big]
        \Big[1+{\rm R_\mathbb{F}^{GP}}(\mathbb{E}_{A|X})
        \Big] 
        \max_{\sigma \in {\rm F}}
        \max_{
            \mathcal{S}
        }
        \sum_{b,y} 
        \ang
        {
            \left(
            \sum_{a,x}
            p(b|a,y,\mu)\,
            p(x|y,\mu)\,
            p(\mu)\,
            \tilde N^*_{a|x}
            \right),\,
            \xi_{b|y}
            (\sigma)
        }
        p(y)
        \nonumber\\
        & =
        \Big[1+{\rm R_F^{GP}}(\omega)\Big]
        \Big[1+{\rm R_\mathbb{F}^{GP}}(\mathbb{E}_{A|X})
        \Big]
        \max_{\sigma \in {\rm F}}
        \max_{
            \mathbb{\dbtilde{N}}_{B|Y}
            \preceq \,
            \mathbb{\tilde N}^*_{A|X}
        }
        \sum_{b,y} 
        \ang
        {
            \dbtilde{N}_{b|y},\,
            \xi_{b|y}
            (\sigma)
        }
        p(y)
        \nonumber\\
        &\leq 
        \Big[1+{\rm R_F^{GP}}(\omega)\Big]
        \Big[1+{\rm R_\mathbb{F}^{GP}}(\mathbb{E}_{A|X})
        \Big]
        \max_{\sigma \in {\rm F}}
        \max_{
            \mathbb{\tilde N}_{A|X} \in \mathbb{F}
        }
        \max_{
            \mathbb{\dbtilde{N}}_{B|Y}
            \preceq \,
            \mathbb{\tilde N}_{A|X}
        }
        \sum_{b,y} 
        \ang
        {
            \dbtilde{N}_{b|y},\,
            \xi_{b|y}
            (\sigma)
        }
        p(y)
        \nonumber\\
        &=
        \Big[
        1+{\rm R_F^{GP}}(\omega)
        \Big]
        \Big[
        1+{\rm R_\mathbb{F}^{GP}}
        (\mathbb{E}_{A|X})
        \Big]
        \max_{\sigma \in {\rm F}}
        \max_{
            \mathbb{N}_{A|X}
            \in 
            \mathbb{F}
        }
        P^{\rm GPD}_{\rm succ}(
        p_Y,
        \Xi_{B|Y},
        \sigma,
        \mathbb{N}_{A|X}
        ).
        \label{eq:upperbound_GP}
    \end{align}
    In the first inequality we use $\xi_{b|y}(\omega)\leq_{V_+} [1+{\rm R_F^{GP}}(\omega)] \xi_{b|y}(\sigma^*)$, $\forall x$, which follows from $\omega \leq_{V_+} [1+{\rm R_F^{GP}}(\omega)]\sigma^*$ ($\sigma^*$ the free state from the definition of the generalised robustness) and $\xi_{b|y}(\cdot)$ being positive $\forall b,y$. 
    In the second inequality we maximise over all free states. 
    In the third inequality, we use $e_{a|x} \leq_{V_+^\circ} [1+{\rm R_\mathbb{F}^{GP}}( \mathbb{E}_{A|X} )] \tilde N_{a|x}^*$, $\forall a,x$, $\mathbb{N}_{A|X}^*$ the free measurement set from the definition of the generalised robustness. 
    In the fourth inequality we maximise over all free measurement sets.
\end{proof}
	
\begin{proof}(Achievability)
    Let $(\omega,\mathbb{E}_{A|X})$ be a pair of a state and a measurement set of a GPT $(\Omega,\eff)$.
    As we have seen in Lemma \ref{lem_conic prog_GP}, there exist an element $z^\omega\in V_+^\circ$ satisfying the conditions \eqref{eq:RoRs1_GP}, \eqref{eq:RoRs2_GP}, \eqref{eq:RoRs3_GP} and a set of elements $\{z^{\mathbb{E}_{A|X}}_{a|x}\}\subset V_+$ ($x=1,...,\kappa, a=1,...,l$) satisfying \eqref{eq:RoRm1_GP}, \eqref{eq:RoRm2_GP}, and \eqref{eq:RoRm3_GP}. 
    Let $p_X$ be a PMF with $p(x)>0$ ($\forall x$). 
    We define a set of maps $\{\xi^{ ( \omega, \mathbb{E}_{A|X},p_X)}_{a|x}(\cdot) \}$ such that for any state $\eta\in \Omega$
    \begin{equation}\label{eq:explicit}
        \begin{aligned}
            &\xi^{
                (
                \omega,
                \mathbb{E}_{A|X},
                p_X
                )
            }_{a|x}
            (\eta)
            \coloneqq
            \alpha^{
                (
                \omega,
                \mathbb{E}_{A|X}
                )
            }
            \ang{z^\omega, \eta}
            z^{\mathbb{E}_{A|X}}_{a|x}
            p(x)^{-1}
            ,\\
            &\alpha\equiv\alpha^{
                (
                \omega,\mathbb{E}_{A|X}
                )
            }
            \coloneqq 
            \frac{1}{
                \norm{z^\omega}_{u}
                \ang
                {
                    u,z^{\mathbb{E}_{A|X}}
                }
            },
            \hspace{0.5cm}
            z^{\mathbb{E}_{A|X}}
            \coloneqq
            \sum_{x=1}^{\kappa}
            \sum_{a=1}^{l} z^{\mathbb{E}_{A|X}}_{a|x}
            p(x)^{-1}.
        \end{aligned}
    \end{equation}
    In the equations, $u\in \eff$ is the unit effect, and $\norm{\cdot}_u$ is the \emph{order unit norm} \cite{Alfsen1971} in $V$ defined as
    \begin{equation}\label{eq:ou norm}
        \begin{aligned}
            \norm{z}_u:&=\inf\{\lambda\ge0\mid -\lambda u\le_{V_+^\circ} z \le_{V_+^\circ} \lambda u\}\\
            &=\sup\{|\ang{z,\omega}|\mid \omega\in\Omega\}.
        \end{aligned}
    \end{equation}
    The order unit norm is clearly a natural generalisation of the operator norm for quantum formulation, where $u=\mathds{1}$ and $\Omega=\mathcal{D}(\hi)$. 
    We can check that these maps are linear and positive, i.e., $\xi^{ (\omega, \mathbb{E}_{A|X},p_X)}_{a|x}\colon V_+\to V_+$ ($\forall a,x$).
    Also, they satisfy $\forall \eta$, $\forall x$:
    \begin{align*}
        F_x(\eta) 
        \coloneqq 
        \ang{u,\,
            \sum_{a=1}^{l} 
            \xi^{
                (
                \omega,
                \mathbb{E}_{A|X}
                )
            }_{a|x}
            (\eta)
        }
        =
        \frac{
            \ang{z^\omega, \eta}
        }{
            \norm{z^\omega}_u
        }
        \frac{
            \ang{u,\,
                \left(
                \sum_{a=1}^{l} z^{\mathbb{E}_{A|X}}_{a|x}
                p(x)^{-1}
                \right)
        }}{
            \ang{u,
                z^{\mathbb{E}_{A|X}}
        }}
        \leq 1.
    \end{align*}
    because $\ang{z^\omega, \eta}\le\norm{z^\omega}_u$ and $\langle u,\,(\sum_{a=1}^{l} z^{\mathbb{E}_{A|X}}_{a|x}p(x)^{-1})\rangle\le \ang{u,z^{\mathbb{E}_{A|X}}}$ from the definitions \eqref{eq:ou norm} and \eqref{eq:explicit} respectively.    We now construct an instrument set that realizes the maximum in \eqref{eq:result3-1} as follows. 
    Given a pair $(\omega, \mathbb{E}_{A|X})$ ($x=1,...,\kappa, a=1,...,l$) and an integer $J \geq 1$, we define $\Xi^{(\omega, \mathbb{E}_{A|X}, p_X, J)}_{B|Y} = \{\xi^{ (\omega, \mathbb{E}_{A|X}, p_X, J)}_{b|y}\}$ ($y=1,...,\kappa, b=1,...,l+J$) by
    \begin{align}
        \xi^{
            (\omega,\mathbb{E}_{A|X}, p_X,J)
        }_{b|y}
        (\eta)
        \coloneqq
        \left\{
        \begin{aligned}
            &\alpha
            \ang{z^\omega, \eta} 
            z^{\mathbb{E}_{A|X}}_{b|y}
            p(y)^{-1}\quad
            &&(b=1,...,l)\\
            &\frac{1}{J}
            [1-F_y(\eta)]
            \chi\quad
            &&(b=l+1,...,l+J)
        \end{aligned}
        \right.
        \label{eq:scgameD_GP}
    \end{align} 
    with an arbitrary state $\chi\in\Omega$. 
    This is a well-defined instrument set: they are positive and add up to a channel because
    \begin{align*}
        \ang
        {u,\,
            \sum_{b=1}^{l+J} 
            \xi^{ 
                (\omega,
                \mathbb{E}_{A|X}, p_X,
                J) 
            }_{b|y}
            (\eta)
        }
        &=
        \ang
        {u,\,
            \sum_{b=1}^{l} 
            \xi^{ 
                (\omega,
                \mathbb{E}_{A|X}, p_X,
                J) 
            }_{b|y}
            (\eta)
        }
        +
        \ang
        {u,\,
            \sum_{b=l+1}^{l+J}
            \xi^{ 
                (\omega,
                \mathbb{E}_{A|X}, p_X,
                J) 
            }_{b|y}
            (\eta)
        }
        \nonumber \\
        &=
        \ang
        {u,\,
            \sum_{b=1}^{l} 
            \alpha
            \ang{z^\omega, \eta} 
            z^{\mathbb{E}_{A|X}}_{b|y}
            p(y)^{-1}
        }
        +
        \ang
        {u,\,
            \sum_{b=l+1}^{l+J} 
            \frac{1}{J}
            [1-F_y(\eta)]
            \chi
        }\\
        &=
        F_y(\eta)
        +
        [1-F_y(\eta)]
        \nonumber \\
        &=
        1\quad(\forall y).
    \end{align*}
    We note that for each $y$ the instrument $\Xi_y=\{\xi^{(\omega,\mathbb{E}_{A|X}, p_X,J)}_{b|y}\}_{b=1}^{l+J}$ expresses a \emph{measure-and-prepare channel} and such channel is ``completely positive" also in the framework of GPTs \cite{Plavala2023}.
    Let us analyse the GPScD-PI given by $\Xi^{(\omega,\mathbb{E}_{A|X}, p_X, J)}_{B|Y}$ and a PMF $p_Y$ which we specify as $p_Y=p_X$ ($B=\{1,\ldots,l+J\},\,Y=\{1,\ldots,\kappa\}$).
    For fully free cases, we have 
    \begin{align*}
        \max_{
            \substack{
                \sigma \in {\rm F}\\
                \mathbb{N}_{A|X} \in \mathbb{F}
            }
        }
        P^{\rm GPD}_{\rm succ}
        (
         p_Y
        ,
        \Xi^{(\omega, \mathbb{E}_{A|X}, p_X, J)}_{B|Y},
        \sigma
        ,
        \mathbb{N}_{A|X}
        )    
        =
        \max_{
            \substack{
                \sigma \in {\rm F}\\
                \mathbb{N}_{A|X} \in \mathbb{F}\\
                \mathbb{\tilde N}_{B|Y} \preceq \,
                \mathbb{N}_{A|X}
            }
        }
        \sum_{y=1}^{\kappa}
        \sum_{b=1}^{l+J}
        \ang
        {
            \tilde N_{b|y},\,
            \xi^{(\omega, \mathbb{E}_{A|X}, p_X, J)}_{b|y}
            (\sigma)
        }
        p(y).
    \end{align*} 
    Because of the compactness, we can assume that this maximisation is achieved by the fully free pair $(\sigma^*,\mathbb{N}^*_{B|Y})$. 
    We then have:
    \begin{align}
        P^{\rm GPD}_{\rm succ} (
        p_Y
        ,
        \Xi^{(\omega, \mathbb{E}_{A|X}, p_X, J)}_{B|Y},
        \sigma^*
        ,
        \mathbb{N}^*
        )
        &=
        \sum_{y=1}^{\kappa}
        \sum_{b=1}^{l+J}
        \ang
        {
            N^*_{b|y},\, \xi^{(\omega, \mathbb{N}, p_X, J)}_{b|y}
            (\sigma^*)
        }
        p(y)
        \\
        &=
        \alpha
        \ang{z^\omega, \sigma^*} 
        \sum_{y=1}^{\kappa}
        \sum_{b=1}^{l} 
        \ang{
            N^*_{b|y},\, 
            z^{\mathbb{E}_{A|B}}_{b|y}
        }
        +
        \frac{1}{J}
        \sum_{y=1}^{\kappa}
          \sum_{b=l+1}^{l+J} 
        \left[
        1-F_y(\sigma^*)
        \right]
        \ang
        {
            N^*_{b|y},\,
            \chi
        }p(y).\label{eq:eq10_GP}
    \end{align}
    To evaluate the first term, we introduce a set of $l$-outcome measurements $\mathbb{\tilde N}=\{\tilde{N}_y^*\}=\{\tilde{N}^*_{b'|y}\}$ constructed from $\mathbb{N}_{B|Y}^*$ as:
    \begin{align}
        \nonumber 
        \tilde N_{b'|y}^*& 
        \coloneqq 
        N^*_{b'|y}, \hspace{1cm}
        b'\in \{1,...,l-1\},\\
        \tilde N^*_{l|y}& 
        \coloneqq 
        N^*_{l|y}
        +
        \sum_{b=l+1}^{l+J}
        N^*_{b|y}
        . 
        \label{eq:Ntilde_GP}
    \end{align}
    We have
    \begin{align*}
        \sum_{y=1}^{\kappa}
        \sum_{b=1}^{l} 
        \ang
        {
            N^*_{b|y},\,
            z^{\mathbb{E}_{A|X}}_{b|y}
        }
        \leq 
        \sum_{y=1}^{\kappa}
        \sum_{b=1}^{l} 
        \ang
        {
            \tilde N^*_{b|y},\, 
            z^{\mathbb{E}_{A|X}}_{b|y}
        }
        \leq
        1.
    \end{align*}
    The first inequality is straightforward, and the second inequality follows from the fact that the construction \eqref{eq:Ntilde_GP} of $\mathbb{\tilde N}$ is a CPosP of $\mathbb{N}^*_{B|Y}$ and thus $\mathbb{\tilde N}\preceq \mathbb{N}^*_{B|Y}$, which enables us to use \eqref{eq:RoRm3_GP}. For the second term of \eqref{eq:eq10_GP}, we use $1-F_y(\eta)\leq 1$ ($\forall \eta$, $\forall y$).
    The relation \eqref{eq:eq10_GP} becomes:
    \begin{align*}
        P^{\rm GPD}_{\rm succ}(
        p_Y
        ,
        \Xi^{(\omega, \mathbb{E}_{A|X}, p_X, J)}_{B|Y},
        \sigma
        ,
        \mathbb{N}_{A|X}
        )
        &\leq 
        \alpha 
        + 
        \frac{1}{J}
        \sum_{y=1}^{\kappa}
        \sum_{b=l+1}^{l+J} 
        p(y)\ang
        {
            N^*_{b|y},\,
            \chi
        }.
    \end{align*}
    The r.h.s. can be upper bounded as:
    \begin{align*}
    	 \sum_{y=1}^{\kappa}
        \sum_{b=l+1}^{l+J} 
        p(y)
        \ang{
            N^*_{b|y},\,
            \chi
        }
        \leq  
         \sum_{y=1}^{\kappa}
        \sum_{b=1}^{l+J} 
        p(y)
        \ang{
            N^*_{b|y},\,
            \chi
        }
        &=
         \sum_{y=1}^{\kappa}
         p(y)
        =
        1\quad(\forall y),
    \end{align*}
    and thus
    \begin{align*}
        \max_{
            \substack{
                \sigma \in {\rm F}\\
                \mathbb{N}_{A|X} \in \mathbb{F}
            }
        }
        P^{\rm GPD}_{\rm succ}
        (
        p_Y
        ,
        \Xi^{(\omega, \mathbb{E}_{A|X}, p_X, J)}_{B|Y},
        \sigma
        ,
        \mathbb{N}_{A|X}
        )    
        \leq 
        \alpha 
        +
        \frac{1
        }{
            J
        }.
    \end{align*}
    With $J\to \infty$, we obtain
    \begin{align}
        \max_{
            \substack{
                \sigma \in {\rm F}\\
                \mathbb{N}_{A|X} \in \mathbb{F}
            }
        }P^{\rm GPD}_{\rm succ}
        \left(
        p_Y,
        \Xi^{
            (
            \omega
            ,
            \mathbb{E}_{A|X}
            , p_X,
            J\rightarrow \infty
            )
        },
        \sigma,
        \mathbb{N}_{A|X}
        \right)
        \leq 
        \alpha.
        \label{eq:ineqD2_GP}
    \end{align}
    We next investigate the fully resourceful $(\omega, \mathbb{E}_{A|X})$. 
    It holds from \eqref{eq:RoRs1_GP} and \eqref{eq:RoRm1_GP} that:
    \begin{align}
        P^{\rm GPD}_{\rm succ}
        (p_Y,
        \Xi^{(\omega
            ,
            \mathbb{E}_{A|X}, p_X,
            J
            )}_{B|Y},
        \omega
        ,
        \mathbb{E}_{A|X}
        )
        &= \nonumber
        \max_{
            \mathbb{N}_{B|Y}
            \preceq \,
            \mathbb{M}_{A|X}
        }
        \sum_{y=1}^{\kappa}
        \sum_{b=1}^{l+J} 
        \ang{
            N_{b|y},\,
            \xi^{
                (\omega,
                \mathbb{E}_{A|X},
                p_X,
                J)
            }_{b|y}
            (\omega)
        }\,
        p(y)
        \\ \nonumber
        &\geq
        \sum_{y=1}^{\kappa}
        \sum_{b=1}^{l} 
        \ang{
            e_{b|y},\,
            \xi^{ (\omega, \mathbb{E}_{A|X}, 
                p_X, 
                J)}_{b|y}
            (\omega)}\,
        p(y)
        \\
        & = 
        \alpha
        \ang{
            z^\omega, 
            \omega
        }
        \sum_{y=1}^{\kappa}
        \sum_{b=1}^{l}
        \ang{
            e_{b|y},\,
            z^{\mathbb{E}_{A|X}}_{b|y}
        }
        \nonumber
        \\
        &=
        \alpha 
        \Big[ 
        1+{\rm R_F^{GP}}
        (\omega) 
        \Big]
        \Big[ 
        1+{\rm R_\mathbb{F}^{GP}}
        (\mathbb{E}_{A|X}) 
        \Big].
        \label{eq:b4r0_GP}
    \end{align}
    With $J\to \infty$, the relations \eqref{eq:ineqD2_GP} and  \eqref{eq:b4r0_GP} imply
    \begin{equation}\label{eq:b4r2_GP}
        \begin{aligned}
            P^{\rm GPD}_{\rm succ}
            &(
            p_Y,
            \Xi^{(\omega
                ,
                \mathbb{E}_{A|X}, p_X,
                J\rightarrow \infty
                )}_{B|Y},
            \omega
            ,
            \mathbb{E}_{A|X}
            )\\
            &\qquad\qquad\geq
            \max_{
                \substack{
                    \sigma \in {\rm F}\\
                    \mathbb{N}_{A|X} \in \mathbb{F}
                }
            }P^{\rm GPD}_{\rm succ}
            (p_Y,
            \Xi^{
                (
                \omega
                ,
                \mathbb{E}_{A|X},
                p_X,
                J\rightarrow \infty
                )
            }_{B|Y},
            \sigma,
            \mathbb{N}_{A|X}
            )
            \Big[ 
            1+{\rm R_F^{GP}}
            (\omega) 
            \Big]
            \Big[ 
            1+{\rm R_\mathbb{F}^{GP}}
            (\mathbb{E}_{A|X}) 
            \Big].
        \end{aligned}
    \end{equation}
    We can conclude from \eqref{eq:upperbound_GP} and \eqref{eq:b4r2_GP}
    \begin{align*}
        \frac{
            P^{\rm GPD}_{\rm succ}(
            p_Y,
            \Xi^{
                (\omega,
                \mathbb{E}_{A|X}, p_X,
                J\rightarrow \infty
                )
            }_{B|Y}, 
            \omega
            ,
            \mathbb{E}_{A|X}
            )
        }{
            \displaystyle
            \max_{
                \substack{
                    \sigma \in {\rm F}\\
                    \mathbb{N}_{A|X} \in \mathbb{F}
                }
            }
            P^{\rm GPD}_{\rm succ}(p_Y, \Xi^{
                (
                \omega,
                \mathbb{E}_{A|X}, p_X,
                J\rightarrow \infty
                )
            }_{B|Y}, 
            \sigma, \mathbb{N}_{A|X})
        }
        = \hspace{-0.1cm}
        \Big[ 
        1+{\rm R_F^{GP}}(\omega) 
        \Big]
        \Big[ 
        1+{\rm R_\mathbb{F}^{GP}}(\mathbb{E}_{A|X}) 
        \Big],
    \end{align*}
    which completes the proof.
\end{proof}
	
\end{document}